\numberwithin{equation}{section}
\theoremstyle{definition}
\newtheorem{theorem}{Theorem}[section]
\newtheorem{corollary}[theorem]{Corollary}
\newtheorem{proposition}[theorem]{Proposition}
\newtheorem{definition}[theorem]{Definition}
\newtheorem{example}[theorem]{Example}
\newtheorem{remark}[theorem]{Remark}
\newtheorem{lemma}[theorem]{Lemma}
\renewcommand{\emph}[1]{{\textbf{#1}}}
\def\F{\mathbb{F}}
\def\rad{\textup{rad}}
\def\irk{\textup{irk}}
\def\spn{\textup{span}}
\def\wt{\textup{wt}}
\def\maxwt{\textup{maxwt}}
\def\mA{\mathcal{A}}
\setlist[enumerate]{itemsep=5pt, topsep=5pt}
\title{\textbf{Quantum Anticodes}}
\author[1]{ChunJun Cao}
\affil[1]{Virginia Tech, Blacksburg, VA, U.S.A.}
\author[1]{Giuseppe Cotardo}
\author[2]{Brad Lackey}
\affil[2]{Microsoft Quantum, Redmond, WA, U.S.A.}
\date{}
\begin{document}

\maketitle
	
\abstract{This work introduces a symplectic framework for quantum error correcting codes in which local structure is analyzed through an anticode perspective. In this setting, a code is treated as a symplectic space, and anticodes arise as maximal symplectic subspaces whose elements vanish on a prescribed set of components, providing a natural quantum analogue of their classical counterparts. This framework encompasses several families of quantum codes, including stabilizer and subsystem codes, provides a natural extension of generalized distances in quantum codes, and yields new invariants that capture local algebraic and combinatorial features. The notion of anticodes also naturally leads to operations such as puncturing and shortening for symplectic codes, which in turn provide algebraic interpretations of key phenomena in quantum error correction, such as the cleaning lemma and complementary recovery and yield new descriptions of weight enumerators.
}

\medskip
\setlength{\parindent}{0em}

\section*{Introduction}

Quantum computing promises computational advantages over classical methods, yet the realization of large scale architectures fundamentally depends on protecting quantum information from noise. Quantum error correcting codes play a central role in this setting and provide the fundamental framework through which quantum information can be stored and manipulated with high fidelity. While most experimentally implemented codes, such as the surface code~\cite{kitaev2003fault,fowler2012surface}, encode a single logical qubit, scalable quantum systems require codes capable of encoding multiple logical qubits. In such settings, the different logical qubits may exhibit distinct distances, as observed for holographic codes~\cite{pastawski2015holographic}. Standard quantum coding theory, however, is not designed to capture these finer structural features, since it condenses all logical qubits into a single minimum distance parameter. Insights from AdS/CFT further highlight this limitation, showing that finite rate holographic codes can display constant minimum distance while achieving nontrivial distance scaling for logical qubits situated deeper in the bulk~\cite{almheiri2015bulk,pastawski2015holographic,cao2021approximate,Steinberg_2025}.

These observations point toward the need for a framework that can describe how different parts of a code behave under local constraints. For example, a generalized notion of distance is needed to describe how distinct sets of logical qubits can have different distance scalings and operator supports. Stabilizer codes provide a natural context in which such questions arise. They exhibit structural properties that align with this refined viewpoint, most notably complementary recovery \cite{Harlow_2017,Pollack_2022}, where complementary subsystems can reconstruct commuting code subalgebras whose union is as large as possible.  This property plays a central role in erasure correction because partial erasure still permits partial recovery of the encoded quantum information. Stabilizer codes are well established as the quantum analogue of classical linear codes~\cite{calderbank1998quantum}, and their connection to symplectic geometry was already established~\cite{gottesman1996class, gottesman1997stabilizer}. Although stabilizer codes were originally framed in the language of projective representations, they are now typically expressed using classical codes of twice the length with specific self duality constraints~\cite[Chapter~10]{nielsen2010quantum}. However, fully understanding their structural properties requires a more fine-grained measure of how distances vary across logical qubits and how logical operators are distributed across subsystems. Finally, when a code is realized through a tensor network \cite{Ferris_2014,Farrelly_2021,Cao_2022}, the underlying geometry imposes locality constraints, making local code invariants essential for accurately predicting global code properties \cite{Cao:2022ybv,Cao:2023odw}. This highlights the limitations of standard metrics and the need for a refined geometric framework that capture local properties of a code.

In this work, we introduce a framework that takes symplectic spaces as the primitive structure and considers codes whose alphabet is a symplectic vector space rather than a field or ring. We refer to such objects as symplectic codes. We show that this formulation recovers stabilizer and subsystem codes as particular examples and provides a natural setting in which classical notions of invariants for error correction extend to the quantum regime. Inspired by the work in~\cite{ravagnani2016generalized,byrne2023tensor}, we introduce the notion of anticodes in the symplectic setting and use it to derive invariants that capture the local structure of a quantum code. In this framework, a quantum anticode is defined as a maximal symplectic subspace whose elements vanish on a prescribed subset of coordinates, providing a natural analogue of classical anticodes within the symplectic regime. As a consequence, this approach provides new tools for the analysis of quantum codes and yields algebraic and combinatorial interpretations of  the cleaning lemma and complementary recovery. 

The paper is organized as follows. In Section 1, we recall the basic notions concerning symplectic spaces and orthogonal decompositions. Section 2 introduces symplectic codes, shows how this framework encompasses known families such as stabilizer and subsystem codes, and establishes the notion of anticodes in the symplectic setting. We explain how anticodes naturally lead to operations shortening and puncturing  and how the duality of these operations generalizes the cleaning lemma in an algebraic form. This perspective naturally provides an coding theoretical interpretation of complementary recovery. In Section 3, we derive new invariants for symplectic codes, \textit{e.g.} generalized weights, that capture structural and combinatorial properties of the code and provide a short proof of the quantum Singleton bound. Finally, in Section 4, we define binomial moments and weight distributions for symplectic codes and show that they encode equivalent information. These invariants also recover the MacWilliams identities and provide new interpretations of the weight enumerator of a quantum code.

\section{Symplectic Spaces}

In this section, we recall some definitions and properties of symplectic spaces, following the general reference \cite{artin2016geometric}, and we derive several preliminary results that will be used in the subsequent sections. Throughout this section, we let $n$ be a non-negative integer and we let~$V$ be a $(2n)$-dimensional vector space over a field $\F$. We use the term \textit{subspace} or the symbol~$\leq$ to mean $\F$-linear subspace.

\begin{definition}
    Let $\omega : V \times V \to \F$ be a bilinear map. We say $\omega$ is \emph{nondegenerate} if~$\omega(u, v) = 0$ for all $v \in V$ implies that $u = 0$. We say $\omega$ is \emph{skew-symmetric} if $\omega(u, u) = 0$ for all $u \in V$ and we refer to the pair $(V, \omega)$ as a \emph{symplectic space}.
\end{definition}

Although this is not the usual definition of skew-symmetry, when 
$\operatorname{char}(\F)\neq 2$ it is equivalent to the identity 
$\omega(u,v) = -\omega(v,u)$ for all $u,v \in V$; this follows immediately by expanding $0 = \omega(u+v,\,u+v)$. However in characteristic $2$ the usual definition of skew-symmetry reduces to symmetry, that does not in general imply $\omega(u, u) = 0$, which motivates the definition we have adopted. Throughout the paper, we let $(V, \omega)$ be a symplectic space and, by abuse of notation, write~$V$ to denote the symplectic space $(V, \omega)$. 

\begin{definition}
  Let $e,f\in V$. We say that $(e,f)$ is a \emph{symplectic pair} if $\omega(e,f) = 1$.
\end{definition}

For any $W\leq V$, we denote by $\left.\omega\right|_W$ the \emph{restriction} of~$\omega$ on $W$. We recall that, while~$\omega$ is nondegenerate on $V$, the same is generally not true for $\left.\omega\right|_W$.

\begin{definition}
    Let $W$ be a subspace of $V$. We say that $W$ is \emph{symplectic} if $\left.\omega\right|_W$ is nondegenerate. We say that $W$ is \emph{isotropic} if $\left.\omega\right|_W=0$.
\end{definition}

In other words, $W$ is isotropic if every vector in $W$ is orthogonal to every other vector in $W$. Note that this definition is not exhaustive; a general subspace $W$ will be neither symplectic nor isotropic, but will have a symplectic part and an isotropic part, as we will shortly see. 

\begin{remark}
Observe that if $W$ is isotropic then each of its subspaces is also isotropic. The same does not hold for the symplectic property. However, the symplectic property is \emph{stable} in that if $W \leq V$ is symplectic as a subspace of $V$, then it remains symplectic when viewed as a subspace of $U$, for any $W\leq U \leq V$.
\end{remark}

For a subspace $W\leq V$, we denote by $W^\perp$ the orthogonal of $W$ in $V$ with respect to~$\omega$. We introduce the following definitions and results, following~\cite{artin2016geometric}.

\begin{definition}
    The \emph{radical} of $W\leq V$ is $\rad(W)=W\cap W^\perp$.
\end{definition}

It is clear that the radical is an isotropic subspace of $W$. It is worth noting that, in the context of coding theory, the radical is often referred to as the \textit{hull}. This concept was first introduced in connection with coding theory and combinatorial designs, where it played a key role in determining when a code gives rise to a design and in classifying its structural properties (see~\cite{assmus1990affine,assmus1994designs}). More recently, in~\cite{anderson2024relative}, the hull was employed to investigate the properties and establish the existence of new optimal entanglement-assisted quantum error-correcting codes.

\begin{theorem}{{\cite[Theorems~1.3 and~3.3]{artin2016geometric}}}
    Let $W$ be a subspace of $V$. Then~$W$ admits an orthogonal decomposition (or splitting) $W = \rad(W) \oplus K$, for some~$K \leq W$.
\end{theorem}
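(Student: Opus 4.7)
The plan is to show that any vector-space complement of $\rad(W)$ inside $W$ already gives the desired decomposition, with the orthogonality coming essentially for free from the definition of the radical. First I would use standard linear algebra to pick a subspace $K\leq W$ with $W = \rad(W) \oplus K$ as $\F$-vector spaces; concretely, one extends a basis of $\rad(W)$ to a basis of $W$ and takes $K$ to be the span of the new basis vectors. This ensures the sum is internal and direct, leaving only the orthogonality and (implicitly) nondegeneracy on $K$ to check.

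For orthogonality between the summands, I would observe that by the very definition $\rad(W) = W \cap W^\perp$, every element $r\in\rad(W)$ satisfies $\omega(r,w) = 0$ for all $w \in W$, and in particular $\omega(r,k) = 0$ for every $k \in K \leq W$. Hence the decomposition $W = \rad(W) \oplus K$ is orthogonal with respect to $\omega$ without any further choice. If one additionally wants $K$ to be symplectic (which makes the decomposition meaningful and allows iteration), I would verify this by contradiction: suppose $k \in K$ satisfies $\omega(k,k') = 0$ for every $k' \in K$. Then for any $w \in W$, writing $w = r + k'$ with $r \in \rad(W)$ and $k' \in K$, one gets $\omega(k,w) = \omega(k,r) + \omega(k,k') = 0$, the first term vanishing because $r \in W^\perp$ and the second by assumption. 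Thus $k \in W \cap W^\perp = \rad(W)$, and since $k \in K$ as well, the direct-sum condition forces $k = 0$, so $\omega|_K$ is nondegenerate.

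I do not foresee a genuine obstacle here: the argument is essentially bookkeeping, and the crucial observation is that the radical is automatically orthogonal to any linear complement because it already sits inside $W^\perp$. The only minor subtlety, if one wishes to extract the stronger statement that $K$ is symplectic, is to transfer nondegeneracy from the global $\omega$ to its restriction on $K$, which is exactly what the direct-sum decomposition enables via the computation above.
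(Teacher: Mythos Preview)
Your argument is correct, and it is essentially the standard proof: take any linear complement $K$ of $\rad(W)$ in $W$, observe that orthogonality is automatic since $\rad(W)\subseteq W^\perp$, and deduce nondegeneracy of $\omega|_K$ from the direct-sum condition. Note, however, that the paper does not supply its own proof of this statement at all---it simply cites \cite[Theorems~1.3 and~3.3]{artin2016geometric}---so there is nothing to compare against beyond the fact that your argument is the standard one underlying that citation.
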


Since $\rad(W) \oplus K$ is an \textbf{orthogonal splitting} of $W$,  the invariants introduced below are well-defined. We denote by $\dim_\F(W)$ the dimension of $W$ as a $\F$-linear space.

\begin{definition}\label{def:dim-irk}
    Let $W\leq V$ and let $W=\rad(W)\oplus K$ be an orthogonal splitting of~$W$. We define the \emph{dimension} and the \emph{isorank} of $W$ to be respectively 
    \begin{equation*}
        \dim(W)=\tfrac{1}{2}\dim_\F(K),\ \textup{ and }\ \irk(W)=\tfrac{1}{2}\dim_\F(K) + \dim_\F(\rad(W)).
    \end{equation*}
\end{definition}

We acknowledge that defining the dimension of $W$ as something other than its vector space dimension may cause some confusion. Nonetheless, this definition of dimension coincides with the dimension of $W$ when viewed as a quantum~code: the number of encoded logical qubits. Throughout the paper, we will consistently use $\dim_\F(\cdot)$ to denote the dimension as a~$\F$-linear space, and reserve $\dim(\cdot)$ for the dimension as defined above. In particular, $V$ satisfies $\dim_\F(V)= 2n$ and $\dim(V) = \irk(V)=n$. We introduce the following two examples, which will be used throughout the paper to illustrate and clarify some of the main concepts.

\begin{example}\label{example:repetition_code1}
    Let $W = \mathrm{span}_{\mathbb{F}}\{u,v,w\} \leq V=\mathbb{F}^4$, with $\omega(u,v) = 1$ and 
$\omega(u,w) = \omega(v,w) = 0$. We have $\rad(W) = \mathrm{span}_{\mathbb{F}}\{w\}$, and a simple
orthogonal decomposition is
\begin{equation*}
    W = K \oplus \rad(W), \qquad K = \mathrm{span}_{\mathbb{F}}\{u,v\}.
\end{equation*}
Hence $\irk(K) = \frac{1}{2}\dim_{\mathbb{F}}(K) = 1$, and therefore
\begin{equation*}
    \irk(W) = \frac{1}{2}\dim_{\mathbb{F}}(K) + \dim_{\mathbb{F}}(\rad(W)) = 2.
\end{equation*}
Clearly, $\mathrm{span}_{\mathbb{F}}\{v,w\}$ is an example of an isotropic subspace of $W$ which, 
having vector space dimension $2$, is maximal.
\end{example}

\begin{lemma}\label{lem:splittingWperp}
    Let $W \leq V$, and let $W = \rad(W) \oplus K$ be an orthogonal splitting of $W$. Then $W^\perp$ orthogonally splits as $W^\perp = \rad(W) \oplus K'$, where $K'$ is symplectic. In particular, $\rad(W)^\perp = W + W^\perp = \rad(W) \oplus K \oplus K'$.
\end{lemma}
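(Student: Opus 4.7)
The plan is to first establish the key identity $\rad(W^\perp) = \rad(W)$, then apply Theorem~1.6 to $W^\perp$ to obtain the desired splitting, and finally handle the ``in particular'' claim by a dimension count together with a direct-sum verification.

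First I would invoke the nondegeneracy of $\omega$ on the finite-dimensional space $V$ to conclude $(W^\perp)^\perp = W$. This immediately yields
\[
\rad(W^\perp) \;=\; W^\perp \cap (W^\perp)^\perp \;=\; W^\perp \cap W \;=\; \rad(W).
\]
Applying Theorem~1.6 to $W^\perp$ then produces a decomposition $W^\perp = \rad(W^\perp) \oplus K' = \rad(W) \oplus K'$. To see that $K'$ is symplectic, suppose $k' \in K'$ satisfies $\omega(k', K') = 0$; combined with $\omega(k', \rad(W)) = 0$ coming from the orthogonality of the splitting, this makes $k'$ orthogonal to all of $W^\perp$, so $k' \in (W^\perp)^\perp \cap W^\perp = \rad(W^\perp) = \rad(W)$, and therefore $k' \in K' \cap \rad(W) = 0$. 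The same argument applied to the given splitting of $W$ shows $K$ is symplectic as well, which is used implicitly in Definition~\ref{def:dim-irk}.

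For the ``in particular'' claim, I would first argue the containment $W + W^\perp \subseteq \rad(W)^\perp$: indeed $W^\perp \subseteq \rad(W)^\perp$ is immediate, while $W \subseteq \rad(W)^\perp$ follows from $\rad(W) \subseteq W^\perp$ together with the symmetry of orthogonality (which holds in any characteristic under the convention of Definition~1.2). Equality is then forced by a dimension count, since
\[
\dim_\F(\rad(W)^\perp) \;=\; 2n - \dim_\F(\rad(W)) \;=\; \dim_\F(W) + \dim_\F(W^\perp) - \dim_\F(\rad(W)) \;=\; \dim_\F(W + W^\perp).
\]
To see the sum $\rad(W) + K + K'$ is direct, I would note that any $v \in K' \cap (\rad(W) \oplus K) = K' \cap W$ lies in $W \cap W^\perp = \rad(W)$, and hence in $K' \cap \rad(W) = 0$; combined with the directness of the two orthogonal splittings, this gives $W + W^\perp = \rad(W) \oplus K \oplus K'$.

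There is no serious obstacle: the argument is a routine sequence of standard symplectic linear algebra manipulations. The one point worth verifying carefully is that $K'$ is genuinely symplectic and not merely a vector-space complement of $\rad(W)$ inside $W^\perp$, but this is forced by the orthogonality of the splitting together with the identity $\rad(W^\perp) = \rad(W)$.
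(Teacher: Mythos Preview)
Your proof is correct and follows exactly the route the paper takes: the paper's entire argument is the single line $\rad(W^\perp) = W^\perp \cap (W^\perp)^\perp = \rad(W)$ followed by ``and the statement easily follows,'' and you have simply unpacked that ``easily follows'' into the application of Theorem~1.6, the nondegeneracy check on $K'$, and the dimension count for the final equality.
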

\begin{proof}
   We have $\rad(W^\perp) = W^\perp \cap (W^\perp)^\perp = \rad(W)$ and the statement easily follows.
\end{proof}

Note that a maximal isotropic subspace $W$ of $V$ can have at most half the vector space
dimension of $V$, that is $\irk(W) \le \tfrac{1}{2}\dim_{\mathbb{F}}(V) = \dim(V)$. Therefore,
by construction, we have $\dim(W) \le \irk(W) \le \dim(V)$. Any symplectic subspace $W$ of $V$
achieves the lower bound $\dim(W) = \irk(W)$. Saturating the upper bound defines a class of
subspaces critical to our study.

\begin{definition}\label{def:stabsub}
    We say that $W\leq V$ is a \textbf{stabilizer} subspace if $\irk(W)=\dim(V)$.
\end{definition}

One can observe that $W$ is a stabilizer subspace if it contains a maximal isotropic subspace
of~$V$. As one would expect, quantum stabilizer codes correspond to stabilizer subspaces, where the
stabilizer of the code is precisely the radical of the associated subspace.

\begin{example}\label{example:2x2_bacon_shor1}
    Let $W = \mathrm{span}_\F\{r,s,t,u\} < V = \mathbb{F}^8$, with $\omega(r,s) = \omega(r,u) = \omega(t,s) = \omega(t,u) = 1$ and $\omega(r,t) = \omega(s,u) = 0$. In coordinates defined by $(r,s,t,u)$, we have
    $$\left[\left.\omega\right|_W\right] = \begin{pmatrix} 0 & 1 & 0 & 1\\ -1 & 0 & -1 & 0\\ 0 & 1 & 0 & 1\\ -1 & 0 & -1 & 0\end{pmatrix}.$$
    It is not immediately clear how such a $W$ can be realized. We have $\rad(W) = \mathrm{span}_\F\{r+t, s+u\}$, and the orthogonal decomposition $W = K\oplus \rad(W)$ with $K=\mathrm{span}_\F\{r,s\}$. Hence $\dim(W) = 1$ and $\irk(W) = 3$. However, $W$ is not a stabilizer subspace as $\dim(V) = 4$.
\end{example}

\begin{theorem}\label{thm:stab}
    For any $W \leq V$, we have $\rad(W) \leq W^\perp$, with equality if and only if $W$ is a stabilizer subspace.
\end{theorem}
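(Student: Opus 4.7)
The inclusion $\rad(W)\le W^\perp$ is immediate from the definition $\rad(W) = W \cap W^\perp$, so the entire content of the theorem is in the equivalence for equality. My plan is to use the orthogonal splitting from Lemma~\ref{lem:splittingWperp} together with the standard dimension formula $\dim_\F(W) + \dim_\F(W^\perp) = 2n$, which holds because $\omega$ is nondegenerate on $V$.

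First, I would fix an orthogonal splitting $W = \rad(W) \oplus K$ and invoke Lemma~\ref{lem:splittingWperp} to write $W^\perp = \rad(W) \oplus K'$ with $K'$ symplectic. The equality $\rad(W) = W^\perp$ is then visibly equivalent to $K' = 0$, so the task reduces to showing that $K' = 0$ if and only if $\irk(W) = \dim(V)$.

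Next, I would translate these conditions into dimension counts. Since $K$ and $K'$ are symplectic, $\dim_\F(K) = 2\dim(W)$ and $\dim_\F(K') = 2\dim(W^\perp)$, so
\begin{equation*}
\dim_\F(W) = 2\dim(W) + \dim_\F(\rad(W)), \qquad \dim_\F(W^\perp) = 2\dim(W^\perp) + \dim_\F(\rad(W)).
\end{equation*}
Adding these and using $\dim_\F(W) + \dim_\F(W^\perp) = 2n = 2\dim(V)$ yields
\begin{equation*}
\irk(W) + \dim(W^\perp) = \dim(V).
\end{equation*}
Thus $K' = 0$, i.e. $\dim(W^\perp) = 0$, holds precisely when $\irk(W) = \dim(V)$, which by Definition~\ref{def:stabsub} is the stabilizer condition.

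I do not expect a serious obstacle here: the argument is essentially bookkeeping once Lemma~\ref{lem:splittingWperp} is in hand. The only mild subtlety is to be careful with the two different notions of dimension, $\dim_\F$ versus $\dim$, and to justify the identity $\dim_\F(W) + \dim_\F(W^\perp) = 2n$ from the nondegeneracy of $\omega$ on $V$ rather than from any symplectic property of $W$ itself.
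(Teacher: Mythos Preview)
Your proof is correct and follows essentially the same dimension-counting argument as the paper: both use the nondegeneracy identity $\dim_\F(W)+\dim_\F(W^\perp)=2n$ and compare $\dim_\F(W^\perp)$ to $\dim_\F(\rad(W))$. The only cosmetic difference is that you route the count through Lemma~\ref{lem:splittingWperp} and the identity $\irk(W)+\dim(W^\perp)=\dim(V)$ (which the paper later records as Proposition~\ref{thm:dim-irk-orthogonality}), whereas the paper computes $\dim_\F(W^\perp)-\dim_\F(\rad(W))=2(\dim(V)-\irk(W))$ directly.
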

\begin{proof}
   Clearly, we have $\rad(W) = W \cap W^\perp \leq W^\perp$. Let $n = \dim_\F(V)$, $k = \dim_\F(W)$, and $s = \irk(W)$. Then $\dim_\F(W^\perp) = 2n - k - s$ and $\dim_\F(\rad(W)) = s - k$. Therefore, $\dim_\F(W^\perp) - \dim_\F(\rad(W)) = 2n - k - s - (s - k) = 2(n - s)$. In particular, this quantity is zero if and only if $s = n$.
\end{proof}

As an immediate consequence, we have the following result.

\begin{corollary}\label{cor:stab}
   We have that $W \leq V$ is a stabilizer subspace if and only if $W^\perp$ is isotropic.
\end{corollary}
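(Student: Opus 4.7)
The plan is to deduce the corollary directly from Theorem~\ref{thm:stab} by translating the equality $\rad(W)=W^\perp$ into the isotropy of $W^\perp$. The key observation is the standard equivalence between a subspace being isotropic and being contained in its own orthogonal: $U\leq V$ is isotropic if and only if $U\leq U^\perp$. I would record this once at the start of the argument, since it will be used in both directions.

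For the forward direction, I would assume $W$ is a stabilizer subspace. Theorem~\ref{thm:stab} then gives $W^\perp=\rad(W)$. Since $\rad(W)=W\cap W^\perp$ is a subspace of $W^\perp$ on which $\omega$ vanishes (it was noted right after Definition~1.5 that the radical is isotropic), $W^\perp$ itself is isotropic. This direction is essentially a one-line rephrasing of the equality clause in Theorem~\ref{thm:stab}.

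For the converse, I would suppose $W^\perp$ is isotropic, so $W^\perp\leq (W^\perp)^\perp$. By nondegeneracy of $\omega$ on $V$ one has $(W^\perp)^\perp=W$, so $W^\perp\leq W$. Intersecting with $W^\perp$ yields $W^\perp\leq W\cap W^\perp=\rad(W)$. Combining this with the inclusion $\rad(W)\leq W^\perp$ from Theorem~\ref{thm:stab} gives $\rad(W)=W^\perp$, and invoking the equality clause of that theorem again concludes that $W$ is a stabilizer subspace.

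No step here poses a real obstacle; the only thing to be careful about is not conflating the condition $\omega|_U=0$ with $U\leq U^\perp$ without remark, and being explicit that $(W^\perp)^\perp=W$ uses nondegeneracy of $\omega$ on the ambient $V$ (which is part of the standing hypothesis that $(V,\omega)$ is a symplectic space). With those two points in place, the corollary is immediate from Theorem~\ref{thm:stab}.
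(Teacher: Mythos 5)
Your proof is correct and follows exactly the route the paper intends: the paper states this corollary as an immediate consequence of Theorem~\ref{thm:stab}, and your argument simply fills in the routine details (radical is isotropic in one direction; $W^\perp\leq(W^\perp)^\perp=W$ forces $W^\perp=\rad(W)$ in the other). Nothing to change.
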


\begin{example}\label{example:repetition_code2}
    Consider the same setting as Example~\ref{example:repetition_code1}, that is $V = \mathbb{F}^4$ and $W = \mathrm{span}_{\mathbb{F}}\{u,v,w\}$, with $\omega(u,v) = 1$ and $\omega(u,w) = \omega(v,w) = 0$. We have $\dim(V) = 2 = \irk(W)$, and hence $W$ is a stabilizer subspace. Indeed, $\mathrm{span}_{\mathbb{F}}\{v,w\}$ is a maximal isotropic subspace of $W$, which is also maximal isotropic in $V$. Moreover, since $\omega(u,v) \neq 0$, one sees that $W^\perp = \mathrm{span}_{\mathbb{F}}\{w\}$, which coincides with $\rad(W)$.
\end{example}

The following result establishes relations between the dimension and isorank of any subspace of $V$ and those of its dual.

\begin{proposition}\label{thm:dim-irk-orthogonality}
    For any $W\leq V$, we have
    \begin{equation*}
        \dim(W^\perp)= \dim(V) - \irk(W)\qquad\textup{ and }\qquad\irk(W^\perp)= \dim(V) - \dim(W).
    \end{equation*}
\end{proposition}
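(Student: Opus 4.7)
The plan is to reduce everything to two inputs: the standard dimension formula for orthogonal complements with respect to a nondegenerate form, namely $\dim_\F(W^\perp)=\dim_\F(V)-\dim_\F(W)$, and the structural Lemma~\ref{lem:splittingWperp}, which tells us that $W^\perp$ admits an orthogonal splitting $W^\perp=\rad(W)\oplus K'$ with $\rad(W^\perp)=\rad(W)$ and $K'$ symplectic. Once these are in hand, the proposition is pure bookkeeping in the definitions of $\dim(\cdot)$ and $\irk(\cdot)$ from Definition~\ref{def:dim-irk}.

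Concretely, I would fix an orthogonal splitting $W=\rad(W)\oplus K$ and let $r=\dim_\F(\rad(W))$ and $2d=\dim_\F(K)$, so that $\dim(W)=d$ and $\irk(W)=d+r$. Writing $2n=\dim_\F(V)$, so $\dim(V)=n$, the standard dimension formula gives
\[
\dim_\F(W^\perp)=2n-\dim_\F(W)=2n-2d-r.
\]
Combining this with the splitting from Lemma~\ref{lem:splittingWperp}, $\dim_\F(W^\perp)=\dim_\F(\rad(W))+\dim_\F(K')=r+\dim_\F(K')$, yields $\dim_\F(K')=2n-2d-2r$.

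Now I simply unwind the definitions applied to $W^\perp=\rad(W)\oplus K'$. For the first identity,
\[
\dim(W^\perp)=\tfrac12\dim_\F(K')=n-d-r=\dim(V)-\irk(W).
\]
For the second, using $\rad(W^\perp)=\rad(W)$ from Lemma~\ref{lem:splittingWperp},
\[
\irk(W^\perp)=\tfrac12\dim_\F(K')+\dim_\F(\rad(W^\perp))=(n-d-r)+r=n-d=\dim(V)-\dim(W).
\]

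There is essentially no hard step here; the only place one could slip is in forgetting that the decomposition of $W^\perp$ reuses the \emph{same} radical as $W$ (that is the real content of Lemma~\ref{lem:splittingWperp}), which is exactly what allows the $r$ term to appear in the $\irk$ computation but cancel in the $\dim$ computation. Once that structural fact is invoked, the proof reduces to counting $\F$-dimensions and dividing by two in the appropriate places.
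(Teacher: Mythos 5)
Your proof is correct and follows essentially the same route as the paper: both rest on Lemma~\ref{lem:splittingWperp} together with the nondegeneracy dimension formula, and then just unwind Definition~\ref{def:dim-irk}. The only cosmetic difference is that you verify the second identity directly, whereas the paper deduces it from the first by swapping $W$ and $W^\perp$.
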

\begin{proof}
By Lemma~\ref{lem:splittingWperp}, $\dim_\F(\rad(W)^\perp) = \dim_\F(\rad(W)) + \dim_\F(K) + \dim_\F(K')$, for some symplectic subspaces $K, K' \leq \rad(W)^\perp$. This implies that
\begin{equation}
\label{eq:dimFV}
\dim_\F(V) = 2\dim_\F(\rad(W)) + \dim_\F(K) + \dim_\F(K').
\end{equation}
and we recall that the following hold:
\begin{enumerate}[label=(\roman*)]
    \item $\dim_\F(V) = 2\dim(V)$,
    \item $\dim(W^\perp) = \frac{1}{2} \dim_\F(K')$,
    \item $\irk(W) = \frac{1}{2} \dim_\F(K) + \dim_\F(\rad(W))$.
\end{enumerate}
By substituting these expressions into equation~\eqref{eq:dimFV}, we obtain
\begin{equation*}
    \dim_\F(V) = 2\dim(W^\perp) + \dim_\F(K) + 2\dim_\F(\rad(W)) = 2(\dim(W^\perp) + \irk(W)),
\end{equation*}
which proves the first part of the statement. The second part easily follows by replacing~$W$ with $W^\perp$.
\end{proof}

Note that, for any $W_1\leq W_2\leq V$, we have
\begin{equation}
    \label{eqn:dim-irk-decomp}
    \dim_\F(W_2)- \dim_\F(W_1)= (\dim (W_2) - \dim(W_1))+ (\irk(W_2) - \irk(W_1)).
\end{equation}

It is not hard to check that $\dim (W_1) \leq \dim(W_2)$. In particular, let $W_1=\rad(W_1)\oplus K_1$ and $W_2=\rad(W_2)\oplus K_2$ be orthogonal splittings with $K_1\leq K_2$. Then
\begin{equation*}
    \dim(W_1)= \frac{1}{2}\dim_\F(K_1)\leq \frac{1}{2}\dim_\F(K_2)= \dim(W_2).
\end{equation*}
The following result shows that \textit{both} parenthetical terms on the right-hand side of~\eqref{eqn:dim-irk-decomp} are non-negative, and therefore the entire quantity is non-negative as well.

\begin{theorem}
    Let $W_1 \leq W_2 \leq V$ be symplectic subspaces. We have $\dim(W_1) \leq \dim(W_2)$ and~$\irk(W_1) \leq \irk(W_2)$.
\end{theorem}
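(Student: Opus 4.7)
The plan is to establish the dimension inequality directly via a compatible orthogonal splitting of $W_1$ inside $W_2$, and then deduce the isorank inequality by duality using Proposition~\ref{thm:dim-irk-orthogonality}. (The interesting case is for arbitrary $W_1\leq W_2\leq V$; for literally symplectic $W_i$ the statement reduces to $\dim_\F(W_1)\leq \dim_\F(W_2)$, so I read the theorem as covering the general setting motivated by~\eqref{eqn:dim-irk-decomp}.)

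First I would fix an orthogonal splitting $W_1 = \rad(W_1)\oplus K_1$ with $K_1$ symplectic. By the stability of the symplectic property, $K_1$ remains symplectic when viewed inside $W_2$, so I can write the orthogonal decomposition $W_2 = K_1 \oplus U$, where $U := K_1^\perp \cap W_2$. Applying the orthogonal splitting theorem to $U$ yields $U = \rad(U)\oplus K_1'$ for some symplectic $K_1'\leq U$, and hence $W_2 = K_1 \oplus \rad(U) \oplus K_1'$.

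The key step, which I expect to be the main obstacle, is the identification $\rad(U) = \rad(W_2)$. For one direction, any $v\in \rad(W_2)$ lies in $W_2\cap W_2^\perp\subseteq W_2\cap K_1^\perp = U$ and is orthogonal to $U\leq W_2$, hence belongs to $\rad(U)$. Conversely, any $v\in\rad(U)$ satisfies $v\in U\subseteq K_1^\perp$ and $v\in U^\perp$; since $W_2 = K_1 + U$, we get $v\in K_1^\perp \cap U^\perp = W_2^\perp$, so $v\in\rad(W_2)$. With this identification in hand, $K_2 := K_1\oplus K_1'$ is an orthogonal sum of two symplectic subspaces, hence symplectic, and $W_2 = \rad(W_2)\oplus K_2$ is an orthogonal splitting with $K_1\leq K_2$. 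This gives
\begin{equation*}
\dim(W_1) = \tfrac{1}{2}\dim_\F(K_1)\ \leq\ \tfrac{1}{2}\dim_\F(K_2) = \dim(W_2).
\end{equation*}

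For the isorank inequality I would invoke Proposition~\ref{thm:dim-irk-orthogonality}, which gives $\irk(W) = \dim(V) - \dim(W^\perp)$. Since $W_1\leq W_2$ implies $W_2^\perp\leq W_1^\perp$, the dimension monotonicity just proved, applied to the dual pair $W_2^\perp\leq W_1^\perp$, yields $\dim(W_2^\perp)\leq \dim(W_1^\perp)$, and hence $\irk(W_1)\leq \irk(W_2)$ is immediate. Thus the entire statement reduces to the dimension monotonicity, whose only nontrivial ingredient is the radical identification above.
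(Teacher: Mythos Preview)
Your argument is correct, and it follows a genuinely different route from the paper's proof. The paper proceeds by an explicit, constructive basis-tracking argument: starting from symplectic pairs spanning $K_1$, it identifies which vectors of $\rad(W_1)$ acquire symplectic partners in $W_2\setminus W_1$, which lie in $\rad(W_2)$, and which new pairs and isotropic vectors appear entirely outside $W_1$, arriving at explicit formulas $\dim(W_2)=\dim(W_1)+d_1+d$ and $\irk(W_2)=\irk(W_1)+d_2+d$ with $d,d_1,d_2\geq 0$. Your approach instead first establishes $\dim$-monotonicity by producing a compatible splitting $K_1\leq K_2$ (the radical identification $\rad(U)=\rad(W_2)$ is the clean way to justify what the paper merely asserts in the paragraph preceding the theorem), and then obtains $\irk$-monotonicity essentially for free via the duality $\irk(W)=\dim(V)-\dim(W^\perp)$ from Proposition~\ref{thm:dim-irk-orthogonality} applied to the reversed inclusion $W_2^\perp\leq W_1^\perp$. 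Your proof is shorter and more conceptual; the paper's proof is more hands-on and yields the exact increments, though these explicit counts are not used elsewhere.
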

\begin{proof}
    Let $W_1=\rad(W_1)\oplus K_1$ and $W_2=\rad(W_2)\oplus K_2$ be orthogonal splittings for some~$K_1,K_2\leq V$ with~$K_1\leq K_2$. Let $k=\dim(W_1)$ and $K_1=\spn\{e^{(1)},f^{(1)},\ldots,e^{(k)},f^{(k)}\}$ for some symplectic pairs $e^{(j)},f^{(j)}\in V$. Notice that, in general, $\rad(W_1)$ is not contained in $\rad(W_2)$. In fact, we have
    \begin{equation*}
    W_1 \cap \rad(W_2) = W_1 \cap W_2 \cap W_2^\perp \subseteq W_1 \cap W_2 \cap W_1^\perp = \rad(W_1),
    \end{equation*}
    and therefore we do not expect $\rad(W_1)$ to be fully contained in $\rad(W_2)$. More specifically, if $e \in \rad(W_1) \setminus (W_1 \cap \rad(W_2))$, then $e$ is orthogonal to all elements of $W_1$. Since $e \notin \rad(W_2)$, there exists $f \in W_2 \setminus W_1$ such that the pair $(e,f)$ is symplectic. Suppose we can construct $d_1$ such independent symplectic pairs $(e^{(k+j)}, f^{(k+j)})$ with the property that~$\spn\{ e^{(k+1)}, f^{(k+1)}, \ldots, e^{(k+d_1)}, f^{(k+d_1)} \}$ is orthogonal to $K_1$. We can choose
    \begin{equation*}
    \rad(W_1) = \spn\{ e^{(k+1)}, \ldots, e^{(k+d_1)}, e^{(k+d_1+1)}, \ldots, e^{(s)} \}
    \end{equation*}
    for some $e^{(k+d_1+1)}, \ldots, e^{(s)} \in \rad(W_2)$ linearly independent, with $s = \irk(W_2)$. We can then choose any symplectic space
    \begin{equation*}
    K_2 \supseteq K_1 \oplus \mathrm{span}\{e^{(k+1)}, f^{(k+1)}, \dots, e^{(k+d_1)}, f^{(k+d_1)}\},
    \end{equation*}
    which provides a third component contributing to the dimension of $W_2$: those symplectic pairs that lie outside of $W_1$ altogether. If there exist $d$ such independent pairs, then
    \begin{equation*}
    \dim(W_2)=\dim(W_1)+d_1+d.
    \end{equation*}
    Similarly, in general, not all elements in $\rad(W_2)$ lie in $W_1 \cap \rad(W_2)$. From the above argument, we have independent isotropic vectors $e^{(k+d_1+1)}, \dots, e^{(s)} \in \rad(W_2)$. We can extend this basis so that
    \begin{equation*}
    \rad(W_2) = \spn\{e^{(k+d_1+1)}, \dots, e^{(s)}, e^{(s+1)}, \dots, e^{(s+d_2)}\}.
    \end{equation*}
    Therefore, we have $\dim_\F S_2 = s - d_1 - k + d_2$, which may be larger or smaller than $\dim_\F (S_1)$. Nonetheless,
    \begin{equation*}
    \irk(W_2) = \dim (W_1) + d_1 + d + s - d_1 - k + d_2 = \irk(W_1) + d_2 + d.
    \end{equation*}
    This concludes the proof.
\end{proof}

Recall that the vector space dimension, when seen as a map, is monotone and modular. That is, it is increasing and satisfies
\begin{equation}\label{eqn:dimF-modularity}
    \dim_\F (W_1 + W_2) + \dim_\F (W_1 \cap W_2) = \dim_\F(W_1) + \dim_\F(W_2).
\end{equation}

We have shown that $\dim$ and $\irk$, as maps, are monotone on the subspaces of a symplectic space. The following result shows that they are supermodular and submodular respectively.

\begin{proposition}\label{prop:submod}
    Let $W_1,W_2\leq V$. Then we have 
    \begin{equation}\label{eqn:dim-supermodular}
        \dim(W_1 + W_2) + \dim(W_1\cap W_2)\geq \dim(W_1) + \dim(W_2)
    \end{equation}
    and 
    \begin{equation}\label{eqn:irk-submodular}
        \irk(W_1 + W_2) + \irk(W_1\cap W_2) \leq \irk(W_1) + \irk(W_2).
    \end{equation}
\end{proposition}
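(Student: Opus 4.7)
The plan is to reduce the two inequalities to a single statement and then prove that one. From Definition~\ref{def:dim-irk} we get the identity $\dim_\F(W) = \dim(W) + \irk(W)$ for every $W \leq V$, since any orthogonal splitting $W = \rad(W) \oplus K$ gives $\dim_\F(W) = \dim_\F(K) + \dim_\F(\rad(W))$, while $\dim(W) + \irk(W) = \tfrac{1}{2}\dim_\F(K) + \tfrac{1}{2}\dim_\F(K) + \dim_\F(\rad(W))$ equals the same quantity. Combining this with the modularity of $\dim_\F$ in~\eqref{eqn:dimF-modularity}, adding~\eqref{eqn:dim-supermodular} and~\eqref{eqn:irk-submodular} termwise produces an exact equality. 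Hence the excess in one inequality equals the deficit in the other, so it suffices to prove either one.

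I would target~\eqref{eqn:dim-supermodular}. Using $2\dim(W) = \dim_\F(W) - \dim_\F(\rad(W))$ and the modularity of $\dim_\F$ once more, \eqref{eqn:dim-supermodular} becomes equivalent to the submodular inequality
\begin{equation*}
\dim_\F(\rad(W_1+W_2)) + \dim_\F(\rad(W_1\cap W_2)) \leq \dim_\F(\rad(W_1)) + \dim_\F(\rad(W_2)).
\end{equation*}
Using $(W_1+W_2)^\perp = W_1^\perp \cap W_2^\perp$ and $(W_1\cap W_2)^\perp = W_1^\perp + W_2^\perp$, this recasts the whole proposition as a single submodularity statement about the radical map $W \mapsto W \cap W^\perp$ on sums and intersections, which is more amenable to a direct argument.

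To prove the radical submodularity, I would pick orthogonal splittings $W_i = \rad(W_i) \oplus K_i$ and emulate the basis-extension approach from the preceding monotonicity theorem: select symplectic pairs in $K_1$ and $K_2$, follow them into $W_1+W_2$ and $W_1\cap W_2$, and classify which pairs remain symplectic, which degenerate into the radical of the sum, and which contribute isotropic vectors to the intersection. A useful starting containment is $\rad(W_1) \cap \rad(W_2) \subseteq \rad(W_1\cap W_2)$, which controls one direction of the comparison for free.

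The hard part will be controlling radical elements of $W_1+W_2$ that are \emph{not} in $\rad(W_1)+\rad(W_2)$: a symplectic pair $(e,f) \in K_1$ can contribute a vector to $\rad(W_1+W_2)$ when the presence of vectors from $W_2$ in the sum forces $e$ to become orthogonal to all of $W_1+W_2$, even though $e \notin \rad(W_1)$. Analogously, $W_1\cap W_2$ may contain isotropic vectors built from symplectic pairs of $W_1$ or $W_2$ that lie in neither individual radical. Matching each such newly appearing radical element to a corresponding loss on the right-hand side of the submodular inequality, uniformly across all configurations and without assuming $W_1, W_2$ are themselves symplectic, is the technical core of the argument.
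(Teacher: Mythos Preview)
Your reductions in the first two paragraphs are correct and worth noting: the identity $\dim_\F(W)=\dim(W)+\irk(W)$ together with the modularity of $\dim_\F$ does show that \eqref{eqn:dim-supermodular} and \eqref{eqn:irk-submodular} are equivalent, and the further rewriting as submodularity of $W\mapsto\dim_\F(\rad(W))$ is also valid. (The paper reaches the same equivalence by a different route, applying \eqref{eqn:dim-supermodular} to $W_1^\perp,W_2^\perp$ and using Proposition~\ref{thm:dim-irk-orthogonality}.)

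The gap is that you never actually prove the reduced claim. Your last two paragraphs only describe a plan---pick splittings, track symplectic pairs, match newly created radical elements against losses---and explicitly flag the matching step as ``the hard part'' without carrying it out. Worse, the reformulation you chose makes the task harder, not easier: the space $\rad(W_1+W_2)=(W_1+W_2)\cap W_1^\perp\cap W_2^\perp$ has no simple containment relation to $\rad(W_1)$ or $\rad(W_2)$, so there is no obvious inclusion to exploit, and the matching argument you sketch would have to handle several interacting cases.

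By contrast, the paper's argument for \eqref{eqn:dim-supermodular} is a two-line count that avoids radicals altogether. Take an orthogonal splitting $W_1\cap W_2=K\oplus\rad(W_1\cap W_2)$ with $\dim(W_1\cap W_2)=k$; since $K$ is symplectic it extends to the symplectic part $K_i$ of each $W_i$, giving $k_i$ additional independent symplectic pairs in $W_i$. The $k+k_1+k_2$ pairs so obtained all lie in $W_1+W_2$, whence $\dim(W_1+W_2)\ge k+k_1+k_2=\dim(W_1)+\dim(W_2)-\dim(W_1\cap W_2)$. Your reduction then (or the paper's duality argument) yields \eqref{eqn:irk-submodular} immediately. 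The missing idea in your proposal is precisely this: count symplectic pairs going \emph{up} from $W_1\cap W_2$ to $W_1+W_2$, rather than trying to control radicals going down.
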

\begin{proof}
   Let $W_1\cap W_2=K\oplus S$ be an orthogonal splitting, and write 
   $$K=\spn\{e^{(1)},f^{(1)},\ldots,e^{(k)},f^{(k)}\}$$
   with $k=\dim(W_1\cap W_2)$. We have that $K$ is also symplectic in $W_1$, since $W_1\cap W_2\subseteq W_1$, and we can consider the orthogonal splitting~$W_1=K_1\oplus \rad(W_1)$, with $K\subseteq K_1$. This shows that $\dim(W_1)=k+k_1$ and there are $k_1$ additional symplectic pairs in $W_1\setminus(W_1\cap W_2)$. Analogously, we get $\dim(W_2)=k+k_2$. Combining these together, we get $k+k_1+k_2$ independent symplectic pairs in $W_1\cup W_2\subseteq W_1+W_2$. This implies $\dim(W_1+W_2)\geq k+k_1+k_2$ and proves~\eqref{eqn:dim-supermodular}. The second inequality~\eqref{eqn:irk-submodular} follows from Proposition~\ref{thm:dim-irk-orthogonality} and~\eqref{eqn:dim-supermodular}.
\end{proof}

We recall that in a symplectic space $V$, subspaces $W_1,W_2\leq V$, we can have $W_1 \perp W_2$ without $W_1\cap W_2=\{0\}$. The following result shows that when restricted to orthogonal subspaces, then $\dim$ and $\irk$ are in fact modular.

\begin{proposition}
     Let $W_1,W_2\leq V$ be orthogonal subspaces. Then we have
     \begin{equation}\label{eqn:dim-modular}
         \dim(W_1 + W_2) + \dim(W_1\cap W_2)= \dim(W_1) + \dim(W_2)
     \end{equation}
     and
     \begin{equation}\label{eqn:irk-modular}
         \irk(W_1 + W_2) + \irk(W_1\cap W_2)= \irk(W_1) + \irk(W_2)
     \end{equation}
\end{proposition}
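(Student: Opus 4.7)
The plan is to reduce~\eqref{eqn:irk-modular} to~\eqref{eqn:dim-modular} and then prove the latter directly by constructing an orthogonal splitting of $W_1 + W_2$ from orthogonal splittings of $W_1$ and $W_2$. The reduction uses the identity $\dim_\F(W) = \dim(W) + \irk(W)$, which is the special case of equation~\eqref{eqn:dim-irk-decomp} with $W_1 = \{0\}$, combined with the standard modularity of $\dim_\F$ recorded in~\eqref{eqn:dimF-modularity}. Applying these to $W_1$, $W_2$, $W_1+W_2$ and $W_1\cap W_2$ yields
\begin{equation*}
[\dim(W_1{+}W_2) + \dim(W_1{\cap}W_2)] + [\irk(W_1{+}W_2) + \irk(W_1{\cap}W_2)] = [\dim(W_1)+\dim(W_2)] + [\irk(W_1)+\irk(W_2)],
\end{equation*}
so once~\eqref{eqn:dim-modular} is established,~\eqref{eqn:irk-modular} is immediate by subtraction.

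For~\eqref{eqn:dim-modular} the key preliminary observation is that, under orthogonality, $W_1\cap W_2 \subseteq \rad(W_1)\cap\rad(W_2)$: any $v\in W_1\cap W_2$ is orthogonal to $W_1$ (since $v\in W_2$ and $W_1\perp W_2$) and to $W_2$, so $v\in W_i\cap W_i^\perp=\rad(W_i)$ for $i=1,2$. In particular $W_1\cap W_2$ is isotropic, whence $\dim(W_1\cap W_2)=0$. I would then choose orthogonal splittings $W_i = K_i\oplus\rad(W_i)$ with $K_i$ symplectic. It follows that $K_1\cap K_2 \subseteq K_1\cap(W_1\cap W_2)\subseteq K_1\cap\rad(W_1) = \{0\}$, and because $K_1\perp K_2$ (inherited from $W_1\perp W_2$), the sum $K_1\oplus K_2$ is an orthogonal direct sum of symplectic spaces, hence symplectic, of dimension $\dim(W_1)+\dim(W_2)$.

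To conclude, I would verify that $(K_1\oplus K_2) + (\rad(W_1)+\rad(W_2))$ is an orthogonal direct sum equal to $W_1+W_2$: if $k_1+k_2 = r_1+r_2$ with $k_i\in K_i$ and $r_i\in\rad(W_i)$, then $k_1-r_1 = r_2-k_2$ lies in $W_1\cap W_2 \subseteq \rad(W_1)$, forcing $k_1\in K_1\cap\rad(W_1)=\{0\}$, and symmetrically $k_2=0$. Since $\rad(W_1)+\rad(W_2)$ is isotropic and orthogonal to $K_1\oplus K_2$, this is precisely an orthogonal splitting of $W_1+W_2$ with symplectic part $K_1\oplus K_2$ and radical $\rad(W_1)+\rad(W_2)$. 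Reading off $\dim$ gives $\dim(W_1+W_2)=\dim(W_1)+\dim(W_2)$, which combined with $\dim(W_1\cap W_2)=0$ delivers~\eqref{eqn:dim-modular}, and~\eqref{eqn:irk-modular} follows from the reduction above. The main obstacle is the directness check for the decomposition, but it rests entirely on the initial observation that $W_1\cap W_2$ sits inside both radicals.
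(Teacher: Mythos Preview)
Your proof is correct and follows a different line from the paper's. The paper argues in the opposite order: it first identifies $\rad(W_1+W_2)$ with $\rad(W_1)+\rad(W_2)$ by showing that whenever $w_1+w_2$ lies in the radical of $W_1+W_2$ each $w_i$ must lie in $\rad(W_i)$, uses this (together with the implicit fact that $W_1\cap W_2$ is isotropic) to obtain the $\irk$-identity, and then invokes the duality relations of Proposition~\ref{thm:dim-irk-orthogonality} to pass from~\eqref{eqn:irk-modular} to~\eqref{eqn:dim-modular}. Your reduction in the reverse direction, via $\dim_\F=\dim+\irk$ and the ordinary modularity~\eqref{eqn:dimF-modularity}, is the more elementary device and sidesteps any appeal to duals. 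On the constructive side both arguments ultimately rest on the same orthogonal splitting $W_1+W_2=(K_1\oplus K_2)\oplus(\rad(W_1)+\rad(W_2))$; the paper focuses on the isotropic summand, whereas you read off the symplectic one. Your directness check is clean, and opening with the observation $W_1\cap W_2\subseteq\rad(W_1)\cap\rad(W_2)$ is exactly the right starting point.
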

\begin{proof}
   As in the proof of Proposition~\ref{prop:submod}, let $W_1\cap W_2=K\oplus S$ be an orthogonal splitting, but now consider $S=\spn\{e^{(1)},\ldots,e^{(s)}\}$ with $s=\irk(W_1\cap W_2)$. As each $e^{(j)}\in W_2\subseteq W_2^\perp$ we have $e^{(j)}$ is isotropic in $W_1$ as well. This implies $S\subseteq \rad(W_1)$ and we can extend the basis of $S$ so that
   \begin{equation*}
    S_1 = \mathrm{span}\{e^{(1)}, \dots, e^{(s)}, e^{(s+1)}, \dots, e^{(s+d_1)}\},
    \end{equation*}
    which shows that $\irk(\rad(S_1))=s+d_1$. Analogously, we have 
    \begin{equation*}
        S_2 = \rad(W_2) = \mathrm{span}\{e^{(1)}, \dots, e^{(s)}, \varepsilon^{(s+1)}, \dots, \varepsilon^{(s+d_2)}\}
    \end{equation*} 
    and hence $\irk(\rad(W_2)) = s + d_2$. We claim that
    \begin{equation}\label{eqn:isotropic-span}
        \{e^{(1)}, \dots, e^{(s)}, e^{(s+1)}, \dots, e^{(s+d_1)}, \varepsilon^{(s+1)}, \dots, \varepsilon^{(s+d_2)}\}
    \end{equation} 
    is a maximal independent set of isotropic vectors in $W_1 + W_2$, which implies $\irk(W_1 + W_2) = s + d_1 + d_2$. First, notice that the set in~\eqref{eqn:isotropic-span} is independent by construction. Then, each $e_j$ is isotropic in $W_1$ and, since $W_1\subseteq W_2^\perp$, it is orthogonal to $W_2$. It follows that $e_j$ is isotropic in $W_1+W_2$. Analogously, each $e_j'$ is isotropic in $W_1+W_2$, and therefore the set in~\eqref{eqn:isotropic-span} is isotropic. Finally, note that if $w_1\in W_1$ and $w_2\in W_2$ are such that $w_1+w_2$ is isotropic in $W_1+W_2$, then for every $v\in W_1$ we have
    \begin{equation*}
        0 = \omega(v,w_1+w_2)= \omega(v,w_1) + \omega(v,w_2)= \omega(v,w_1).
    \end{equation*}
    Hence $w_1$ is isotropic in $W_1$ and, analogously, $w_2$ is isotropic in $W_2$. This means that $w_1+w_2$ is in the span of the set in~\eqref{eqn:isotropic-span}, and hence such a set is maximal. The second inequality~\eqref{eqn:irk-modular} follows from Proposition~\ref{thm:dim-irk-orthogonality} and~\eqref{eqn:dim-modular}.
\end{proof}

\section{Error Correcting Codes}

In this section, we introduce the notion of symplectic codes and characterize their fundamental coding-theoretic parameters. We further define the analogue of an anticode in the symplectic setting and demonstrate how this framework aligns with the theories developed in~\cite{ravagnani2016generalized} and~\cite{byrne2023tensor}. Moreover, we show that anticodes naturally give rise to code operations such as \textit{puncturing} and \textit{shortening}, and that these operations provide a way to translate and generalize results and properties of quantum codes within an algebraic–combinatorial framework. In the reminder, we assume that $V$ is a symplectic space with $\dim(V)=1$, that is $V=\spn\{e,f\}$ with $\omega(e,f)=1$. 

\subsection{Codes in a Symplectic Space}

We define the symplectic space $V^n = V \otimes \cdots \otimes V$, and notice that $\dim(V^n)=n$. For an element $v\in V^n$, we write $v=(v_1,\ldots,v_n)$, with $v_j\in V$.

\begin{definition}
    The \emph{Hamming weight} $\wt(v)$ of $v\in V^n$ is the number of nonzero components of $v$, that is $\wt(v)=|\{j\in\{1,\ldots, n\}:v_j\neq 0\}|$.
\end{definition}

It is worth noting that other weight functions could also be of interest in this setting. For instance, when $\F=\F_2$, one might consider the \textit{complete weight} of a vector $v\in V^n$, defined as $\overline{\wt}(v) = (\wt_0(v), \wt_e(v), \wt_f(v), \wt_{e+f}(v))$, where $\wt_x(v)$ denotes the number of components of $v$ equal to $x$. However, in this work, we restrict our attention to the Hamming weight.

\begin{definition}\label{def:sympcodes}
    A (\textbf{symplectic}) \textbf{code} is a linear subspace $C \leq V^n$. The \textbf{length} of $C$ is~$n=\dim(V^n)$, its \textbf{dimension} is $k=\dim(C)$, and its \textbf{isorank} is $s=\irk(C)$. The \textbf{minimum distance} of $C$ is $d = \min\{\wt(v) : v \in C \setminus \rad(C)\}$. The \textbf{maximum weight} of~$C$ is $\maxwt(C)=\max\{\wt(v):v\in C\}$.
\end{definition}

In the following, we use traditional notation and denote by $C$ a code whose length is $n$, dimension is $k$, and minimum distance is $d$, and we write $[[n,k,d]]_q$ where $q=1/2\dim_{\F}(V)$. The next definition follows from Definition~\ref{def:stabsub}, Theorem~\ref{thm:stab}, and Corollary~\ref{cor:stab}. 

\begin{definition}\label{def:stabcode}
    A \textbf{stabilizer code} is a stabilizer subspace  $C \leq V^n$. In particular, its radical satisfies $\rad(C)=C^\perp$ and $C^\perp$ is isotropic. 
\end{definition}

Isotropic vectors play a distinct role in these codes, their contribution is captured by the isorank, and they do not affect the minimum distance of the code.

\begin{remark}\label{rem:stabilizer}
   One can easily verify that this definition recovers the notion of stabilizer codes introduced by Gottesman~\cite{gottesman1997stabilizer}. On the other hand, from a coding-theoretic perspective, one can check that stabilizer codes are precisely self-orthogonal symplectic codes. In this setting, the assumption that $V$ is spanned by a symplectic pair $(e,f)$ is necessary, since $e$ and $f$ are naturally associated with the Pauli operators $X$ and $Z$ (see also Example~\ref{example:repetition_code3}). 
\end{remark}

\begin{example}\label{example:repetition_code3}
    Let $\mathfrak{H} = (\mathbb{C}^2)^{\otimes 2}$ be a Hilbert space, and let $\mathfrak{C}$ denote the quantum repetition code. That is, $\mathfrak{C}$ is a stabilizer code with stabilizer $\mathcal{S}(\mathfrak{C}) = \langle Z \otimes Z \rangle$, and normalizer $
    \mathcal{N}(\mathfrak{C}) = \langle Z \otimes Z,\, X \otimes X,\, Z \otimes I \rangle$. The logical operators are $\bar{X} = X \otimes X$, 
    $\bar{Z} = Z \otimes I$. This is a~$[[2,1,1]]_2$ quantum code. We identify  $X \leftrightarrow e$, $Z \leftrightarrow f$, $\mathcal{N}(\mathfrak{C}) \leftrightarrow C$ and $\mathcal{S}(\mathfrak{C}) \leftrightarrow C^\perp$. We get
    \begin{equation*}
        C=\{(e,e),(f,f),(f,0)\}\qquad\textup{ and }\qquad C^\perp=\{(f,f)\}
    \end{equation*}
Thus, the quantum repetition code is a realization of the subspace described in 
Examples~\ref{example:repetition_code1} and~\ref{example:repetition_code2}, 
with underlying field $\mathbb{F}_2$. 
The length, dimension, and minimum distance of $C$ coincide with those of the quantum code $\mathfrak{C}$, its isorank is $2$, and we have $C^{\perp} = \rad(C)$, in accordance with Remark~\ref{rem:stabilizer} and Definition~\ref{def:stabcode}.

\end{example}

\begin{remark}\label{rem:subsys}
It is interesting to observe that the symplectic setting provides a general and flexible framework for studying quantum error-correcting codes. A further example is given by \textbf{subsystem codes}, introduced by Poulin~\cite{poulin2005stabilizer}, which naturally arise from a pair of symplectic codes $C$ and $D$ satisfying $ C^{\perp} \le D \le C$ and  $C^{\perp} = \rad(D)$. In this framework, we continue to assume $\dim(V)=1$, as in Remark~\ref{rem:stabilizer}. Under this assumption, $C^{\perp}$, $D$, and $C$ identify the \textit{stabilizer}, \textit{gauge group}, and \textit{normalizer} of the subsystem code, respectively. We illustrate this correspondence in Example~\ref{example:2x2_bacon_shor_new}, where we consider the $2\times 2$ Bacon-Shor code, a family of subsystem codes introduced by Bacon in~\cite{bacon2006operator}.  Other examples of quantum codes that can be recovered from Definition~\ref{def:sympcodes} include \textbf{entanglement-assisted quantum codes}~\cite{brun2006correcting}, where $V$ is a finitely generated symplectic space, not necessarily of dimension one, and the amount of entanglement can be measured by $\dim(C) - \dim(\rad(C))$, and \textbf{nonadditive quantum codes}~\cite{rains1997nonadditive}, which can be viewed as unions of cosets of symplectic codes with relaxed constraints on $V$.
\end{remark}

\begin{example}\label{example:2x2_bacon_shor_new}
    Let $\mathfrak{H} = (\mathbb{C}^2)^{\otimes 4}$ be a Hilbert space, and let $\mathfrak{C}$ denote the $2\times 2$ Bacon-Shor code. That is, $\mathfrak{C}$ is the subsystem code with gauge group  
\[
    \mathcal{G}(\mathfrak{C}) = \langle 
    X\otimes X\otimes I\otimes I,\,
    I\otimes I\otimes X\otimes X,\,
    Z\otimes I\otimes Z\otimes I,\,
    I\otimes Z\otimes I\otimes Z
    \rangle,
\]
and stabilizer  $\mathcal{S}(\mathfrak{C}) = \langle 
    X\otimes X\otimes X\otimes X,\,
    Z\otimes Z\otimes Z\otimes Z
    \rangle$. The logical operators are  $\bar{X} = X\otimes I\otimes X\otimes I$ and $ 
    \bar{Z} = Z\otimes Z\otimes I\otimes I$. This is a $[[4,1,2]]_2$ quantum code. We identify $X \leftrightarrow e$, $Z \leftrightarrow f$, $\mathcal{G}(\mathfrak{C}) \leftrightarrow D$, and $\mathcal{S}(\mathfrak{C}) \leftrightarrow C^{\perp}$. Hence, we obtain  
\[
    D = \mathrm{span}_{\mathbb{F}_2}\{(e,e,0,0), (0,0,e,e), (f,0,f,0), (0,f,0,f)\},
\]
 $C^{\perp} = \rad(D) = \mathrm{span}_{\mathbb{F}_2}\{(e,e,e,e), (f,f,f,f)\}$ and
\[
    C = \mathrm{span}_{\mathbb{F}_2}\{(e,e,0,0), (0,0,e,e), (f,f,0,0), (0,0,f,f), (e,0,e,0), (f,0,f,0)\}.
\]
Finally, one can readily verify that $C^{\perp} \subsetneq D \subsetneq C$, in line with Remark~\ref{rem:subsys}.
\end{example}

Next we establish a relationship between the dimension and the maximum weight of a code in a symplectic space (cf.~\cite[Proposition~6]{ravagnani2016generalized}). We recall that in this context, the dimension refers to the number of independent symplectic pairs rather than the underlying vector-space dimension. We begin with a simple technical lemma, analogous to Gaussian reduction for codes over a field. 

\begin{lemma}\label{lemma:gaussian-reduce}
   There exist $(\bar{v}^{(1)}, \bar{w}^{(1)}), \ldots, (\bar{v}^{(k)}, \bar{w}^{(k)})$ independent symplectic pairs in $C$ and a set of distinct indices $\{i_1,\dots, i_k\}$ such that the following hold for each $j\in\{1, \dots, k\}$.
    \begin{enumerate}
        \item $\omega(\bar{v}^{(j)}_{i_j}, \bar{w}^{(j)}_{i_j}) \not= 0$ (and hence $\bar{v}^{(j)}, \bar{w}^{(j)}$ span $V$ at coordinate $i_j$),
        \item $\bar{v}^{(\ell)}_{i_j} = \bar{w}^{(\ell)}_{i_j} = 0$ for each $\ell \in\{1, \dots, k\}\setminus\{j\}$.
    \end{enumerate}
\end{lemma}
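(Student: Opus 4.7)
The plan is induction on $k=\dim(C)$, with the base case $k=0$ vacuous. For the inductive step, fix an orthogonal decomposition $C=\rad(C)\oplus K$ where $K$ is symplectic of $\F$-dimension $2k$, and choose any symplectic pair $(\bar v^{(1)},\bar w^{(1)})\in K$. Expanding $1=\omega(\bar v^{(1)},\bar w^{(1)})=\sum_j\omega(\bar v^{(1)}_j,\bar w^{(1)}_j)$ shows that some summand is nonzero, so there exists an index $i_1$ with $\omega(\bar v^{(1)}_{i_1},\bar w^{(1)}_{i_1})\neq 0$; since $\omega$ is nondegenerate on the two-dimensional space $V$, this automatically forces $\bar v^{(1)}_{i_1},\bar w^{(1)}_{i_1}$ to span $V$ at coordinate $i_1$, giving condition~(1) for $j=1$.

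The core of the argument is to set up the right subcode for induction. Since $W=\spn\{\bar v^{(1)},\bar w^{(1)}\}$ is a two-dimensional symplectic subspace of $C$, one has the $\omega$-orthogonal decomposition $C=W\oplus C_1$ with $C_1=C\cap W^\perp$, from which $\dim(C_1)=k-1$ and $\rad(C_1)=\rad(C)$ follow by a direct check. I then set $C'=C_1\cap\{x\in V^n:x_{i_1}=0\}$. The essential claim is that $\dim(C')=k-1$: while the $\F$-dimension drops when shortening at $i_1$, the drop is absorbed by the radical rather than by the symplectic dimension. Verifying this cleanly — particularly in characteristic~$2$, where unexpected isotropy can appear after shortening — is the main technical obstacle. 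The needed bounds follow from monotonicity of $\dim$ together with the supermodular and submodular inequalities of Proposition~\ref{prop:submod} applied to $C_1$ and the symplectic hyperplane $\{x:x_{i_1}=0\}$.

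With $C'$ in hand, the inductive hypothesis gives $k-1$ independent symplectic pairs $(\bar v^{(j)},\bar w^{(j)})\in C'$ and distinct indices $i_2,\ldots,i_k$ satisfying the lemma's conditions within $C'$. Because every element of $C'$ vanishes at $i_1$, condition~(1) forces each $i_j\neq i_1$, so $\{i_1,\ldots,i_k\}$ is distinct, and condition~(2) for $\ell\geq 2,\ j=1$ is immediate. For condition~(2) with $\ell=1,\ j\geq 2$, I would modify $\bar v^{(1)}$ and $\bar w^{(1)}$ by subtracting the unique combinations of $\bar v^{(j)},\bar w^{(j)}$ that kill their $i_j$-coordinates for $j\geq 2$. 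Because each $\bar v^{(j)},\bar w^{(j)}$ lies in $C'\subseteq W^\perp$, the corrections preserve $\omega(\bar v^{(1)},\bar w^{(1)})=1$ up to a scalar that can be corrected by a final rescaling of $\bar w^{(1)}$, and they leave the $i_1$-coordinate of the first pair unchanged since every $\bar v^{(j)},\bar w^{(j)}$ vanishes at $i_1$.
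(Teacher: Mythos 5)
There is a genuine gap: the central claim that $\dim(C')=k-1$, where $C'=C_1\cap\{x\in V^n: x_{i_1}=0\}$, is false in general, and the tools you cite cannot establish it. Supermodularity of $\dim$ (Proposition~\ref{prop:submod}) applied to $C_1$ and the anticode $A$ supported on $\{1,\ldots,n\}\setminus\{i_1\}$ only yields $\dim(C_1\cap A)\geq \dim(C_1)+\dim(A)-\dim(C_1+A)\geq (k-1)+(n-1)-n=k-2$, and this weaker bound is attained. Concretely, take $n=2$ and $C=V^2$, so $k=2$, and choose the symplectic pair $\bar v^{(1)}=(e,e)$, $\bar w^{(1)}=(f,0)$ with $i_1=1$. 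Then $C_1=W^\perp=\spn\{(f,-f),(0,e)\}$ is symplectic of dimension $1$, but its intersection with $\{x: x_1=0\}$ is $\spn\{(0,e)\}$, which is isotropic; hence $\dim(C')=0$, not $1$, and the induction cannot produce the second pair (which certainly exists here, e.g.\ $((0,e),(0,f))$). The failure mode is that shortening at $i_1$ \emph{discards} every element of $C_1$ with nonzero $i_1$-coordinate instead of correcting it, so the symplectic partner of such an element can survive the intersection alone and become isotropic. Your parenthetical worry about characteristic $2$ is a red herring: the problem is field-independent and is not one of "unexpected isotropy after shortening" but of genuinely losing half of a symplectic pair.

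The paper avoids this by never intersecting. It fixes a basis of mutually orthogonal symplectic pairs and performs a symplectic Gaussian elimination: each later vector $u$ with $u_{i_1}=\alpha\bar v^{(1)}_{i_1}+\beta\bar w^{(1)}_{i_1}$ is replaced by $u-\alpha\bar v^{(1)}-\beta\bar w^{(1)}$, and simultaneously $\bar v^{(1)},\bar w^{(1)}$ are corrected by multiples of the partner of $u$ so that all pairings $\omega(\bar v^{(j)},\bar w^{(j)})$ and mutual orthogonalities are preserved. This is a change of basis of $C$ itself, so no symplectic dimension is lost, and the index $i_j$ for the $j$-th pair is then found exactly as in your first step. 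Your final paragraph already contains this compensation idea; if you apply it at every stage --- replacing the intersection $C_1\cap\{x_{i_1}=0\}$ by the image of $C_1$ under $x\mapsto x-\alpha\bar v^{(1)}-\beta\bar w^{(1)}$, together with the matching correction of the first pair --- your induction closes and essentially reproduces the paper's argument.
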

\begin{proof}
   Let $(v^{(1)},w^{(1)}), \ldots, (v^{(k)},w^{(k)})$ be independent orthogonal symplectic pairs in $C$. Since $v^{(1)},w^{(1)}$ are not orthogonal, there exists an index $i_1$ such that $\omega(v^{(1)}{i_1}, w^{(1)}{i_1}) \not = 0$ and therefore $v^{(1)}{i_1}, w^{(1)}{i_1}$ span $V$ in this coordinate. Now consider the pair $(v^{(2)},w^{(2)})$ and assume $\omega(v^{(2)},w^{(2)}) = 1$, without loss of generality. Suppose $v^{(2)}{i_1} \not= 0$, write $v^{(2)}{i_1} = \alpha v^{(1)}{i_1} + \beta w^{(1)}{i_1}$ and let $\bar{v}^{(2)} = v^{(2)} - \alpha v^{(1)} - \beta w^{(1)}$. Also set $\bar{v}^{(1)} = v^{(1)} - \beta w^{(2)}$ and $\bar{w}^{(1)} = w^{(1)} - \alpha w^{(2)}$. We have
    \begin{align*}
        \omega(\bar{v}^{(1)}, \bar{w}^{(1)}) &= \omega(v^{(1)}, w^{(1)}) \not= 0, &
        \omega(\bar{v}^{(2)}, w^{(2)}) &= \omega(v^{(2)}, w^{(2)}) = 1,\\
        \omega(\bar{v}^{(1)}, \bar{v}^{(2)}) &= -\beta + \beta = 0, &
        \omega(\bar{v}^{(1)}, w^{(2)}) &= 0,\\
        \omega(\bar{w}^{(1)}, \bar{v}^{(2)}) &= -\alpha + \alpha = 0, &
        \omega(\bar{w}^{(1)}, w^{(2)}) &= 0.
    \end{align*}
    It follows that ${(\bar{v}^{(1)}, \bar{w}^{(1)}), (\bar{v}^{(2)}, w^{(2)}), \ldots, (v^{(k)},w^{(k)})}$ has the same properties as the original set of symplectic pairs, and $v^{(2)}_{i_1} = 0$. Iterating this same process through $w^{(2)}, \ldots, v^{(k)}, w^{(k)}$ proves the $j = 1$ case. Using an induction argument, we assume that the lemma holds for ${1,\dots, j-1}$. Then for $j$, we claim that we can find an index $i_j \not\in {i_1, \ldots, i_{j-1}}$, with $\omega(v^{(j)}{i_j}, w^{(j)}{i_j}) \not = 0$. This is clear. Indeed, we have that $v^{(j)}$ and $w^{(j)}$ are not orthogonal, but~$v^{(j)}_{i_\ell} = w^{(j)}_{i_\ell} = 0$ for $\ell\in\{1, \ldots, j-1\}$, so there must be an index with $\omega(v^{(j)}{i_j}, w^{(j)}{i_j}) \not = 0$, and any such $i_j$ cannot lie in ${i_1, \ldots, i_{j-1}}$. The rest of the proof follows as above.
\end{proof}

The following is an immediate consequence of this lemma.

\begin{corollary}\label{corollary:anticode-bound}
    We have $k\leq \maxwt(C)$.
\end{corollary}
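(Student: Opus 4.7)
The plan is to use Lemma~\ref{lemma:gaussian-reduce} to exhibit a single element of $C$ whose Hamming weight is at least $k$. Applying the lemma yields independent symplectic pairs $(\bar v^{(1)},\bar w^{(1)}),\ldots,(\bar v^{(k)},\bar w^{(k)}) \in C$ and distinct indices $i_1,\ldots,i_k$ with the property that at coordinate $i_j$ the pair $(\bar v^{(j)}_{i_j},\bar w^{(j)}_{i_j})$ spans $V$, while every other pair vanishes there.

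I would then form the single vector $v := \bar v^{(1)} + \cdots + \bar v^{(k)} \in C$. At each coordinate $i_j$, property~(2) of the lemma forces $\bar v^{(\ell)}_{i_j} = 0$ for all $\ell \neq j$, so $v_{i_j} = \bar v^{(j)}_{i_j}$. The key observation is that $\bar v^{(j)}_{i_j}$ cannot itself be zero: if it were, then $\bar w^{(j)}_{i_j}$ alone would have to span the two-dimensional space $V$ over $\F$, which is impossible. (Equivalently, $\omega(\bar v^{(j)}_{i_j},\bar w^{(j)}_{i_j}) \neq 0$ by property~(1), so in particular $\bar v^{(j)}_{i_j} \neq 0$.)

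Since the indices $i_1,\ldots,i_k$ are distinct and $v_{i_j} \neq 0$ at each of them, we conclude $\wt(v) \geq k$, and hence $\maxwt(C) \geq \wt(v) \geq k$, which is the desired bound.

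I do not expect any real obstacle here: all of the combinatorial work has been absorbed into Lemma~\ref{lemma:gaussian-reduce}, and the only subtlety worth flagging is the justification that the $\bar v^{(j)}_{i_j}$ are individually nonzero, which follows immediately from skew-symmetry of $\omega$ together with property~(1).
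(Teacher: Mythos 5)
Your proof is correct and follows exactly the same route as the paper's: invoke Lemma~\ref{lemma:gaussian-reduce}, sum the vectors $\bar v^{(1)},\ldots,\bar v^{(k)}$, and observe that the resulting codeword is nonzero at each of the $k$ distinct pivot coordinates. Your explicit justification that each $\bar v^{(j)}_{i_j}\neq 0$ via property~(1) of the lemma is a detail the paper leaves implicit, but the argument is identical in substance.
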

\begin{proof}
    The result is trivial for $k = 0$. For $k > 0$, let $(v^{(1)}, w^{(1)}), \ldots, (v^{(k)}, w^{(k)}) \in C$ be symplectic pairs. Observe that, in the positions $i_1, \dots, i_k$, the vector $\,v^{(1)} + \cdots + v^{(k)} \in C\,$ has a nonzero entry in each such coordinate. Therefore, its weight is at least $k$.
\end{proof}

\subsection{Anticodes in a Symplectic Space}

In this section, we introduce and investigate the notion of anticodes in symplectic spaces, inspired by the work in~\cite{ravagnani2016generalized} and~\cite{byrne2023tensor}.

\begin{definition}
    We say that $C$ is a (\textbf{symplectic}) \textbf{anticode} if it attains the bound in Corollary~\ref{corollary:anticode-bound}, that is if $k=\maxwt(C)$. We denote the set of all anticodes in $V^n$ by~$\mA(n)$.
\end{definition}

Let $J \subseteq \{1, \dots, n\}$. A \textbf{free code} in $V^n$ \textbf{supported} on $J$ is defined as
\begin{equation}
\{v \in V^n : v_j = 0 \textup{ for all } j \notin J\}.
\end{equation}
It is immediate that for any such free code $D$, we have $\dim(D) = \maxwt(D) = |J|$. Therefore, every free code is an anticode. The next result shows that the converse also holds.

\begin{proposition}
    A code $C$ is an anticode if and only if $C$ is a free code.
\end{proposition}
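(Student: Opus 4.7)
The plan is that the direction $(\Leftarrow)$ was already verified in the paragraph preceding the statement: a free code supported on $J$ has both dimension and maximum weight equal to $|J|$. For $(\Rightarrow)$, I will exhibit an index set $J$ such that $C$ coincides with the free code on $J$.

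Suppose $C$ is an anticode of dimension $k=\maxwt(C)$, and apply Lemma~\ref{lemma:gaussian-reduce} to obtain independent symplectic pairs $(\bar{v}^{(j)},\bar{w}^{(j)})_{j=1}^{k}$ in $C$ and distinct coordinates $J=\{i_1,\ldots,i_k\}$ satisfying properties~(1) and~(2) of that lemma. Let $D$ be the free code supported on $J$, so that $\dim_\F(D)=2k$. I will prove $C=D$ in two steps: first, that each generator $\bar{v}^{(j)},\bar{w}^{(j)}$ lies in $D$, i.e., vanishes outside $J$; second, that every element of $C$ lies in the span of these generators.

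For the first step, fix $j_0\in\{1,\ldots,k\}$ and an arbitrary $m\notin J$. I use three vectors in $C$, each having at least $k$ nonzero entries at the coordinates $i_1,\ldots,i_k$ and hence, by the anticode hypothesis $\maxwt(C)=k$, weight exactly $k$: the vector $A=\sum_\ell \bar{v}^{(\ell)}$, whose entry at each $i_j$ is the nonzero $\bar{v}^{(j)}_{i_j}$; the vector $B=\bar{w}^{(j_0)}+\sum_{\ell\neq j_0}\bar{v}^{(\ell)}$, whose $i_{j_0}$-entry is $\bar{w}^{(j_0)}_{i_{j_0}}\neq 0$; and $G=\bar{v}^{(j_0)}+\bar{w}^{(j_0)}+\sum_{\ell\neq j_0}\bar{v}^{(\ell)}$, whose $i_{j_0}$-entry $\bar{v}^{(j_0)}_{i_{j_0}}+\bar{w}^{(j_0)}_{i_{j_0}}$ is nonzero because $\bar{v}^{(j_0)}_{i_{j_0}}$ and $\bar{w}^{(j_0)}_{i_{j_0}}$ are $\F$-independent in the two-dimensional $V$. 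Setting $A_m=B_m=G_m=0$ and introducing the auxiliary sum $S=\sum_{\ell\neq j_0}\bar{v}^{(\ell)}_m$ yields $\bar{v}^{(j_0)}_m=-S$, $\bar{w}^{(j_0)}_m=-S$, and $\bar{v}^{(j_0)}_m+\bar{w}^{(j_0)}_m+S=0$, from which $S=0$ and hence $\bar{v}^{(j_0)}_m=\bar{w}^{(j_0)}_m=0$. As $j_0$ and $m$ were arbitrary, $\bar{v}^{(j)},\bar{w}^{(j)}\in D$ for every $j$.

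For the second step, take any $c\in C$. Since $\bar{v}^{(j)}_{i_j}$ and $\bar{w}^{(j)}_{i_j}$ form an $\F$-basis of $V$, choose $\alpha_j,\beta_j\in\F$ with $\alpha_j \bar{v}^{(j)}_{i_j}+\beta_j \bar{w}^{(j)}_{i_j}=c_{i_j}$ and set $c'=c-\sum_j(\alpha_j\bar{v}^{(j)}+\beta_j\bar{w}^{(j)})\in C$. Then $c'_{i_j}=0$ for every $j$. If $c'\neq 0$ had a nonzero entry at some $m\notin J$, then $c'+A\in C$ would carry a nonzero entry at each $i_j$ (inherited from $A$) together with an additional nonzero entry at $m$ (since $A_m=0$ by the first step), contradicting $\maxwt(C)=k$. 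Hence $c'=0$ and $c\in\spn\{\bar{v}^{(j)},\bar{w}^{(j)}\}\subseteq D$. Combining the two steps yields $C\subseteq D$, and since the $2k$ orthogonal symplectic pairs are $\F$-independent, $\dim_\F(C)\geq 2k=\dim_\F(D)$, forcing $C=D$. The main obstacle is the first step: the single weight-$k$ combination $A$ only pins down the aggregate $\sum_\ell\bar{v}^{(\ell)}_m$, so to isolate the individual components $\bar{v}^{(j_0)}_m$ and $\bar{w}^{(j_0)}_m$ one must engineer two further weight-$k$ combinations that perturb the $i_{j_0}$-coordinate using $\bar{w}^{(j_0)}$ and the fact that $\bar{v}^{(j_0)}_{i_{j_0}},\bar{w}^{(j_0)}_{i_{j_0}}$ span $V$.
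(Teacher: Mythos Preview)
Your proof is correct and follows essentially the same strategy as the paper's: invoke Lemma~\ref{lemma:gaussian-reduce} to obtain pivot coordinates $J$, form the sum $\sum_\ell \bar v^{(\ell)}$ (which is forced to have weight exactly $k$), and then perturb the summands using the $\bar w$'s to pin down the entries outside $J$. The paper carries out the perturbation via a two-case analysis (replacing a single $\bar v^{(i)}$ by $\bar w^{(i)}$ or by $\bar v^{(i)}-\bar w^{(i)}$), whereas you set up three simultaneous weight-$k$ vectors $A,B,G$ and solve the resulting linear system; these are cosmetic variants of the same idea.

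One point worth noting: the paper's proof begins by asserting that the $k$ orthogonal symplectic pairs are ``hence a basis of $A$'', i.e.\ that $\rad(C)=0$, without justification. Your Step~2 supplies exactly this missing argument by showing that any $c\in C$ reduces, modulo the span of the pairs, to a vector $c'$ vanishing on $J$, and that $c'\neq 0$ would force $\wt(c'+A)\geq k+1$. So your version is in fact slightly more complete than the paper's at this point.
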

\begin{proof}
    One implication is trivial, it remains to show that every anticode is a free code. Let~$A$ be an anticode with $\dim(A) = k = \mathrm{maxwt}(A)$, and let $(v^{(1)}, w^{(1)}), \ldots, (v^{(k)}, w^{(k)})$ be independent, orthogonal symplectic pairs in $A$ (and hence a basis of $A$). As in the proof of Lemma~\ref{lemma:gaussian-reduce}, we can reduce these to a set of vectors (for which we use the same notation) such that there exists a set of ``pivots'' $J = \{j_1, \dots, j_k\}$ with the property that, for each $i\in\{1, \dots, k\}$, we normalize so $v^{(i)}_{j_i} = e$, $w^{(i)}_{j_i} = f$, and $v^{(\ell)}_{j_i} = w^{(\ell)}_{j_i} = 0$ for all $\ell\not = i$. Now suppose there is an index $r \not \in J$ such that not all $v^{(i)}_r$ and $w^{(i)}_r$ are zero. Consider the vector $v = v^{(1)} + \cdots + v^{(k)} \in A$. Then for each $j \in J$, we have $v_j = e$, and hence $\mathrm{wt}(v) \geq k$. If $v_r = v^{(1)}_r + \cdots + v^{(k)}_r \not= 0$ then $\mathrm{wt}(v) \geq k+1 > \mathrm{maxwt}(A)$, leading to a contradiction. Hence $v_r = v^{(1)}_r + \cdots + v^{(k)}_r = 0$. By assumption, however, there exists an index $i \in J$ such that at least one of $v^{(i)}_r$ and $w^{(i)}_r$ is nonzero. We claim that either $w^{(i)}_r \not= v^{(i)}_r$ or $v^{(i)}_r - w^{(i)}_r \not= v^{(i)}_r$. This is immediate since if $w^{(i)}_r = v^{(i)}_r$, then, as this element is nonzero, we must have $v^{(i)}_r - w^{(i)}_r \not= v^{(i)}_r$ as required. In the case where $w^{(i)}_r \not= v^{(i)}_r$, since $v^{(1)}_r + \cdots + v^{(k)}_r = 0$, we have $v^{(1)}_r + \cdots + w^{(i)}_r + \cdots v^{(k)}_r \not= 0$ and hence the vector $v^{(1)} + \cdots + w^{(i)} + \cdots + v^{(k)}\in A$ has nonzero coefficients in each index of $J$ and at $r$. Hence, $\mathrm{wt}(v^{(1)} + \cdots + w^{(i)} + \cdots + v^{(k)}) \geq k+1$, leading to a contradiction. Analogously, in the case where $v^{(i)}_r - w^{(i)}_r \not= v^{(i)}_r$, we obtain $\mathrm{wt}(v^{(1)} + \cdots + (v^{(i)} - w^{(i)}) + \cdots + v^{(k)}) \geq k+1$ which again yields a contradiction. Thus, in either case we obtain a contraction, and therefore we conclude that no such $r\not\in J$ can exist. It follows that  $\mathrm{supp}(A) \subseteq J$, and by a dimension argument we have $A$ is the free code supported on $J$.
\end{proof}

As a consequence, every anticode is uniquely determined by its support. In what follows, for any subset $J \subseteq \{1, \dots, n\}$, we denote by $A_J$ the anticode supported on~$J$. It is interesting to observe that the set of anticodes $\mA(n)$ inherits a Boolean algebra structure from the power set of $\{1,\ldots,n\}$. In particular, the meet of anticodes $A_1, A_2 \in \mA(n)$ is given by $A_1 \cap A_2$, while their join is $A_1 + A_2$. The smallest and largest anticodes are~$\{0\}$ and $V^n$, respectively. Finally, the following is easy to check.

\begin{proposition}\label{prop:ant-compl}
Let $J \subseteq \{1, \dots, n\}$. We have $A^\perp=A_{J^c}$, where $J^c$ denotes the complement of $J$ in $\{1,\ldots,n\}$.
\end{proposition}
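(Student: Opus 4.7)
The plan is to verify the two inclusions $A_{J^c} \leq A_J^\perp$ and $A_J^\perp \leq A_{J^c}$ directly, exploiting the fact that the symplectic form on $V^n$ decomposes coordinate-wise as $\omega(u,v) = \sum_{i=1}^n \omega(u_i,v_i)$. This is the only structural ingredient needed, together with nondegeneracy of $\omega$ on each single factor $V$.

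For the inclusion $A_{J^c} \leq A_J^\perp$, I would take $u \in A_{J^c}$ and $v \in A_J$, and note that in each coordinate $i$ either $i \in J$ (in which case $u_i = 0$) or $i \in J^c$ (in which case $v_i = 0$). In either case $\omega(u_i, v_i) = 0$, so summing over $i$ yields $\omega(u,v) = 0$, and hence $u \in A_J^\perp$.

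For the reverse inclusion $A_J^\perp \leq A_{J^c}$, the key tool is nondegeneracy of $\omega$ on a single copy of $V$. Given $u \in A_J^\perp$ and $j \in J$, I would test $u$ against the vectors $v^{(j,x)} \in A_J$ whose only nonzero component is an arbitrary $x \in V$ at position $j$. The coordinate-wise decomposition gives $\omega(u, v^{(j,x)}) = \omega(u_j, x)$, and since this must vanish for every $x \in V$, nondegeneracy of $\omega$ on $V$ forces $u_j = 0$. Repeating this over $j \in J$ shows $u \in A_{J^c}$.

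No real obstacle is anticipated: the argument is direct and the only subtlety is being explicit that it is nondegeneracy of $\omega$ on $V$ (not on $V^n$) that converts $\omega(u_j, x) = 0$ for all $x$ into $u_j = 0$. If one prefers a more compact closing, the reverse inclusion can also be obtained by a dimension count: combining the first inclusion with $\dim_{\F}(A_J) + \dim_{\F}(A_J^\perp) = \dim_{\F}(V^n) = 2n$ and $\dim_{\F}(A_{J^c}) = 2|J^c| = 2n - 2|J| = 2n - \dim_{\F}(A_J)$ forces equality.
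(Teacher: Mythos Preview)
Your proof is correct and is precisely the routine verification the paper has in mind: the paper does not give a proof at all, simply noting that the claim ``is easy to check.'' Your coordinate-wise argument (or the alternative dimension count you mention) is the natural way to fill this in.
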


As we will see later, studying the intersections of a code with anticodes provides a powerful tool to capture its combinatorial and structural properties. We begin by establishing the following duality result, which is the analogue of~\cite[Lemma~28]{ravagnani2016rank} and~\cite[Lemma~6.5]{byrne2023tensor}, and  can be interpreted as a form of MacWilliams identity.

\begin{theorem}\label{thm:duality-distance}
    For any $A\in\mA(n)$, we have
    \begin{equation}\label{eqn:MacWilliams}
        \dim_\mathbb{F} (A \cap C)=
        \dim_\mathbb{F} (C) - \dim_\mathbb{F} (A^\perp) + \dim_\mathbb{F} (A^\perp \cap C^\perp).
    \end{equation}
    Moreover, if $C$ is a stabilizer code then
    \begin{equation}\label{eqn:MacWilliams-stabilizer}
         \dim (A^\perp) - \dim(A^\perp\cap C) - \irk(A^\perp\cap C)= \dim A - \irk(A\cap \rad(C)) - \dim C.
    \end{equation}
\end{theorem}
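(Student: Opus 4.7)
The plan is to derive \eqref{eqn:MacWilliams} directly from nondegeneracy of $\omega$ on $V^n$ together with vector space modularity, and then to deduce \eqref{eqn:MacWilliams-stabilizer} from \eqref{eqn:MacWilliams} applied to the anticode $A^\perp$ combined with the stabilizer hypothesis $\rad(C)=C^\perp$.

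For \eqref{eqn:MacWilliams}, I would observe that since $\omega$ is nondegenerate on the $2n$-dimensional space $V^n$ one has $(A\cap C)^\perp = A^\perp + C^\perp$ and $\dim_\F(W) + \dim_\F(W^\perp) = 2n$ for every subspace $W\leq V^n$. Applying the classical modular identity \eqref{eqn:dimF-modularity} to $A^\perp$ and $C^\perp$ gives
\[
2n - \dim_\F(A\cap C) \;=\; \dim_\F(A^\perp) + \dim_\F(C^\perp) - \dim_\F(A^\perp \cap C^\perp),
\]
and substituting $\dim_\F(C^\perp) = 2n - \dim_\F(C)$ yields \eqref{eqn:MacWilliams}. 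Note this part does not use that $A$ is an anticode.

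For \eqref{eqn:MacWilliams-stabilizer}, the key move is to apply \eqref{eqn:MacWilliams} with $A$ replaced by $A^\perp$, which is again an anticode by Proposition~\ref{prop:ant-compl}. This produces
\[
\dim_\F(A^\perp \cap C) \;=\; \dim_\F(C) - \dim_\F(A) + \dim_\F(A \cap C^\perp),
\]
and I would then translate each term from $\dim_\F$ to $\dim$ and $\irk$ using the elementary identity $\dim_\F(W) = \dim(W) + \irk(W)$, which is immediate from Definition~\ref{def:dim-irk}. At this point three simplifications collapse the expression to the desired form: (i) because $A$ and $A^\perp$ are free codes, $\dim(A) = \irk(A)$, $\dim(A^\perp) = \irk(A^\perp)$, and $\dim(A) + \dim(A^\perp) = n$; (ii) because $C$ is a stabilizer code, $\irk(C) = n$ by Definition~\ref{def:stabsub} and $\rad(C) = C^\perp$ by Theorem~\ref{thm:stab}; and (iii) consequently $A\cap C^\perp = A\cap \rad(C)$ is a subspace of the isotropic $C^\perp$, hence itself isotropic, so $\dim(A\cap \rad(C)) = 0$ and $\dim_\F(A\cap \rad(C)) = \irk(A\cap \rad(C))$.

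The main obstacle is purely bookkeeping: one must carefully track which half of the decomposition $\dim_\F = \dim + \irk$ survives each substitution. Once the three simplifications above are fed in, the equation reduces to an elementary linear rearrangement that reproduces \eqref{eqn:MacWilliams-stabilizer}, so no further conceptual input is required beyond Theorem~\ref{thm:stab}, Proposition~\ref{prop:ant-compl}, and Definition~\ref{def:dim-irk}.
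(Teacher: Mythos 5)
Your proof is correct and takes essentially the same route as the paper's: the first identity is obtained from the modular law \eqref{eqn:dimF-modularity} applied to $A^\perp$ and $C^\perp$ together with $\dim_\mathbb{F}(W)+\dim_\mathbb{F}(W^\perp)=\dim_\mathbb{F}(V^n)$, and the second by interchanging $A$ and $A^\perp$ in \eqref{eqn:MacWilliams} and translating $\dim_\mathbb{F}=\dim+\irk$ via Definition~\ref{def:dim-irk} under the stabilizer hypothesis. The only difference is that you spell out the bookkeeping for \eqref{eqn:MacWilliams-stabilizer}, which the paper leaves implicit.
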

\begin{proof}
    Recall that, by~\eqref{eqn:dimF-modularity}, we have
    \begin{equation}\label{eqn:MacWilliams-proof1}
    \dim_\mathbb{F}(A^\perp \cap C^\perp) + \dim_\mathbb{F}(A^\perp + C^\perp) = \dim_\mathbb{F} A^\perp + \dim_\mathbb{F} C^\perp.
    \end{equation}
    Additionally, we have that $\dim_\mathbb{F} C^\perp = \dim_\mathbb{F} V - \dim_\mathbb{F} C$ and 
    \begin{equation*}
        \dim_\mathbb{F}(A^\perp + C^\perp) = \dim_\mathbb{F} V - \dim_\mathbb{F}((A^\perp + C^\perp)^\perp)=\dim_\mathbb{F} V - \dim_\mathbb{F}(A \cap C).
    \end{equation*}
    Combining these with~\eqref{eqn:MacWilliams-proof1} establishes~\eqref{eqn:MacWilliams}. On the other hand, if $C$ is a stabilizer code, we have $\rad(C)=C^\perp$, and~\eqref{eqn:MacWilliams-stabilizer} follows directly from Definition~\ref{def:dim-irk} by interchanging the roles of $A$ and $A^\perp$ in~\eqref{eqn:MacWilliams}.
\end{proof}

\subsection{Duality of Puncturing and Shortening}

In this section, let $A$ be the anticode supported on $J$, for some 
$J \subseteq \{1, \ldots, n\}$. Consider the projection map 
$\pi_J : V^n \to V^{|J|}$, which projects onto the coordinates indexed by $J$.  It is straightforward to verify that $V^{|J|}$ is isomorphic to the anticode $A$. Consequently, we may view an anticode either as a subspace of $V^n$ or as an independent symplectic space, with the symplectic form and basis naturally inherited from $V^n$.

\begin{definition}
    We define the \emph{puncturing} of $C$ on $A$ as $\Pi_A C = \pi_J(C)$. We define the \emph{shortening} of $C$ on $A$ as $\Sigma_A C = \pi_J(C \cap A)$.
\end{definition}

The shortening can be seen as the puncturing of the \textit{restriction} of $C$ to $A$. We recall the following fundamental result in quantum error correction \cite[Lemma 1]{bravyi2009no}

\begin{theorem}[Cleaning Lemma]
    Let $C$ be a stabilizer code with radical $S = C^\perp$. Then one of the following holds.
    \begin{enumerate}
        \item There exists an element of $(A\cap C) \setminus (A\cap S)$.
        \item For any $c \in C$ there exists an $s \in S$ so that $\pi_J(c) = \pi_J(s)$.
    \end{enumerate}
\end{theorem}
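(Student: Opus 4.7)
The plan is to prove the contrapositive: assuming (1) fails, i.e.\ $A\cap C = A\cap S$, I will deduce (2). The first step is to recast (2) algebraically as the equality of projections $\Pi_A C = \Pi_A S$. Since $A = A_J$ and $A^\perp = A_{J^c}$ are free codes on complementary index sets (Proposition~\ref{prop:ant-compl}), one has $\ker\pi_J = A^\perp$, hence for every subspace $W\leq V^n$,
$$\dim_\F(\Pi_A W) = \dim_\F(W) - \dim_\F(W \cap A^\perp).$$
Moreover $S = C^\perp = \rad(C) \leq C$ by Definition~\ref{def:stabcode}, so $\Pi_A S \subseteq \Pi_A C$ holds for free; proving $\Pi_A C = \Pi_A S$ therefore reduces to equating their $\F$-dimensions.

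Next I would apply the MacWilliams-type identity~\eqref{eqn:MacWilliams} of Theorem~\ref{thm:duality-distance} twice, to the pairs $(A,C)$ and $(A,S)$. Using $S = C^\perp$ together with the standard $S^\perp = (C^\perp)^\perp = C$ (valid because the symplectic form on $V^n$ is nondegenerate), these read
\begin{align*}
\dim_\F(A\cap C) &= \dim_\F(C) - \dim_\F(A^\perp) + \dim_\F(A^\perp\cap S),\\
\dim_\F(A\cap S) &= \dim_\F(S) - \dim_\F(A^\perp) + \dim_\F(A^\perp\cap C).
\end{align*}
Subtracting and invoking the hypothesis $\dim_\F(A\cap C) = \dim_\F(A\cap S)$ yields
$$\dim_\F(C) - \dim_\F(A^\perp\cap C) = \dim_\F(S) - \dim_\F(A^\perp\cap S),$$
which by the displayed formula above is precisely $\dim_\F(\Pi_A C) = \dim_\F(\Pi_A S)$. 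Combined with the inclusion $\Pi_A S \subseteq \Pi_A C$, this forces equality, which is condition (2).

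The only subtle point is locating where the stabilizer hypothesis enters: identity~\eqref{eqn:MacWilliams} itself holds for arbitrary symplectic codes, but the two instances of it couple cleanly only because $S$ and $C$ are mutual symplectic duals, and the inclusion $\Pi_A S \subseteq \Pi_A C$ is automatic precisely because $S \leq C$. Both ingredients follow immediately from Definition~\ref{def:stabcode}, so no new computation is required. I do not anticipate any real obstacle; the work lies entirely in the dictionary ``(1) $\Leftrightarrow$ intersections differ'' and ``(2) $\Leftrightarrow$ projections agree'', after which the duality formula does the rest.
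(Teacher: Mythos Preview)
Your argument is correct. The dictionary you set up---``not (1)'' $\Leftrightarrow A\cap C = A\cap S$ and ``(2)'' $\Leftrightarrow \Pi_A C = \Pi_A S$---is exactly right (both equivalences use $S\leq C$), and the two instances of~\eqref{eqn:MacWilliams} subtract to give $\dim_\F(\Pi_A C)=\dim_\F(\Pi_A S)$ as you claim.

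Your route, however, differs from the paper's. The paper does not prove the Cleaning Lemma directly; it quotes it from~\cite{bravyi2009no} and then establishes the stronger Generalized Cleaning Lemma (Theorem~\ref{thm:cleaning-lemma}), namely $\Sigma_A C^\perp=(\Pi_A C)^\perp$ and $\Pi_A C^\perp=(\Sigma_A C)^\perp$, by a direct symplectic-form inclusion argument rather than dimension counting. From that identity the Cleaning Lemma drops out in one line: if $A\cap C=A\cap S$ then $\Sigma_A C=\Sigma_A S$, so $\Pi_A C=(\Sigma_A S)^\perp=(\Sigma_A C)^\perp=\Pi_A S$. Your proof instead bypasses the space-level duality entirely and works only with $\F$-dimensions via Theorem~\ref{thm:duality-distance}; this is slightly more economical for the specific statement at hand, but it does not yield the general puncturing/shortening duality that the paper is really after. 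In effect, your MacWilliams subtraction is the shadow on dimensions of the paper's equality of spaces.
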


The use of ``cleaning'' in this result is based on case (2) of the theorem: given a $c \in C$ we can shift it by an element of the radical $s\in S$ so that the support of $c-s$ is cleaned off of the indices $J$. The exceptional case (1) is that $\mathrm{supp}(c)\subseteq J$ already and $c \not\in S$. We claim that the cleaning lemma is just a special case of a much more general duality relation between the operations of puncturing and shortening, namely the analogue of~\cite[Theorem~1.5.7(1)]{huffman2003fundamentals} in the symplectic setting.  Hence we refer to this result as the \textit{Generalized Cleaning Lemma}.

\begin{theorem}[Generalized Cleaning Lemma]\label{thm:cleaning-lemma}
    For any code $C \leq V^n$ and any anticode $A \in \mA(n)$, we have  
    \begin{equation*}
        \Sigma_A C^\perp = (\Pi_A C)^\perp\quad\textup{ and }\quad \Pi_A C^\perp = (\Sigma_A C)^\perp.
    \end{equation*}
\end{theorem}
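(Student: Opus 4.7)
The plan is to prove only the first identity $\Sigma_A C^\perp = (\Pi_A C)^\perp$ by a direct double inclusion; the second identity then follows by replacing $C$ with $C^\perp$, using that $(C^\perp)^\perp = C$ since $\omega$ is nondegenerate on $V^n$. Throughout, I identify $A$ with $V^{|J|}$ via $\pi_J$, and write $\omega_J$ for the symplectic form on $V^{|J|}$, which by construction satisfies
\begin{equation*}
\omega(u,v) \;=\; \sum_{i=1}^n \omega(u_i,v_i) \;=\; \omega_J(\pi_J(u),\pi_J(v)) \;+\; \sum_{i\notin J}\omega(u_i,v_i),
\end{equation*}
for any $u,v\in V^n$. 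The single mechanism that makes both inclusions work is the fact that vectors in $A$ vanish outside $J$, which collapses the above identity to $\omega(u,v)=\omega_J(\pi_J(u),\pi_J(v))$ whenever $u\in A$.

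For the inclusion $\Sigma_A C^\perp \subseteq (\Pi_A C)^\perp$, I would take an arbitrary $x = \pi_J(c')$ with $c'\in C^\perp\cap A$ and an arbitrary $y = \pi_J(c)$ with $c\in C$. Since $c'\in A$, the coordinates of $c'$ outside $J$ are zero, so $\omega_J(x,y) = \omega(c',c) = 0$, the last equality coming from $c'\in C^\perp$. Thus $x\in(\Pi_A C)^\perp$.

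For the reverse inclusion $(\Pi_A C)^\perp \subseteq \Sigma_A C^\perp$, the key step is to lift. Given $x\in(\Pi_A C)^\perp\leq V^{|J|}$, define $\tilde x \in V^n$ by $\tilde x_i = x_i$ for $i\in J$ and $\tilde x_i = 0$ for $i\notin J$; by construction $\tilde x\in A$ and $\pi_J(\tilde x)=x$. For any $c\in C$, the vanishing of $\tilde x$ outside $J$ gives
\begin{equation*}
\omega(\tilde x, c) \;=\; \omega_J\bigl(x,\pi_J(c)\bigr) \;=\; 0,
\end{equation*}
since $\pi_J(c)\in \Pi_A C$. Therefore $\tilde x\in C^\perp\cap A$, which shows $x=\pi_J(\tilde x)\in \Sigma_A C^\perp$.

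There is no real obstacle here beyond the bookkeeping of the identification $A\cong V^{|J|}$; once one is careful that ``$\perp$'' on the right-hand sides is taken inside $V^{|J|}$ (with the induced symplectic form) while ``$\perp$'' on $C$ is taken inside $V^n$, both inclusions are a one-line consequence of the zero-padding observation above. The second identity follows formally: applying the first to $C^\perp$ gives $\Sigma_A C = (\Pi_A C^\perp)^\perp$, and taking the orthogonal of both sides in $V^{|J|}$ (which is an involution by nondegeneracy of $\omega_J$) yields $(\Sigma_A C)^\perp = \Pi_A C^\perp$.
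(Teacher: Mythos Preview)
Your proof is correct and follows essentially the same approach as the paper: a double inclusion for $\Sigma_A C^\perp = (\Pi_A C)^\perp$ via the observation that elements of $A$ vanish outside $J$ (so $\omega(\tilde x, c) = \omega_J(x,\pi_J(c))$), followed by the formal substitution $C \mapsto C^\perp$ for the second identity. Your write-up of the reverse inclusion is in fact cleaner than the paper's, which phrases that step somewhat ambiguously.
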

\begin{proof}
    Without loss of generality, let $J = \{1, \ldots, j\}$ for some $j \in \{1, \ldots, n\}$, and let $A = A_J$. Then we have the decomposition $V^n = A \oplus A^\perp$. By Proposition~\ref{prop:ant-compl}, every element $v \in V^n$ can be written as $v = (v_1, v_2)$ with $(v_1, 0) \in A$ and $(0, v_2) \in A^\perp$. Let $u \in \Sigma_A C^\perp$ so that~$(u,0) \in C^\perp$. Then for any $(v,w) \in C$ we have
    \begin{equation*}
        \omega|_A(u,v) = \omega((u,0), (v,w)) = 0.
    \end{equation*}
    In particular, any $v \in \Pi_A C$ has this form, and thus $\Sigma_A C^\perp \subseteq (\Pi_A(C))^\perp$. Conversely, let~$(v,w)\in C$. As above, for any $u\in \Sigma_A C^\perp$, we have
    \begin{equation*}
        0=\omega|_A(u,v)=\omega((u,0),(v,w)).
    \end{equation*}
    This implies $(\Pi_A(C))^\perp\subseteq \Sigma_A C^\perp$ and thus  $ \Sigma_A C^\perp = (\Pi_A C)^\perp$. The second equality easily follows by replacing $C$ with $C^\perp$.
\end{proof}

We will often use the next result, which is an immediate consequence of the theorem above. Recall that $d$ denotes the minimum distance of the code $C$.

\begin{corollary}\label{corollary:cleaning}
    Let $C$ be a stabilizer code (i.e., $\rad(C)=C^\perp$) and let $A \in \mA(n)$. If $\dim(A) < d$ then $\Pi_A C = \Pi_A \rad(C)$.
\end{corollary}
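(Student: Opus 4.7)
The plan is to reduce the claim to an equality of intersections $C \cap A = \rad(C) \cap A$ and then invoke Theorem~\ref{thm:cleaning-lemma}. The inclusion $\Pi_A \rad(C) \subseteq \Pi_A C$ is immediate since $\rad(C) \subseteq C$, so the content is in establishing the reverse inclusion.

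First I would prove that $C \cap A \subseteq \rad(C)$ by a short weight argument. Suppose toward contradiction that there is a $v \in (C \cap A) \setminus \rad(C)$. Then by Definition~\ref{def:sympcodes} we have $\wt(v) \geq d$. On the other hand, $A$ is the free code supported on some $J \subseteq \{1,\dots,n\}$ with $|J| = \dim(A)$, so the support of $v$ lies in $J$ and hence $\wt(v) \leq |J| = \dim(A) < d$. This contradiction forces $v \in \rad(C)$, and together with $\rad(C) \subseteq C$ this yields $C \cap A = \rad(C) \cap A$.

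Next, applying the projection $\pi_J$ to both sides of this equality gives $\Sigma_A C = \Sigma_A \rad(C) = \Sigma_A C^\perp$, where the last equality uses the stabilizer hypothesis $\rad(C) = C^\perp$. I would then invoke Theorem~\ref{thm:cleaning-lemma} in both of its forms, namely $\Pi_A C = (\Sigma_A C^\perp)^\perp$ and $\Pi_A C^\perp = (\Sigma_A C)^\perp$. Combining these identities with $\Sigma_A C = \Sigma_A C^\perp$ gives $\Pi_A C = (\Sigma_A C)^\perp = \Pi_A C^\perp = \Pi_A \rad(C)$, which is the desired conclusion.

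I do not expect a serious obstacle here; the argument is essentially an assembly of the pieces produced immediately above the corollary. The one point that deserves a moment of care is that the orthogonals appearing after applying Theorem~\ref{thm:cleaning-lemma} are computed inside $A$, which inherits a nondegenerate symplectic form from $V^n$ via the identification $A \cong V^{|J|}$ remarked upon before the definition of puncturing and shortening; without this nondegeneracy the chain of equalities above would not be valid.
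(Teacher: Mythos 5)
Your proof is correct and is essentially the argument the paper intends: the paper gives no explicit proof, calling the corollary an immediate consequence of the Generalized Cleaning Lemma, and your assembly (the weight argument giving $C\cap A=\rad(C)\cap A$ when $\dim(A)<d$, followed by the two duality identities and the double-orthogonal inside the nondegenerate symplectic space $A\cong V^{|J|}$) is exactly the natural way to make that "immediate" step explicit.
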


\subsection{Complementarity}

As in the previous section, let $J \subseteq \{1, \ldots, n\}$ and let $A$ be the anticode supported on $J$.  Recall that $V^n$ decomposes orthogonally as $V^n = A \oplus A^\perp$. We consider the orthogonal splitting
\begin{equation*}
    \rad(C) = (\operatorname{rad}(C) \cap A) \oplus (\rad(C) \cap A^\perp) \oplus S'
\end{equation*}
for some $S' \subseteq V^n$. We also recall that all of these spaces are isotropic.

\begin{lemma}\label{lem:injective}
    The map $\pi_J:S'\rightarrow A$ is injective.
\end{lemma}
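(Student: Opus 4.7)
The plan is to unwind the definitions and identify the kernel of $\pi_J$ restricted to $S'$. Recall that $A = A_J$ consists precisely of the vectors of $V^n$ whose entries indexed by $J^c$ vanish, and by Proposition~\ref{prop:ant-compl} we have $A^\perp = A_{J^c}$. Under the direct sum decomposition $V^n = A \oplus A^\perp$, the map $\pi_J$ is simply the projection onto the first factor, and so its kernel is exactly $A^\perp$.

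Given this, checking that $\pi_J|_{S'}$ is injective reduces to showing $S' \cap A^\perp = \{0\}$. The key observation is that $S' \subseteq \rad(C)$ by definition, so any vector $s \in S' \cap A^\perp$ lies in $\rad(C) \cap A^\perp$. But by hypothesis the decomposition
\begin{equation*}
    \rad(C) = (\rad(C) \cap A) \oplus (\rad(C) \cap A^\perp) \oplus S'
\end{equation*}
is a direct sum, which forces $S' \cap (\rad(C) \cap A^\perp) = \{0\}$, and hence $s = 0$.

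I would therefore write the proof in just two or three lines: identify $\ker(\pi_J) = A^\perp$, intersect with $S'$ to get $\ker(\pi_J|_{S'}) = S' \cap A^\perp$, use $S' \subseteq \rad(C)$ to rewrite this as $S' \cap (\rad(C) \cap A^\perp)$, and invoke the directness of the splitting to conclude that this intersection is trivial. There is no real obstacle here; the only subtlety worth flagging is that one must remember that $A^\perp$ is the orthogonal complement in $V^n$ with respect to $\omega$, which coincides with the coordinate complement because $A$ is the free code supported on $J$ and hence $\pi_J$ has kernel exactly $A^\perp$. Once that identification is made the conclusion is immediate from the direct sum structure.
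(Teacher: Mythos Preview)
Your argument is correct and matches the paper's proof essentially line for line: the paper observes that for $v\in\rad(C)$ one has $\pi_J(v)=0$ iff $v\in\rad(C)\cap A^\perp$, and then uses the directness of the splitting to conclude, which is exactly your identification $\ker(\pi_J|_{S'})=S'\cap(\rad(C)\cap A^\perp)=\{0\}$. The only cosmetic difference is that you first identify $\ker(\pi_J)=A^\perp$ on all of $V^n$ before restricting, whereas the paper works inside $\rad(C)$ from the start.
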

\begin{proof}
    It is straightforward to verify that, for any $v \in \rad(C)$, we have $\pi_J(v) = 0$ if and only if $v \in \rad(C) \cap A^\perp$. Therefore, for any $v \in S'$, we have $\pi_J(v) = 0$ if and only if $v = 0$. This concludes the proof.
\end{proof}

It follows from this lemma that $\Pi_A S = \Sigma_A S \oplus \Pi_A S'$  and  $\Pi_{A^\perp} (S) = \Sigma_{A^\perp} (S) \oplus \Pi_{A^\perp} (S')$. 
Moreover, the puncturing of $S'$ has complementarity properties, as shown in the next result.

\begin{proposition}\label{prop:dim-irk-S'}
   We have $\Pi_A S' \cong \Pi_{A^\perp} S'$, and in particular
\begin{equation*}
    \dim(\Pi_A S') = \dim(\Pi_{A^\perp} S') 
    \quad \text{and} \quad 
    \irk(\Pi_A S') = \irk(\Pi_{A^\perp} S').
\end{equation*}
\end{proposition}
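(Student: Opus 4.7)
The plan is to build an explicit linear isomorphism $\phi : \Pi_A S' \to \Pi_{A^\perp} S'$ that identifies the two symplectic structures up to a sign, and then observe that the invariants $\dim$ and $\irk$ are insensitive to such a sign change.

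First I would establish a counterpart of Lemma~\ref{lem:injective} for the other projection: if $v \in S'$ satisfies $\pi_{J^c}(v) = 0$, then $v \in A$, so $v \in \rad(C) \cap A$; since the decomposition of $\rad(C)$ is direct, any such $v$ must vanish. Combined with Lemma~\ref{lem:injective}, this shows that both $\pi_J|_{S'}$ and $\pi_{J^c}|_{S'}$ are injective, hence the composition
\[
\phi := \pi_{J^c}|_{S'} \circ \bigl(\pi_J|_{S'}\bigr)^{-1} : \Pi_A S' \longrightarrow \Pi_{A^\perp} S'
\]
is a well-defined linear isomorphism.

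Next I would compute how $\phi$ interacts with $\omega$. For $u,v \in S' \leq \rad(C)$, we have $\omega(u,v)=0$, and expanding with respect to the orthogonal decomposition $V^n = A \oplus A^\perp$ yields
\[
0 \;=\; \omega|_A(\pi_J u,\, \pi_J v) \;+\; \omega|_{A^\perp}(\pi_{J^c} u,\, \pi_{J^c} v).
\]
Hence $\omega|_{A^\perp}(\phi(x),\phi(y)) = -\,\omega|_A(x,y)$ for all $x,y \in \Pi_A S'$, so $\phi$ is a symplectic anti-isomorphism.

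The final step, and the one requiring the most care, is to note that the notions of isotropic subspace, symplectic subspace, orthogonal complement, and radical depend only on the vanishing locus of the bilinear form, so they are unaffected by replacing $\omega$ with $-\omega$. Consequently $\phi$ carries $\rad(\Pi_A S')$ onto $\rad(\Pi_{A^\perp} S')$, and any orthogonal splitting $\Pi_A S' = \rad(\Pi_A S') \oplus K$ is sent to an orthogonal splitting of $\Pi_{A^\perp} S'$ into $\phi(\rad(\Pi_A S'))$ and $\phi(K)$, with $\phi(K)$ symplectic. Applying Definition~\ref{def:dim-irk} to both splittings yields the asserted equalities for $\dim$ and $\irk$, and indeed shows $\Pi_A S' \cong \Pi_{A^\perp} S'$ as symplectic spaces up to sign. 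The main obstacle is the bookkeeping in this last sign-invariance argument; once the anti-isomorphism $\phi$ is identified, everything else is essentially forced.
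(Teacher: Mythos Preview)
Your argument is correct and rests on the same core observation as the paper: lift from $\Pi_A S'$ back to $S'$ using injectivity of $\pi_J|_{S'}$, project to $\Pi_{A^\perp}S'$, and exploit that $S'$ is isotropic so the two restricted symplectic forms differ only by a sign. The packaging differs slightly. The paper never names a single linear map; instead it treats symplectic pairs and isotropic vectors separately, and absorbs the sign by sending a pair $(e_1,f_1)$ to $(\pi_{J^c}e,\,-\pi_{J^c}f)$, so that the image is again a genuine symplectic pair. Your version is more structural: you define one linear isomorphism $\phi=\pi_{J^c}\circ(\pi_J)^{-1}$, identify it as a symplectic anti-isomorphism, and then argue that $\rad$, $\dim$, and $\irk$ are insensitive to replacing $\omega$ by $-\omega$. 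Your route also makes explicit the injectivity of $\pi_{J^c}|_{S'}$, which the paper only invokes implicitly through the phrase ``the argument is symmetric under the exchange of $A$ and $A^\perp$''. Either presentation works; yours is arguably cleaner, while the paper's is more concrete at the level of individual pairs.
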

\begin{proof}
    Let $e_1,f_1\in \Pi_A S'$ be a symplectic pair and let $e,f\in S'$ with  $\pi_J(e) = e_1$ and~$\pi_J(f) = f_1$. Recall that, $e$ and $f$ are uniquely determined by Lemma~\ref{lem:injective}. We define~$e_2=\pi_{J^c}(e)$ and $f_2=-\pi_{J^c}(f)$ and observe that $e_2,f_2\in \Pi_{A^\perp} S'$, by Proposition~\ref{prop:ant-compl}. One can verify that
    \begin{equation*}
        \omega|_{A^\perp}(e_2, f_2) = -\omega(e, f) + \omega|_A(e_1, f_1) = 1,
    \end{equation*}
    since $\omega(e, f) = 0$ as $e, f \in S' \subseteq \rad(C)$, and $e_1, f_1$ form a symplectic pair. Since the argument is symmetric under the exchange of $A$ and $A^\perp$, we have a bijective correspondence between the symplectic pairs in $\Pi_A S'$ and those in $\Pi_{A^\perp} S'$. In particular,~$\dim(\Pi_A S')= \dim(\Pi_{A^\perp} S')$. Similarly, if $e_1 \in \Pi_A S'$ is isotropic (in $\Pi_A S'$), then we can uniquely extend this to $e \in S'$ and puncture to obtain $e_2 \in \Pi_{A^\perp} S'$. Since $S'$ is itself isotropic, the vector $e = (e_1, e_2)$ is isotropic, and hence so is $e_2$. Therefore, there is also a bijective correspondence between isotropic vectors of $\Pi_A S'$ and those of $\Pi_{A^\perp} S'$, and in particular $\irk(\Pi_A S')= \irk(\Pi_{A^\perp} S')$.
\end{proof}

\begin{example}
   Let $\mathfrak{H} = (\mathbb{C}^2)^{\otimes 9}$ be a Hilbert space, and let $\mathfrak{C}$ denote the $[[9,1,3]]_2$ Shor code. That is, $\mathfrak{C}$ is the stabilizer code with stabilizer $\mathcal{S}(\mathfrak{C})=\langle s_1,\ldots,s_8\rangle$, where  
\begin{equation*}
\begin{array}{ll}
    s_1=Z\otimes Z \otimes I \otimes I\otimes I\otimes I\otimes I\otimes I\otimes I, & s_2= I\otimes Z \otimes Z \otimes I\otimes I\otimes I\otimes I\otimes I\otimes I, \\[1ex]
    s_3= I\otimes I \otimes I \otimes Z\otimes Z\otimes I\otimes I\otimes I\otimes I, & s_4=I\otimes I \otimes I \otimes I\otimes Z\otimes Z\otimes I\otimes I\otimes I,\\[1ex]
    s_5= I\otimes I \otimes I \otimes I\otimes I\otimes I\otimes Z\otimes Z\otimes I, & s_6=I\otimes I \otimes I \otimes I\otimes I\otimes I\otimes I\otimes Z\otimes Z,\\[1ex]
    s_7= X\otimes X \otimes X \otimes X\otimes X\otimes X\otimes I\otimes I\otimes I, & s_8=I\otimes I \otimes I \otimes X\otimes X\otimes X\otimes X\otimes X\otimes X.
\end{array}
\end{equation*}

Identifying $X \leftrightarrow e$, $Z \leftrightarrow f$, $\mathcal{S}(\mathfrak{C}) \leftrightarrow C^\perp$, and $\mathcal{N}(\mathfrak{C}) \leftrightarrow C$, we obtain that $s_i\leftrightarrow c_i$ for all~$i\in\{1,\ldots,8\}$, where
\begin{equation*}
\begin{array}{lll}
    c_1=(f,f,0,0,0,0,0,0,0), & &c_2=(0,f,f,0,0,0,0,0,0),\\[1ex]
    c_3=(0,0,0,f,f,0,0,0,0), && c_4=(0,0,0,0,f,f,0,0,0),\\[1ex]
    c_5=(0,0,0,0,0,0,f,f,0), && c_6=(0,0,0,0,0,0,0,f,f),\\[1ex]
    c_7=(e,e,e,e,e,e,0,0,0), && c_8=(0,0,0,e,e,e,e,e,e).\\[1ex]
\end{array}
\end{equation*}

Hence $\rad(C)=C^\perp=\text{span}_{\F_2}\{c_1,\ldots,c_8\}$. Let $A=A_{\{1,2,3,4\}}$ be the anticode supported on~$\{1,2,3,4\}$, and recall that $A^\perp=A_{\{5,6,7,8,9\}}$ by Proposition~\ref{prop:ant-compl}. Then  
\begin{equation*}
    \rad(C)\cap A=\text{span}_{\F_2}\{c_1,c_2\}\qquad \textup{ and }\qquad
    \rad(C)\cap A^\perp=\text{span}_{\F_2}\{c_4,c_5,c_6\}.
\end{equation*}

This leads to the orthogonal decomposition $\rad(C)=(\rad(C)\cap A)\oplus(\rad(C)\cap A^\perp)\oplus S'$,
where~$S'=\text{span}_{\F_2}\{c_3,c_7,c_8\}$. Puncturing $S'$ on $A$ and $A^\perp$ yields  
\begin{align*}
    \Pi_A(S')&=\text{span}_{\F_2}\{(0,0,0,f),(e,e,e,e),(0,0,0,e)\},\\
    \Pi_{A^\perp}(S')&=\text{span}_{\F_2}\{(f,0,0,0,0),(e,e,0,0,0),(e,e,e,e,e)\}.
\end{align*}

One can easily verify that these spaces are isomorphic, and each contains a symplectic pair, namely $((0,0,0,f),(0,0,0,e))$ in $\Pi_A(S')$ and $((f,0,0,0,0),(e,e,0,0,0))$ in $\Pi_{A^\perp}(S')$. Moreover,  we have
\begin{equation*}
    \rad(\Pi_A(S'))=\text{span}_{\F_2}\{(e,e,e,0)\}\qquad\textup{ and }\qquad 
    \rad(\Pi_{A^\perp}(S'))=\text{span}_{\F_2}\{(0,0,e,e,e)\}.
\end{equation*}
This implies  $\dim(\Pi_A(S'))=\dim(\Pi_{A^\perp}(S'))=1$ and $
\irk(\Pi_A(S'))=\irk(\Pi_{A^\perp}(S'))=1$, in line with Proposition~\ref{prop:dim-irk-S'}.
\end{example}

We conclude this section with some consequences of this proposition.

\begin{corollary}
    Let $A\in\mA(n)$ and write $S = \rad(C)=C\cap C^\perp$. We have 
    \begin{equation}\label{eqn:complementarity-dim}
        \dim(\Pi_A S)= \dim(\Pi_{A^\perp} S),
    \end{equation}
    and
    \begin{equation}\label{eqn:complementarity-irk}
        \irk(\Pi_A S) - \irk(\Sigma_A S)= \irk(\Pi_{A^\perp} S) - \irk(\Sigma_{A^\perp} S).
    \end{equation}
\end{corollary}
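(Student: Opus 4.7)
The plan is to reduce the corollary to Proposition~\ref{prop:dim-irk-S'} by peeling off the isotropic piece $\Sigma_A S$ from $\Pi_A S$ and recognising that the remainder is precisely $\Pi_A S'$. The main tools will be the orthogonal splitting $S = (S \cap A) \oplus (S \cap A^\perp) \oplus S'$ set up before Proposition~\ref{prop:dim-irk-S'}, together with the modularity of $\dim$ and $\irk$ on orthogonal subspaces (the last proposition of Section~1).

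First I would note that, by Proposition~\ref{prop:ant-compl}, $A^\perp = A_{J^c}$ is supported off $J$, so $\pi_J$ annihilates $S \cap A^\perp$. Applying $\pi_J$ to the splitting gives $\Pi_A S = \Sigma_A S + \Pi_A S'$. The key step is to upgrade this to an \emph{orthogonal direct sum} inside $A$. For triviality of the intersection: if $\pi_J(a) = \pi_J(b)$ with $a \in S \cap A$ and $b \in S'$, then $a - b \in S$ has zero $J$-projection, hence $a - b \in S \cap A^\perp$; since the three pieces of $S$ form a direct sum, this forces $a = 0$ and $b = 0$ (here I use Lemma~\ref{lem:injective} to know $\pi_J$ is injective on $S'$). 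For orthogonality: since $a \in A$ has zero $A^\perp$-component, $\omega|_A(\pi_J(a), \pi_J(b)) = \omega(a, b) = 0$ because $S$ is isotropic. Exactly the same argument applies with $A$ replaced by $A^\perp$.

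With the orthogonal direct sum $\Pi_A S = \Sigma_A S \oplus \Pi_A S'$ established, the modularity identities for orthogonal subspaces yield $\dim(\Pi_A S) = \dim(\Sigma_A S) + \dim(\Pi_A S')$ and $\irk(\Pi_A S) = \irk(\Sigma_A S) + \irk(\Pi_A S')$. Since $\Sigma_A S$ is a subspace of the isotropic $S$, it is itself isotropic, so $\dim(\Sigma_A S) = 0$. Hence $\dim(\Pi_A S) = \dim(\Pi_A S')$ and $\irk(\Pi_A S) - \irk(\Sigma_A S) = \irk(\Pi_A S')$, with analogous identities for $A^\perp$. Proposition~\ref{prop:dim-irk-S'} then supplies $\dim(\Pi_A S') = \dim(\Pi_{A^\perp} S')$ and $\irk(\Pi_A S') = \irk(\Pi_{A^\perp} S')$, which combine to give both~\eqref{eqn:complementarity-dim} and~\eqref{eqn:complementarity-irk}. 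The only subtle point in the whole argument is verifying the orthogonal direct sum structure; everything else is a mechanical application of previously established results.
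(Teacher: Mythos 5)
Your argument is correct and follows essentially the same route as the paper: both rest on the decomposition $\Pi_A S = \Sigma_A S \oplus \Pi_A S'$, the isotropy of $\Sigma_A S$, the additivity of $\irk$ over orthogonal summands, and a final appeal to Proposition~\ref{prop:dim-irk-S'}. The only difference is that you verify the directness and orthogonality of the sum explicitly, whereas the paper treats this as already established in the discussion following Lemma~\ref{lem:injective}; your verification is sound.
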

\begin{proof}
   Recall that, by the discussion above, we have $\Pi_A S = \Sigma_A S \oplus \Pi_A S'$, and similarly for $\Pi_{A^\perp} S$. Since $\Sigma_A S$ is isotropic, any symplectic pairs in $\Pi_A S$ must lie in $\Pi_A S'$, and therefore $\dim (\Pi_A S) = \dim(\Pi_A S')$. Moreover, as these summands are orthogonal, we have $\irk(\Pi_A S) = \irk(\Sigma_A S) + \irk(\Pi_A S')$. The result now follows from Proposition~\ref{prop:dim-irk-S'}.
\end{proof}

\begin{corollary}
     Let $A\in\mA(n)$ and suppose $S = \rad(C)=C^\perp$. We have
     \begin{equation}\label{eqn:stabilizer-complementarity-dim}
            \dim(A) - \irk(\Sigma_A C) = \dim (A^\perp) - \irk (\Sigma_{A^\perp} C),
        \end{equation}
        and
        \begin{equation}
            \label{eqn:stabilizer-complementarity-irk}
                \dim (A) - \dim (\Sigma_A C) - \irk (\Sigma_A S) = \dim (A^\perp) - \dim (\Sigma_{A^\perp} C) - \irk (\Sigma_{A^\perp} S).
        \end{equation}
        In particular,
        \begin{equation}
            \label{eqn:stabilizer-complementarity}
                \irk(\Sigma_A C) - \irk(\Sigma_A S) - \dim(\Sigma_A C) = \irk(\Sigma_{A^\perp} C) - \irk(\Sigma_{A^\perp} S) - \dim(\Sigma_{A^\perp} C).
        \end{equation}
\end{corollary}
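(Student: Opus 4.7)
The plan is to reduce all three equalities to the previous corollary by applying the Generalized Cleaning Lemma together with Proposition~\ref{thm:dim-irk-orthogonality} inside the anticode $A$ (viewed as a symplectic space in its own right). Since $C$ is a stabilizer code, $S = \rad(C) = C^\perp$, so the cleaning lemma (Theorem~\ref{thm:cleaning-lemma}) gives
\begin{equation*}
    \Pi_A S \;=\; \Pi_A C^\perp \;=\; (\Sigma_A C)^\perp,
\end{equation*}
where the orthogonal is taken in $A$. This is the key observation that links the punctured radical to the shortened code.

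Next, I would apply Proposition~\ref{thm:dim-irk-orthogonality}, taking $A$ as the ambient symplectic space. This yields
\begin{equation*}
    \dim(\Pi_A S) \;=\; \dim(A) - \irk(\Sigma_A C), \qquad \irk(\Pi_A S) \;=\; \dim(A) - \dim(\Sigma_A C),
\end{equation*}
together with the symmetric identities obtained by replacing $A$ with $A^\perp$ throughout. These four formulas convert the ``$\Pi_A S$'' invariants on the left of equations~\eqref{eqn:complementarity-dim} and~\eqref{eqn:complementarity-irk} into the ``$A,\Sigma_A C$'' quantities that appear in the corollary being proved.

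With this translation in hand, the first identity~\eqref{eqn:stabilizer-complementarity-dim} follows immediately from~\eqref{eqn:complementarity-dim}: the equality $\dim(\Pi_A S) = \dim(\Pi_{A^\perp} S)$ becomes $\dim(A) - \irk(\Sigma_A C) = \dim(A^\perp) - \irk(\Sigma_{A^\perp} C)$. Similarly, substituting $\irk(\Pi_A S) = \dim(A) - \dim(\Sigma_A C)$ (and its $A^\perp$ counterpart) into~\eqref{eqn:complementarity-irk} delivers~\eqref{eqn:stabilizer-complementarity-irk} directly. Finally, \eqref{eqn:stabilizer-complementarity} is obtained by subtracting~\eqref{eqn:stabilizer-complementarity-dim} from~\eqref{eqn:stabilizer-complementarity-irk}; the $\dim(A)$ and $\dim(A^\perp)$ terms cancel, leaving precisely the claimed relation.

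I do not expect any real obstacle here: the nontrivial content lies entirely in the cleaning lemma and in Proposition~\ref{prop:dim-irk-S'}, both already established. The only point requiring mild care is bookkeeping the ``perp'' operations, since the cleaning lemma produces duals \emph{inside the anticode} $A$ (and inside $A^\perp$), so Proposition~\ref{thm:dim-irk-orthogonality} must be applied with $A$ (respectively $A^\perp$) as ambient space rather than $V^n$; this is what makes the terms $\dim(A)$ and $\dim(A^\perp)$ appear on the two sides of the identity.
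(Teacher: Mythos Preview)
Your argument is correct and is exactly the approach the paper takes: apply the Generalized Cleaning Lemma to write $\Pi_A S = (\Sigma_A C)^\perp$ inside $A$, then use Proposition~\ref{thm:dim-irk-orthogonality} (with $A$, respectively $A^\perp$, as ambient space) to rewrite both sides of~\eqref{eqn:complementarity-dim} and~\eqref{eqn:complementarity-irk}. The paper's proof is simply a terse two-line version of what you have written out in full; even the final subtraction step for~\eqref{eqn:stabilizer-complementarity} is the intended one.
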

\begin{proof}
    From Theorem~\ref{thm:cleaning-lemma}, we have $\Pi_A S = \Pi_A C^\perp = (\Sigma_A C)^\perp$, and similarly for $A^\perp$. Hence,~\eqref{eqn:stabilizer-complementarity-dim} follows immediately from~\eqref{eqn:complementarity-dim}, and~\eqref{eqn:stabilizer-complementarity-irk} from~\eqref{eqn:complementarity-irk}.
\end{proof}

Note that a special form of the corollary above is known in the context of stabilizer states, where (\ref{eqn:stabilizer-complementarity-irk}) follows from the fact that entanglement entropies of complementary subsystems in a pure state are equal. In particular, the values of both sides of the equation quantify the amount of bipartite entanglement in the system  \cite{Fattal:2004frh}.  

\section{Invariants for Symplectic Codes}
Invariants for codes defined using an anticode approach have been studied in coding theory since the seminal paper~\cite{ravagnani2016generalized}; see also~\cite{byrne2023tensor}. These invariants form a Galois collection, as shown in~\cite{xu2022galois}. In this section, we introduce and investigate new invariants for symplectic codes. Throughout this section, we let $C \subseteq V^n$ be a code with $\dim(C)=k$ and minimum distance $d$. We introduce the maps $\alpha_C,\beta_C:\mA(n)\rightarrow\mathbb{Z}_{\geq 0}$ defined by
\begin{enumerate}
    \item $\alpha_C(A)=\dim(C\cap A)$,
    \item $\beta_C(A)= \irk(C\cap A)-\irk(\rad(C)\cap A)$.
\end{enumerate}
It is not hard to check that, by~\eqref{eqn:stabilizer-complementarity}, we have $\beta_C(A)-\alpha_C(A)=\beta_C(A^\perp)-\alpha_C(A^\perp)$.

\begin{lemma}\label{lemma:weight-complementarity}
    We have $ \beta_C(A)+\alpha_C(A^\perp) = \dim (C)$.
\end{lemma}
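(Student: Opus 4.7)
My plan is to pass to the symplectic quotient $V' := C/\rad(C)$. Since $\rad(C) = C\cap C^\perp \subseteq C^\perp$, the form $\omega$ descends to a nondegenerate symplectic form on $V'$, so that $V'$ is itself a symplectic space with $\dim(V') = \dim(C)$. Inside $V'$, I would work with the images $N := (C\cap A + \rad(C))/\rad(C)$ and $M := (C\cap A^\perp + \rad(C))/\rad(C)$, and reduce the identity to the duality formula $\dim(N^{\perp_{V'}}) = \dim(V') - \irk_{V'}(N)$ from Proposition~\ref{thm:dim-irk-orthogonality}.

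The first step is to reinterpret $\beta_C(A)$ and $\alpha_C(A^\perp)$ as invariants of $N$ and $M$ in $V'$. A short computation identifies $\rad_{V'}(N)$ with the image of $\rad(C\cap A)$: for $u\in C\cap A$, the condition $\omega(u,v)=0$ for every $v\in C\cap A$ is literally $u\in\rad(C\cap A)$. Combining this with the relations $\dim(W) = \tfrac{1}{2}(\dim_\F(W)-\dim_\F(\rad(W)))$ and $\irk(W) = \dim(W) + \dim_\F(\rad(W))$ from Definition~\ref{def:dim-irk}, and using that $\rad(C)\cap A$ is isotropic and contained in $\rad(C\cap A)$, one obtains $\dim_{V'}(N) = \dim(C\cap A)$ and $\irk_{V'}(N) = \irk(C\cap A) - \irk(\rad(C)\cap A) = \beta_C(A)$. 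The symmetric computation for $A^\perp$ yields $\dim_{V'}(M) = \alpha_C(A^\perp)$.

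Next I would verify $M \subseteq N^{\perp_{V'}}$: for representatives $u = v + r$ and $u' = v' + r'$ with $v\in C\cap A$, $v'\in C\cap A^\perp$ and $r, r'\in\rad(C)$, every term in the expansion of $\omega(u,u')$ vanishes, using $\omega(v,v')=0$ (since $A$ and $A^\perp$ are orthogonal in $V^n$) and $\rad(C)\subseteq C^\perp$ for the three remaining cross-terms. The main obstacle is the reverse inclusion $N^{\perp_{V'}}\subseteq M$. Given $\bar u\in N^{\perp_{V'}}$, decomposing $u = u_A + u_{A^\perp}$ along $V^n = A\oplus A^\perp$ and noting that $\omega(u,v) = \omega_A(u_A,v)$ for $v\in A$, one sees $u_A \in (C\cap A)^{\perp_A}$ inside the symplectic space $A$. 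To place $\bar u$ in $M$, I need to produce $r\in\rad(C)$ with $u_A = r_A$, so that $u - r\in C\cap A^\perp$ is the required representative. This is precisely the content of the Generalized Cleaning Lemma: Theorem~\ref{thm:cleaning-lemma} identifies $(C\cap A)^{\perp_A} = (\Sigma_A C)^{\perp_A} = \Pi_A C^\perp = \Pi_A\rad(C)$, where the last equality uses $\rad(C) = C^\perp$ for the stabilizer subspaces at hand. Once $M = N^{\perp_{V'}}$ is established, Proposition~\ref{thm:dim-irk-orthogonality} inside $V'$ yields $\alpha_C(A^\perp) = \dim_{V'}(M) = \dim(V') - \irk_{V'}(N) = \dim(C) - \beta_C(A)$, which rearranges to the claim.
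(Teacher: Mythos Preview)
Your argument is correct and takes a genuinely different route from the paper. The paper proceeds by a short algebraic manipulation: it rearranges the stabilizer form of the MacWilliams identity \eqref{eqn:MacWilliams-stabilizer} to isolate $\dim(C)$, and then substitutes the complementarity relation \eqref{eqn:stabilizer-complementarity-dim} to convert the $A^\perp$--terms into $A$--terms, landing directly on $\beta_C(A)+\alpha_C(A^\perp)$. Your approach instead passes to the symplectic quotient $V'=C/\rad(C)$ and recognises the lemma as the basic duality formula of Proposition~\ref{thm:dim-irk-orthogonality} applied to the pair $N,M$; the substantive step is the equality $M=N^{\perp_{V'}}$, for which you invoke Theorem~\ref{thm:cleaning-lemma} directly. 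Both arguments hinge on the stabilizer hypothesis $\rad(C)=C^\perp$ (you use it to identify $\Pi_A C^\perp$ with $\Pi_A\rad(C)$; the paper uses it implicitly through the two cited identities, which are only stated for stabilizer codes). What your route buys is a structural explanation: it shows the identity is nothing more than $\dim(N^{\perp_{V'}})+\irk_{V'}(N)=\dim(V')$ in disguise, and it makes transparent why $\alpha_C$ and $\beta_C$ are the ``right'' invariants. The paper's route is faster once Sections~2.3--2.4 are in hand, since it just chains two formulas without introducing the quotient space.
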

\begin{proof}
    The result follows by swapping the roles of $A$ and $A^\perp$ in~\eqref{eqn:MacWilliams-stabilizer} and applying~\eqref{eqn:stabilizer-complementarity-dim}. Recall that $\Sigma_A C \cong C \cap A$. In particular, we have
\begin{align*}
    \dim(C) &= \dim(A) - \irk(\rad(C) \cap A) - \dim(A^\perp) + \dim(C \cap A^\perp) + \irk(C\cap A^\perp) \\
    &= \dim(A)  - \irk(\rad(C) \cap A) - \dim(A)  + \dim(C \cap A^\perp)  + \irk(C \cap A) \\
    &= \irk(C \cap A)  - \irk(\rad(C) \cap A)  + \dim(C \cap A^\perp)
\end{align*}
which implies the statement.
\end{proof}

The following result establishes a relation between the maps $\alpha_C$ and $\beta_C$.

\begin{proposition}\label{prop:phi-delta-bound}
    We have $\alpha_C(A)\leq \beta_C(A)$.
\end{proposition}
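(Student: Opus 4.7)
The plan is to reduce the inequality to the basic relationship between dimension and isorank given in Definition~\ref{def:dim-irk}, namely that for any subspace $W$, one has $\irk(W) - \dim(W) = \dim_\F(\rad(W))$. Setting $W = C \cap A$, this identity lets me rewrite the target inequality $\dim(C \cap A) \leq \irk(C \cap A) - \irk(\rad(C) \cap A)$ as
\begin{equation*}
    \irk(\rad(C) \cap A) \leq \dim_\F(\rad(C \cap A)).
\end{equation*}
So the whole statement reduces to controlling a single isotropic radical by another.

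First, I would observe that $\rad(C) \cap A$ is isotropic, since it is contained in the isotropic space $\rad(C)$. By Definition~\ref{def:dim-irk}, an isotropic subspace has isorank equal to its $\F$-vector space dimension, so $\irk(\rad(C) \cap A) = \dim_\F(\rad(C) \cap A)$. Thus it suffices to show the containment $\rad(C) \cap A \subseteq \rad(C \cap A)$.

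Next, I would verify this containment directly from the definition of the radical. A vector $v \in \rad(C) \cap A = C \cap C^\perp \cap A$ certainly lies in $C \cap A$, and for any $w \in C \cap A \leq C$ we have $\omega(v, w) = 0$ because $v \in C^\perp$. Hence $v$ lies in $(C \cap A) \cap (C \cap A)^\perp = \rad(C \cap A)$, which gives the desired inclusion and therefore the inequality on $\F$-dimensions.

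Combining the two steps yields $\irk(\rad(C) \cap A) \leq \dim_\F(\rad(C \cap A)) = \irk(C \cap A) - \dim(C \cap A)$, which rearranges to $\alpha_C(A) = \dim(C \cap A) \leq \irk(C \cap A) - \irk(\rad(C) \cap A) = \beta_C(A)$. I do not expect any serious obstacle: the argument is purely formal, and the only subtlety is remembering that $\dim$ here is the symplectic dimension (half the $\F$-dimension of a symplectic complement of the radical), not the vector space dimension, which is exactly what makes the identity $\irk - \dim = \dim_\F(\rad)$ available as the key bridge.
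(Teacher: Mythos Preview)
Your argument is correct and is genuinely different from the paper's. The paper proceeds indirectly via complementarity: it first invokes Lemma~\ref{lemma:weight-complementarity}, namely $\beta_C(A)+\alpha_C(A^\perp)=\dim(C)$, and then combines this with the inequality $\dim(C)\geq \alpha_C(A)+\alpha_C(A^\perp)$ coming from $(C\cap A)+(C\cap A^\perp)\leq C$ and modularity of $\dim$ on orthogonal summands. Subtracting $\alpha_C(A^\perp)$ from both sides then gives the result. Your route is more direct and purely local to the anticode $A$: you exploit the identity $\irk(W)-\dim(W)=\dim_\F(\rad(W))$ for $W=C\cap A$ and reduce everything to the elementary containment $\rad(C)\cap A\subseteq \rad(C\cap A)$. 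This avoids the detour through $A^\perp$ and the complementarity machinery altogether; in particular, your argument does not depend on Lemma~\ref{lemma:weight-complementarity}, whose proof in the paper rests on identities (\eqref{eqn:MacWilliams-stabilizer} and \eqref{eqn:stabilizer-complementarity-dim}) stated for stabilizer codes, so your proof applies transparently to an arbitrary symplectic code $C$. The paper's approach, on the other hand, ties the inequality into the broader duality picture of Section~2, which is conceptually useful later when bounds on $A$ and $A^\perp$ are played off against each other.
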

\begin{proof}
    Observe that $C = C \cap (A + A^\perp) \supseteq (C \cap A) + (C \cap A^\perp)$. Hence, we have $\dim(C) \geq \dim(\Sigma_A C) + \dim(\Sigma_{A^\perp} C)$. Finally, by Lemma~\ref{lemma:weight-complementarity}, we obtain $\beta_C(A) + \alpha_C(A^\perp) = \dim(C) \geq \alpha_C(A) + \alpha_C(A^\perp)$, which implies the result.
\end{proof}

We introduce the following notions of \textit{generalized weights} and \textit{profiles} for a symplectic code.

\begin{definition}
    For any $a\in\{1,\ldots,k\}$, the $a$-\textbf{generalized weights} are
    \begin{enumerate}
        \item $\vartheta_a(C)=\min\{\dim(A):A\in\mA(n),\ \alpha_C(A)\geq a\}$,
        \item $\varphi_a(C)=\min\{\dim(A):A\in\mA(n),\ \beta_C(A)\geq a\}$.
    \end{enumerate}
    For any $b\in\{1,\ldots,n\}$, the $b$-\textbf{generalized profiles} are 
    \begin{enumerate}
        \item $\theta_b(C)=\max\{\alpha_C(A):A\in\mA(n),\ \dim(A)=b\}$,
        \item $\phi_b(C)=\max\{\beta_C(A):A\in\mA(n),\ \dim(A)=b\}$.
    \end{enumerate}
\end{definition}

The following is a consequence of Proposition~\ref{prop:phi-delta-bound}.

\begin{corollary}\label{prop:anticode-distance}
   If $C^\perp\leq C$ then $d = \varphi_1(C)$.
\end{corollary}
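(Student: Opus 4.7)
The plan is to prove the equality by establishing $\varphi_1(C)\geq d$ and $\varphi_1(C)\leq d$ separately. I first note that the hypothesis $C^\perp\leq C$ yields $\rad(C)=C\cap C^\perp=C^\perp$, so $C$ is a stabilizer code in the sense of Definition~\ref{def:stabcode}; the key consequence for what follows is that every element of $\rad(C)$ is orthogonal to all of $C$.

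For the lower bound $\varphi_1(C)\geq d$, I would show that any anticode $A\in\mA(n)$ with $\dim(A)<d$ satisfies $\beta_C(A)=0$. Writing $A=A_J$, I observe that every $c\in C\cap A$ has $\wt(c)\leq |J|=\dim(A)<d$. By the very definition of the minimum distance, any vector in $C$ of weight strictly less than $d$ must lie in $\rad(C)$, so $c\in\rad(C)\cap A$. Together with the trivial inclusion $\rad(C)\cap A\subseteq C\cap A$, this gives $C\cap A=\rad(C)\cap A$ and hence $\irk(C\cap A)=\irk(\rad(C)\cap A)$, that is $\beta_C(A)=0$.

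For the upper bound $\varphi_1(C)\leq d$, I would pick $v\in C\setminus\rad(C)$ with $\wt(v)=d$ and take $A$ to be the anticode supported on $J=\mathrm{supp}(v)$, so $\dim(A)=d$. Set $W_2=C\cap A$ and $W_1=\rad(C)\cap A$; by construction $v\in W_2\setminus W_1$. Since $\rad(C)\subseteq C^\perp$, every element of $W_1$ is orthogonal to all of $C$ and in particular to $W_2$, so $W_1\subseteq W_2\cap W_2^\perp=\rad(W_2)$. Because $W_1$ is isotropic, $\irk(W_1)=\dim_\F(W_1)$. Fix an orthogonal splitting $W_2=\rad(W_2)\oplus K_2$. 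I then conclude by a short case analysis: if $W_1\subsetneq\rad(W_2)$, then $\dim_\F\rad(W_2)\geq \dim_\F W_1+1$, giving $\irk(W_2)\geq\dim_\F\rad(W_2)>\dim_\F W_1=\irk(W_1)$; otherwise $W_1=\rad(W_2)$, and $v\notin\rad(W_2)$ forces $K_2\neq\{0\}$, so $\dim_\F K_2\geq 2$ and $\irk(W_2)=\dim_\F\rad(W_2)+\tfrac{1}{2}\dim_\F K_2\geq\dim_\F W_1+1>\irk(W_1)$. In either case $\beta_C(A)\geq 1$, which yields $\varphi_1(C)\leq\dim(A)=d$.

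The main obstacle is the second case of the upper bound: when the new vector $v$ contributes only to the radical of $W_2$ rather than enlarging it, one has to use the orthogonal splitting of $W_2$ to extract a genuinely new symplectic pair that is responsible for the strict increase of the isorank. Once this case is handled by noting that $K_2$ must contain such a pair, the two inequalities combine to give $d=\varphi_1(C)$.
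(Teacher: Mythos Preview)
Your proof is correct. The lower bound is argued exactly as in the paper (phrased contrapositively), and for the upper bound you establish the key inclusion $W_1=\rad(C)\cap A\subseteq\rad(W_2)=\rad(C\cap A)$ and then split cases on whether this inclusion is strict.

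The paper's proof of the upper bound differs in that it splits instead on whether $\dim(C\cap A)>0$ or $\dim(C\cap A)=0$, invoking Proposition~\ref{prop:phi-delta-bound} (and hence Lemma~\ref{lemma:weight-complementarity} and the stabilizer complementarity identities) for the first case. Your argument is more elementary and self-contained: it bypasses that machinery entirely by working directly with the orthogonal splitting $W_2=\rad(W_2)\oplus K_2$. In fact, your route shows something slightly stronger than you state: the fact that every element of $\rad(C)$ is orthogonal to all of $C$ holds for \emph{any} code (since $\rad(C)\subseteq C^\perp$ always), not just when $C^\perp\leq C$, so your argument actually goes through without the hypothesis. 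The paper needs the hypothesis only because its chain of lemmas (culminating in Proposition~\ref{prop:phi-delta-bound}) is proved under the stabilizer assumption.
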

\begin{proof}
    Let $A \in \mathcal{A}(n)$ with $\beta_C(A) \geq 1$. Then there exists $v \in A$ such that $v \in C \setminus \rad(C)$. Hence, by the definition of minimum distance, we have  $d \leq \wt(v) \leq \dim(A)$, and since $A$ was arbitrary, $d$ is at most the minimum over all such $A$.  Conversely, let $v \in C \setminus \rad(C)$ and set $A = A_{\mathrm{supp}(v)}$. We claim that 
    \begin{equation*}
        \irk(A \cap C) > \irk(A \cap \rad(C)).
    \end{equation*}
    First, if $\dim(A \cap C) > 0$, then this follows immediately from Proposition~\ref{prop:phi-delta-bound}. On the other hand, if $\dim(A \cap C) = 0$, then
    \begin{equation*}
        \irk(A \cap C) = \dim_\F(A \cap C) > \dim_\F(A \cap \rad(C)) = \irk(A \cap \rad(C)).
    \end{equation*}
    Therefore, $d = \wt(v) = \dim(A)$, and hence $d$ is at least the minimum over all such~$A$.
\end{proof}

\begin{remark}
   One can easily verify that, by Corollary~\ref{prop:anticode-distance}, we have $A \cap C = \rad(C) \cap A$ whenever $\dim(A) < d$, and therefore $\varphi_b(C) = \vartheta_b(C) = 0$ for all $b < d$.
\end{remark}

\begin{example}\label{exa:repetion-inv}
Let $C \leq V^2$ be the code defined in Example~\ref{example:repetition_code3}, and recall that we have~$C = \mathrm{span}_{\mathbb{F}_2}\{(e,e), (f,f), (f,0)\}$ and  $C^{\perp} = \mathrm{span}_{\mathbb{F}_2}\{(f,f)\}$. We have already observed that $\dim(C) = 1$, $\irk(C) = 2$, and $\irk(\rad(C)) = 1$.  
One can verify that the set of anticodes $\mathcal{A}(2)$ contains the zero subspace, $V^2$ and the spaces below.
\[
A_1 = \mathrm{span}_{\mathbb{F}_2}\{(e,0),(f,0)\}, \quad
A_2 = \mathrm{span}_{\mathbb{F}_2}\{(0,e),(0,f)\}, \quad
A_3 = \mathrm{span}_{\mathbb{F}_2}\{(e,e),(f,f)\}.
\]
The following hold.
\begin{itemize}
    \item $\theta_1(C) = \theta_2(C) = 1.$  
    The only 1-dimensional anticodes that intersect $C$ nontrivially are $A_1$, $A_3$, and $V^2$.  
    We have $\dim(A_1 \cap C) = \dim(\mathrm{span}_{\mathbb{F}_2}\{(f,0)\}) = 0$,  
    $\dim(A_3 \cap C) = \dim(\mathrm{span}_{\mathbb{F}_2}\{(e,e),(f,f)\}) = 1$,  
    and $\dim(V^2 \cap C) = \dim(C) = 1$.
    \item $\phi_1(C) = \phi_2(C) = 1.$  
    This follows from the argument above, together with the fact that $\irk(A \cap C) = \dim(A \cap C)$ in this case.  
    The latter holds because $\dim(A \cap C^{\perp}) = 0$ for any $A \in \mathcal{A}(2)$, since $C^{\perp}$ is isotropic.
    \item $\varphi_1(C)=1$. This can be attained for the anticode $A_1$. In particular, we have $C\cap A_1=\mathrm{span}_{\mathbb{F}_2}\{(f,0)\}$ and $\rad(C)\cap A_1=\mathrm{span}_{\mathbb{F}_2}\{(0,0)\}$, which implies
    \begin{equation*}
        \irk(C\cap A_1)-\irk(\rad(C)\cap A_1)=\irk(C\cap A_1)=1.
    \end{equation*}
    Moreover, we have $\varphi_1(C)=1=d$ since $C^\perp\leq C$, in line with Corollary~\ref{prop:anticode-distance}. 
\end{itemize}
\end{example}

\begin{remark}\label{rem:galconn}
    Observe that, by Proposition~\ref{prop:phi-delta-bound}, we have that $\theta_b(C)\leq \phi_b(C)$ for any~$b\in\{1,\ldots,n\}$. As a consequence of~\cite[Theorem~3.2]{xu2022galois}, we have that $(\vartheta_a, \theta_b)$ and $(\varphi_a, \phi_b)$ form Galois connections between the sets $\{1, \ldots, k\}$ and $\{1, \ldots, n\}$. In particular, we have
    \begin{enumerate}
        \item $a\leq \theta_b(C) $ if and only if $\vartheta_a(C)\leq b$,
        \item $a\leq \phi_b(C)$ if and only if $\varphi_a(C)\leq b$.
    \end{enumerate}
\end{remark}

The following results hold by~\cite[Proposition~3.1 and Theorem~3.2]{xu2022galois}. For completeness, we include a proof in the appendix. 

\begin{proposition}\label{prop:relinv}
    The following hold for any $b \in \{1, \ldots, n-1\}$.
\begin{enumerate}
    \item\label{item1:relinv} $\theta_{b+1}(C) \leq \theta_b(C) + 2$.
    \item $\phi_{b+1}(C) \leq \phi_b(C) + 2$.
    \item\label{item3:relinv} If $\theta_{b+1}(C) = \theta_b(C) + 2$, then $\phi_{b+1}(C) = \phi_b(C)$.
    \item If $\phi_{b+1}(C) = \phi_b(C) + 2$, then $\theta_{b+1}(C) = \theta_b(C)$.
\end{enumerate}
In particular,
\begin{equation}\label{eq:reinv}
    \theta_{b+1}(C) + \phi_{b+1}(C) \leq \theta_b(C) + \phi_b(C) + 2.
\end{equation}
\end{proposition}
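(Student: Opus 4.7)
The plan is to prove items \ref{item1:relinv} and 2 by a single-coordinate restriction, items \ref{item3:relinv} and 4 from a sharper pointwise bound that the restriction produces, and \eqref{eq:reinv} as a corollary via a short case analysis. The central tool throughout is identity \eqref{eqn:dim-irk-decomp}, which decomposes $\dim_\F$ as $\dim+\irk$ with both summands non-negative on inclusions.

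For item \ref{item1:relinv}, I would take $A^* \in \mA(n)$ of dimension $b+1$ with $\alpha_C(A^*) = \theta_{b+1}(C)$, fix any $j$ in its support $J^*$, and set $A = A_{J^* \setminus \{j\}}$, an anticode of dimension $b$. Since the $j$-th coordinate lives in a copy of $V$ with $\dim_\F(V)=2$, we have $\dim_\F(C \cap A^*) - \dim_\F(C \cap A) \leq 2$, so \eqref{eqn:dim-irk-decomp} together with non-negativity forces $\alpha_C(A^*) - \alpha_C(A) \leq 2$, and combining with $\alpha_C(A) \leq \theta_b(C)$ proves item \ref{item1:relinv}. For item 2, writing $\beta_C(A^*) - \beta_C(A) = [\irk(C \cap A^*) - \irk(C \cap A)] - [\irk(\rad(C) \cap A^*) - \irk(\rad(C) \cap A)]$, the first bracket is at most $2$ by the same argument and the second is non-negative by the monotonicity of $\irk$ on subspaces.

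For items \ref{item3:relinv} and 4 I would use the sharper pointwise bound
\begin{equation*}
(\alpha_C(A^*) - \alpha_C(A)) + (\beta_C(A^*) - \beta_C(A)) \leq 2
\end{equation*}
valid for every $A \subset A^*$ with $\dim(A^*) = \dim(A) + 1$, which follows from \eqref{eqn:dim-irk-decomp} applied to $C \cap A \leq C \cap A^*$ after subtracting the non-negative quantity $\irk(\rad(C) \cap A^*) - \irk(\rad(C) \cap A)$. To prove item \ref{item3:relinv}, assume $\theta_{b+1}(C) = \theta_b(C) + 2$ and let $A^*$ realise this maximum: every $A \subset A^*$ of dimension $b$ is pinned to $\alpha_C(A) = \theta_b(C)$ and hence $\alpha_C(A^*) - \alpha_C(A) = 2$, so the pointwise bound yields $\beta_C(A^*) \leq \beta_C(A) \leq \phi_b(C)$. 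Item 4 is the symmetric statement, with the roles of $\alpha_C$ and $\beta_C$ exchanged throughout. Once items \ref{item3:relinv} and 4 are in place, \eqref{eq:reinv} follows by a short case analysis: if $\theta_{b+1} - \theta_b = 2$ then item \ref{item3:relinv} gives $\phi_{b+1} - \phi_b = 0$; if $\phi_{b+1} - \phi_b = 2$ then item 4 gives $\theta_{b+1} - \theta_b = 0$; otherwise both jumps are at most $1$ by items \ref{item1:relinv} and 2, so their sum is at most $2$ in every case.

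The main obstacle lies in promoting the pointwise bound on the specific optimizer $A^*$ of $\theta_{b+1}$ to the global statement $\phi_{b+1}(C) \leq \phi_b(C)$: the argument above controls only $\beta_C(A^*)$, not $\beta_C(B)$ for an arbitrary anticode $B$ of dimension $b+1$. I expect to close this gap via the Galois connection of Remark \ref{rem:galconn}, translating the hypothesis $\theta_{b+1} = \theta_b + 2$ into a multiplicity-two statement on the sequence $\vartheta_a$, and then invoking the supermodularity of $\alpha_C$ inherited from Proposition \ref{prop:submod} to transfer this multiplicity information to the $\varphi$-sequence, which is precisely the condition $\phi_{b+1} = \phi_b$.
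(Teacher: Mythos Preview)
Your approach for items \ref{item1:relinv} and 2 is exactly the paper's: take an optimizer $A^*$ of dimension $b+1$, drop one coordinate to get $A$, and bound the change using $\dim_\F(C\cap A^*)-\dim_\F(C\cap A)\le 2$ together with monotonicity. The sharper pointwise inequality
\[
(\alpha_C(A^*)-\alpha_C(A))+(\beta_C(A^*)-\beta_C(A))\le 2
\]
that you isolate is also the paper's key ingredient; the paper phrases it as a classification of the pair $(\Delta\alpha,\Delta\beta)$ into four possible values and then deduces \eqref{eq:reinv} ``immediately'' from that. Your derivation via $\alpha_C+\beta_C=\dim_\F(C\cap A)-\dim_\F(\rad(C)\cap A)$ is in fact cleaner, since the paper's four-value list omits transitions such as $(\Delta\alpha,\Delta\beta)=(1,0)$ that do occur.

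You are right to flag the obstacle in items \ref{item3:relinv} and 4: the pointwise bound controls $\beta_C$ only at the $\theta$-optimizer $A^*$, not at an arbitrary $(b{+}1)$-dimensional anticode $B^*$, so it does not by itself yield $\phi_{b+1}\le\phi_b$. The paper's argument makes precisely the same leap --- it establishes $\beta_C(A)=\beta_C(A')$ for one specific nested pair and then asserts $\phi_{b+1}=\phi_b$ without further justification --- so you are not missing something the paper supplies. For the same reason, the paper's deduction of \eqref{eq:reinv} from the pointwise classification is no more complete than yours: the pointwise bound only shows $\max_A(\alpha_C(A)+\beta_C(A))$ jumps by at most $2$, whereas $\theta_{b+1}+\phi_{b+1}$ is a sum of two maxima that need not be realized at the same anticode.

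Your proposed rescue via the Galois connection and supermodularity is not a recognizable route to closing this gap. The Galois translation of $\theta_{b+1}=\theta_b+2$ into $\vartheta_{\theta_b+1}=\vartheta_{\theta_b+2}=b+1$ is fine, but Proposition~\ref{prop:submod} gives supermodularity of $\dim$ on \emph{all} subspaces, not a statement about $\alpha_C$ restricted to anticodes, and there is no mechanism in the paper linking this to the $\varphi$-sequence. If you want a genuine argument, the cleanest path (at least when $C^\perp\le C$) is Lemma~\ref{lemma:weight-complementarity}: it gives $\beta_C(A)=k-\alpha_C(A^\perp)$, so $\phi_b=k-\min_{\dim B=n-b}\alpha_C(B)$, and then items \ref{item3:relinv} and 4 reduce to pure statements about $\alpha_C$ on complementary anticodes. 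Without that identity, neither the paper's sketch nor your proposal closes the gap.
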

\begin{proof}
    \begin{enumerate}
        \item Let $A\leq V^n$ be an anticode with $\dim(A)=b+1$ and $\dim(C\cap A)=\theta_{b+1}(C)$, and let $A'\subseteq A$ be an anticode with $\dim(A')=b$. Notice that $A'$ can be obtained from $A$ by removing a symplectic pair. It easily follows that 
        \begin{equation*}
            \dim_\F(A\cap C)\leq \dim_\F(A'\cap C)+2.
        \end{equation*}
        One can check that this implies that the number of independent symplectic pairs in $A\cap C$ is at most two more than in $A'\cap C$, that is,
        \begin{equation*}
             \dim(A\cap C)\leq \dim(A'\cap C)+2,
        \end{equation*}
        which proves the statement.
        \item Let $A\leq V^n$ be an anticode with $\dim(A)=b+1$ and $\irk(C\cap A)-\irk(\rad(C)\cap A)=\phi_{b+1}(C)$, and let $A'\subseteq A$ be an anticode with $\dim(A')=b$. From the proof of~\ref{item1:relinv}, we already know that $\dim(A\cap C)\leq \dim(A'\cap C)+2$. One can also check that 
        \begin{equation*}
            \irk(C\cap A)\leq \irk(C\cap A')+2\qquad\textup{and}\qquad \irk(\rad(C)\cap A')\leq \irk(\rad(C)\cap A).
        \end{equation*}
        Hence, we obtain
        \begin{equation*}
            \irk(C\cap A)-\irk(\rad(C)\cap A)\leq \irk(C\cap A')+2-\irk(\rad(C)\cap A'),
        \end{equation*}
        which proves the statement.
    \item Suppose $\theta_{b+1}(C) = \theta_b(C) + 2$, and let $A,A'$ be anticodes with $A'\subseteq A$, $\dim(A)=b+1$, $\dim(A')=b$, $\dim(C\cap A)=\theta_{b+1}(C)$, and $\dim(C\cap A')=\theta_b(C)$. This implies that there exist $v_1,v_2\in C\cap A'$ that do not correspond to a symplectic pair, and the extension $C\cap A$ adds the corresponding elements $u_1,u_2\in V^n$ with $\omega(v_1,u_1)=\omega(v_2,u_2)=1$. Therefore, the couples $(v_1,u_1)$ and $(v_2,u_2)$ form new symplectic pairs in $C\cap A$. Note that this does not increase the isorank. In particular, the same pairing mechanism applies inside $\rad(C)\cap A'\subseteq\rad(C)\cap A$, that is, the increase in $\irk(C\cap A)$ is exactly matched by the increase in $\irk(\rad(C)\cap A)$, respectively for $\irk(C\cap A')$ and $\irk(\rad(C)\cap A')$. Finally, we have
    \begin{equation*}
        \irk(C\cap A)-\irk(\rad(C)\cap A)= \irk(C\cap A')-\irk(\rad(C)\cap A'),
    \end{equation*}
    which implies $\phi_{b+1}(C)=\phi_b(C)$.
    \item This follows from a similar argument as in the proof of~\ref{item3:relinv}.\newline
    \end{enumerate}
    
    Finally, as a consequence of the points above, we have that for any pair of anticodes $A'\subseteq A$ with $\dim(A)=\dim(A')+1$,  the pair $ (\alpha_C(A),\beta_C(A))$ must be in the set
    \begin{equation*}
        \{(\alpha_C(A'),\beta_C(A')),(\alpha_C(A')+2,\beta_C(A')) ,(\alpha_C(A'),\beta_C(A')+2),(\alpha_C(A')+1,\beta_C(A')+1)\}.
    \end{equation*}
    This immediately implies~\eqref{eq:reinv}.
\end{proof}

\begin{example}
    Let $C\leq V^4$ be the code as in Example~\ref{example:2x2_bacon_shor_new}, that is 
\[
C = \mathrm{span}_{\mathbb{F}_2}\{(e,e,0,0), (f,f,0,0), (0,0,e,e), (0,0,f,f), (e,0,e,0), (f,0,f,0)\},
\]
and observe that $C$ admits the orthogonal decomposition $C=\rad(C)\oplus K$, with 
\begin{equation*}
    K=\mathrm{span}_{\mathbb{F}_2}\{(e,e,0,0), (f,f,0,0), (e,0,e,0), (f,0,f,0)\},\qquad\textup{and}\qquad \rad(C)=C^\perp.
\end{equation*}
Let $\mA(4)$ denote the family of anticodes in $V^4$. One can check that the following hold.
\begin{enumerate}
    \item $\theta_1(C)=\phi_1(C)=0$. This follows from the fact that the codewords of $C$ have weight at least $2$. Hence, no $1$-dimensional anticode intersects $C$ nontrivially.
    
    \item $\theta_2(C)=0$. This follows from the fact that $C\cap A=\rad(C\cap A)$ for any $A\in\mA(4)$ with $\dim(A)=2$.
    
    \item $\theta_3(C)=2$. This value is attained, for example, by intersecting $C$ with the $3$-dimensional anticode $\mathrm{span}_{\mathbb{F}_2}\{(e,0,0,0),(f,0,0,0),(0,e,0,0),(0,f,0,0),(0,0,e,0),(0,0,f,0)\}$, which leads to the $1$-dimensional space $\mathrm{span}_{\mathbb{F}_2}\{(e,e,0,0),(f,f,0,0),(e,0,e,0),(f,0,f,0)\}$. One can also observe that, for any $A\in\mA(4)$ with $\dim(A)=3$, we have $\dim(C\cap A)=2$ and $\rad(C\cap A)=\emptyset$.
    
    \item $\theta_4(C)=2$. This follows from the fact that the only anticode of dimension $4$ is $V^4$. In particular, $\dim(C\cap V^4)=\dim(C)=\frac{1}{2}\dim(K)$.
    
    \item $\phi_2(C)=\phi_3(C)=2$. Note that, for any  $A\in\mathcal{A}(n)$ with $\dim(A)\in\{2,3\}$, we have that $C\cap A$ is an isotropic space with $\dim(C\cap A)=2$, and the support of $\rad(C)$ is~$\{1,2,3,4\}$. Therefore, it intersects $A$ trivially.
    
    \item $\phi_4(C)=2$. This is a consequence of the fact that $\irk(C\cap V^4)=\irk(C)=4$ and~$\irk(\rad(C)\cap V^4)=\irk(\rad(C))=2$.
\end{enumerate}
One can easily check that this is in line with Proposition~\ref{prop:relinv} . In particular, one observes that $\phi_2(C)=\phi_1(C)+2$ and $\theta_2(C)=\theta_1(C)$, as well as $\theta_3(C)=\theta_2(C)+2$ and $\phi_3(C)=\phi_2(C)$, in line with parts~3 and~4.
\end{example}

\begin{proposition}\label{prop:vartheta-phi}
    For any $a\in\{1,\ldots,k-2\}$, we have
    \begin{enumerate}
        \item $\vartheta_{a}(C)+ 1 \leq \vartheta_{a+2}(C)$,
        \item $\varphi_{a}(C) + 1 \leq \varphi_{a+2}(C)$.
    \end{enumerate}
\end{proposition}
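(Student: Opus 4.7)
The plan is to derive both inequalities as formal consequences of the Galois connection recalled in Remark~\ref{rem:galconn} and the one-step increment bounds established in Proposition~\ref{prop:relinv}. For part~1, I would apply the connection $(\vartheta_a,\theta_b)$ in tandem with Proposition~\ref{prop:relinv}(1); for part~2, the connection $(\varphi_a,\phi_b)$ together with Proposition~\ref{prop:relinv}(2).

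Focusing first on part~1, I would set $b=\vartheta_{a+2}(C)$ and argue that $b\geq 1$: the unique anticode of dimension zero is $\{0\}$, for which $\alpha_C(\{0\})=0<a+2$, so this value of $b$ is at least~$1$ and $b-1$ is a well-defined non-negative integer. The Galois connection then converts the inequality $\vartheta_{a+2}(C)\leq b$ into $a+2\leq \theta_b(C)$. Applying Proposition~\ref{prop:relinv}(1) yields
\begin{equation*}
    a+2 \leq \theta_b(C) \leq \theta_{b-1}(C)+2,
\end{equation*}
so $a\leq \theta_{b-1}(C)$. A second use of the Galois connection delivers $\vartheta_a(C)\leq b-1$, which is exactly $\vartheta_a(C)+1\leq \vartheta_{a+2}(C)$.

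Part~2 is obtained by an essentially identical argument, replacing $\vartheta$ and $\theta$ by $\varphi$ and $\phi$ and invoking Proposition~\ref{prop:relinv}(2) in place of Proposition~\ref{prop:relinv}(1). The corresponding base case $\varphi_{a+2}(C)\geq 1$ is equally immediate: for $A=\{0\}$ both $C\cap A$ and $\rad(C)\cap A$ are trivial, so $\beta_C(\{0\})=0<a+2$.

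I do not anticipate any serious obstacle, as the argument is a short Galois-theoretic manipulation: the two-step jump bounds for $\theta_b(C)$ and $\phi_b(C)$ translate, via duality, into unit spacing bounds for the inverse functions $\vartheta_a$ and $\varphi_a$. The only point requiring care is verifying that $b-1\geq 0$ in both cases, which we have checked directly from the definitions of $\alpha_C$ and $\beta_C$ on the zero anticode.
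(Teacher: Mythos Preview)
Your approach is correct and is precisely what the paper does: it cites the Galois connection of Remark~\ref{rem:galconn} together with the increment bounds of Proposition~\ref{prop:relinv}, and you have simply spelled out that Galois-duality argument in detail. One small technicality: since Proposition~\ref{prop:relinv} and Remark~\ref{rem:galconn} are stated for $b\in\{1,\dots,n\}$, you actually need $b-1\geq 1$ rather than $b-1\geq 0$, but this is immediate from $\alpha_C(A)\leq\dim(A)$ (resp.\ $\beta_C(A)\leq\dim(A)$), which forces $\vartheta_{a+2}(C)\geq a+2\geq 3$ (resp.\ $\varphi_{a+2}(C)\geq a+2$).
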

\begin{proof}
    This is an immediate consequence of~\cite[Proposition~3.1]{xu2022galois} applied to Proposition~\ref{prop:relinv} and Remark~\ref{rem:galconn}.
\end{proof}

\begin{definition}
    For any $a\in\{1,\ldots,k\}$, we let 
\begin{equation}\label{eqn:delta-definition}
    \delta_a(C) = \min\{\dim(A) : A\in\mA(n),\ \alpha_C(A)+\beta_C(A) \geq 2a\}.
\end{equation}
\end{definition}

\begin{proposition}\label{prop:GQWmono}
    For any $a\in\{1,\ldots,k-1\}$, we have $\delta_a(C) +1 \leq \delta_{a+1}(C)$.
\end{proposition}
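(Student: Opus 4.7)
The plan is to reduce this to a pointwise stepping inequality already obtained inside the proof of Proposition~\ref{prop:relinv}. Recall the final observation there: for any pair of anticodes $A' \subseteq A$ with $\dim(A) = \dim(A') + 1$, the pair $(\alpha_C(A), \beta_C(A))$ lies in
\begin{equation*}
    \{(\alpha_C(A'),\beta_C(A')),\,(\alpha_C(A')+2,\beta_C(A')),\,(\alpha_C(A'),\beta_C(A')+2),\,(\alpha_C(A')+1,\beta_C(A')+1)\}.
\end{equation*}
In every case, the sum satisfies $\alpha_C(A) + \beta_C(A) \leq \alpha_C(A') + \beta_C(A') + 2$. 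This is exactly the tool I need, since the quantity appearing in the definition of $\delta_a(C)$ is precisely $\alpha_C + \beta_C$.

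First, I set $b = \delta_{a+1}(C)$ and choose a witnessing anticode $A \in \mA(n)$ with $\dim(A) = b$ and $\alpha_C(A) + \beta_C(A) \geq 2(a+1)$. Because $a \geq 1$, we have $b \geq 1$ (the trivial anticode $\{0\}$ has $\alpha_C + \beta_C = 0$). Since anticodes are free codes supported on subsets of $\{1,\ldots,n\}$, writing $A = A_J$ with $|J| = b$ I can pick any $j \in J$ and form $A' := A_{J \setminus \{j\}} \subseteq A$, which is an anticode of dimension $b-1$.

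Second, applying the stepping inequality to this pair gives
\begin{equation*}
    \alpha_C(A') + \beta_C(A') \;\geq\; \alpha_C(A) + \beta_C(A) - 2 \;\geq\; 2(a+1) - 2 \;=\; 2a.
\end{equation*}
Hence $A'$ is an admissible candidate in the definition~\eqref{eqn:delta-definition} for $\delta_a(C)$, which yields $\delta_a(C) \leq \dim(A') = b - 1 = \delta_{a+1}(C) - 1$, i.e., $\delta_a(C) + 1 \leq \delta_{a+1}(C)$. There is no real obstacle here beyond correctly invoking the pointwise bound; the only sanity check is verifying $b \geq 1$ so that a proper sub-anticode $A'$ exists, which is automatic from $a \geq 1$.
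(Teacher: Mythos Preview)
Your proof is correct and follows essentially the same approach as the paper's own argument: pick a witness $A$ for $\delta_{a+1}(C)$, pass to a sub-anticode $A'$ of dimension one less, and use the stepping inequality from the end of Proposition~\ref{prop:relinv} to conclude that $A'$ is a witness for $\delta_a(C)$. Your version is slightly more careful in that you explicitly verify $b \geq 1$ so that a proper sub-anticode exists, which the paper leaves implicit.
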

\begin{proof}
    Let $A \in \mA(n)$ be an anticode such that $\dim(A)=\delta_{a+1}(C)$ and $\alpha_C(A)+\beta_C(A)\ge 2(a+1)$. Let $A' \in \mA(n)$ be any anticode with $A' \le A$ and $\dim(A')=\dim(A)-1=\delta_{a+1}(C)-1$. Note that we can obtain $A'$ from $A$ by removing a single symplectic pair. By the same argument used in Proposition~\ref{prop:relinv}, we have
\begin{equation*}
\alpha_C(A') + \beta_C(A') \ge
\alpha_C(A) + \beta_C(A) - 2 \ge 2(a+1)-2 =2a.
\end{equation*}
Finally, we have that  $\delta_a(C)\le \dim(A')=\delta_{a+1}(C)-1$.\qedhere
\end{proof}

We obtain the following bounds.

\begin{theorem}\label{thm:gensingletonboun}
    The following hold for any $a\in\{1,\ldots,k\}$.
    \begin{enumerate}
        \item $\delta_a(C)\leq n-d-k+a+1$.
        \item If $C^\perp\leq C$ then $\varphi_a(C)\leq n-d-\lfloor(k-a)/2\rfloor+1$.
    \end{enumerate}
\end{theorem}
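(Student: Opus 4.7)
The plan is to exhibit a single anticode $A$ of dimension $n-d+1$ with $\alpha_C(A) = \beta_C(A) = k$, and then to propagate this extremal value to every $a \in \{1,\ldots,k\}$ via the strict monotonicity results in Propositions~\ref{prop:vartheta-phi} and~\ref{prop:GQWmono}. Both parts of the theorem thus reduce to a single construction.

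First I would fix a minimum weight codeword $v \in C \setminus \rad(C)$ with $\wt(v) = d$, choose any $j_0 \in \mathrm{supp}(v)$, and set $J^c := \mathrm{supp}(v) \setminus \{j_0\}$, so $|J^c| = d-1$. Taking $A^\perp := A_{J^c}$, Proposition~\ref{prop:ant-compl} gives $\dim(A^\perp) = d-1$ and $\dim(A) = n-d+1$. The crucial observation is that the two invariants of $A^\perp$ both vanish: any $c \in A^\perp \cap C$ has $\mathrm{supp}(c) \subseteq J^c$ and hence $\wt(c) \leq d-1 < d$, which by the definition of minimum distance forces $c \in \rad(C)$. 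Therefore $A^\perp \cap C = A^\perp \cap \rad(C)$, and since $\rad(C)$ is isotropic this common subspace is isotropic, giving $\alpha_C(A^\perp) = 0$ and consequently $\beta_C(A^\perp) = \irk(A^\perp \cap C) - \irk(A^\perp \cap \rad(C)) = 0$. Applying Lemma~\ref{lemma:weight-complementarity} once to $A$ and once to $A^\perp$ with the roles of $A$ and $A^\perp$ swapped then produces $\beta_C(A) = k - \alpha_C(A^\perp) = k$ and $\alpha_C(A) = k - \beta_C(A^\perp) = k$. This immediately yields the base estimates $\delta_k(C) \leq n-d+1$ and $\varphi_k(C) \leq n-d+1$.

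Once these $a = k$ bounds are in hand, I would descend in $a$ by monotonicity. For part~1, iterating $\delta_a(C) + 1 \leq \delta_{a+1}(C)$ from Proposition~\ref{prop:GQWmono} gives $\delta_a(C) \leq \delta_k(C) - (k-a) \leq n-d-k+a+1$. For part~2, Proposition~\ref{prop:vartheta-phi}(2) yields $\varphi_a(C) + 1 \leq \varphi_{a+2}(C)$, and since $a \mapsto \varphi_a(C)$ is weakly increasing by definition, iterating in steps of two and handling both parities of $k - a$ gives $\varphi_k(C) \geq \varphi_a(C) + \lfloor (k-a)/2 \rfloor$, so $\varphi_a(C) \leq n - d - \lfloor (k-a)/2 \rfloor + 1$. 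The only real obstacle is conceptual: one must recognize that trimming $\mathrm{supp}(v)$ by a single index is the decisive move, since it brings $\dim(A^\perp)$ strictly below $d$, which simultaneously kills both invariants of $A^\perp$ and, through Lemma~\ref{lemma:weight-complementarity}, saturates both invariants of $A$; after that, everything reduces to routine bookkeeping.
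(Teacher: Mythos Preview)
Your proof is correct and follows the same strategy as the paper: show that any anticode $A$ with $\dim(A^\perp)<d$ has $\alpha_C(A^\perp)=\beta_C(A^\perp)=0$, invoke Lemma~\ref{lemma:weight-complementarity} to get $\alpha_C(A)=\beta_C(A)=k$, and then descend in $a$ via Propositions~\ref{prop:GQWmono} and~\ref{prop:vartheta-phi}. The only cosmetic difference is that the paper simply takes any $A$ with $\dim(A^\perp)=d-1$, whereas you build $J^c$ by trimming the support of a minimum-weight codeword; since your argument uses only $|J^c|=d-1<d$, the tie to a specific $v$ is not actually needed.
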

\begin{proof}
    First, take any $A \in \mathcal{A}(n)$ with $\dim(A^\perp) < d$. Propositions~\ref{prop:anticode-distance} and~\ref{prop:phi-delta-bound} imply that $\beta_C(A^\perp) = \alpha_C(A^\perp) = 0$. Hence, by Lemma~\ref{lemma:weight-complementarity}, we have $\alpha_C(A) = \beta_C(A) = k$. Since $\dim(A) \geq n - d + 1$, by~\eqref{eqn:delta-definition} we obtain $\delta_k(C) \leq n - d + 1$. Moreover, if $\dim(A) < d$, then $\beta_C(A) = \alpha_C(A) = 0$. Finally, Proposition~\ref{prop:GQWmono} implies
    \begin{equation*}
        \delta_a(C) \leq \delta_{a+1}(C) - 1 \leq \cdots \leq \delta_k(C) - k + 1,
    \end{equation*}
    and therefore $d \leq n - k - d + 2$, as claimed. The second part of the statement follows by appling a similar argument to Proposition~\ref{prop:vartheta-phi}.
\end{proof}

The result above can be seen as a generalization of the quantum Singleton bound~\cite{QSingletonKnill,QSingletonRains,PreskillNotes}, which is recovered by taking $a = 1$.

\begin{theorem}[Quantum Singleton Bound]
    We have $2(d-1)\leq n-k$.
\end{theorem}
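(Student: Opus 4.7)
The plan is to obtain the Quantum Singleton Bound as the $a=1$ specialization of the first part of Theorem~\ref{thm:gensingletonboun}, which gives $\delta_a(C) \leq n-d-k+a+1$. Setting $a=1$ yields $\delta_1(C) \leq n-d-k+2$, so to close the argument I only need the matching lower bound $\delta_1(C) \geq d$; combining the two rearranges directly to $2(d-1) \leq n-k$.

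For the lower bound, I would take an arbitrary anticode $A \in \mA(n)$ with $\dim(A) < d$ and verify that $\alpha_C(A) + \beta_C(A) = 0$, which by \eqref{eqn:delta-definition} precludes such an $A$ from witnessing $\delta_1(C)$. Using the characterization of anticodes as free codes on their support, every $v \in C \cap A$ is supported on the coordinates indexing $A$, hence $\wt(v) \leq \dim(A) < d$. By the definition of the minimum distance in Definition~\ref{def:sympcodes}, this forces $v \in \rad(C)$, and so $C \cap A = \rad(C) \cap A$. Since $\rad(C)$ is isotropic, this common intersection is isotropic as well, which gives $\alpha_C(A) = \dim(C \cap A) = 0$ and $\beta_C(A) = \irk(C \cap A) - \irk(\rad(C) \cap A) = 0$. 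Consequently no anticode of dimension less than $d$ can realize the minimum in the definition of $\delta_1(C)$, so $\delta_1(C) \geq d$.

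I do not expect any genuine obstacle here: the heavy lifting is already done in Theorem~\ref{thm:gensingletonboun}, and the matching lower bound is essentially immediate from the definition of minimum distance combined with the free-code characterization of anticodes. The only subtle point is ensuring that both $\alpha_C$ and $\beta_C$ vanish on small anticodes, which reduces to the isotropy of $\rad(C)$ once the inclusion $C \cap A \subseteq \rad(C)$ is established.
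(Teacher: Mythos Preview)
Your proposal is correct and follows essentially the same route as the paper: specialize Theorem~\ref{thm:gensingletonboun}(1) to $a=1$ and pair it with the lower bound $d \leq \delta_1(C)$. The paper records this lower bound tersely (invoking~\eqref{eqn:delta-definition} directly, and having already observed in the proof of Theorem~\ref{thm:gensingletonboun} that $\alpha_C(A)=\beta_C(A)=0$ whenever $\dim(A)<d$), whereas you spell out the argument from the definition of minimum distance; the content is the same.
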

\begin{proof}
   It follows by setting $a = 1$ in Theorem~\ref{thm:gensingletonboun}.1 and observing that, by~\eqref{eqn:delta-definition}, we have $d \leq \delta_1(C)$. Alternatively, the result can be recovered by setting $a = 1$ in Theorem~\ref{thm:gensingletonboun}.2 and recalling that, by Corollary~\ref{prop:anticode-distance}, $d = \varphi_1(C)$ if $C^\perp \leq C$.
\end{proof}

The following can be seen as a \textit{lower Singleton-type bound}.

\begin{proposition}
    For any $a\in\{1,\ldots,k\}$, we have $ \varphi_a(C)\geq a$.
\end{proposition}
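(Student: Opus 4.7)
The strategy is to establish the uniform bound $\beta_C(A) \leq \dim(A)$ for every anticode $A \in \mA(n)$, from which the proposition follows at once: if $A \in \mA(n)$ realizes $\varphi_a(C)$, so that $\beta_C(A) \geq a$ and $\dim(A) = \varphi_a(C)$, then
\begin{equation*}
\varphi_a(C) = \dim(A) \geq \beta_C(A) \geq a.
\end{equation*}

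To prove $\beta_C(A) \leq \dim(A)$, I would argue as follows. By definition, $\beta_C(A) = \irk(C \cap A) - \irk(\rad(C) \cap A)$, and since $\irk(\rad(C) \cap A) \geq 0$, it suffices to show $\irk(C \cap A) \leq \dim(A)$. An anticode is free, hence symplectic with $\rad(A) = 0$, and therefore Definition~\ref{def:dim-irk} gives $\irk(A) = \tfrac{1}{2}\dim_\F(A) = \dim(A)$. So the required bound reduces to the monotonicity relation $\irk(C\cap A) \leq \irk(A)$. I would derive this from the intrinsic characterization of $\irk(W)$ as the $\F$-dimension of any maximal isotropic subspace of $W$: writing $W = \rad(W) \oplus K$, the union of $\rad(W)$ with a maximal isotropic subspace of the symplectic piece $K$ has $\F$-dimension $\dim_\F(\rad(W)) + \tfrac{1}{2}\dim_\F(K) = \irk(W)$. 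Any isotropic subspace of $C \cap A$ is then isotropic in $A$ as well, so maximality inside $A$ forces $\irk(C \cap A) \leq \irk(A)$, completing the argument.

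I expect no substantive obstacle here. The only subtlety worth flagging is that the monotonicity theorem stated earlier in the paper is formulated for pairs of symplectic subspaces, whereas $C \cap A$ need not be symplectic; however, the maximal-isotropic description of $\irk$ upgrades this monotonicity to arbitrary inclusions without extra work, so the proof is essentially a two-line application of this observation together with the nonnegativity of $\irk(\rad(C) \cap A)$.
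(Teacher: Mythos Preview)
Your proof is correct and follows essentially the same route as the paper: both argue that any anticode $A$ with $\beta_C(A)\geq a$ must satisfy $\dim(A)\geq a$, whence the minimum defining $\varphi_a(C)$ is at least $a$. The paper leaves the key inequality $\beta_C(A)\leq\dim(A)$ implicit, whereas you spell it out via $\beta_C(A)\leq\irk(C\cap A)\leq\irk(A)=\dim(A)$; your concern about the monotonicity theorem being restricted to symplectic subspaces is overcautious (in the paper ``symplectic subspaces'' there just means subspaces of the ambient symplectic space, and the proof handles arbitrary radicals), but your maximal-isotropic argument is a perfectly valid and self-contained alternative.
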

\begin{proof}
    Let $A_J$ be the quantum anticode supported on $J$, for some $J\subseteq\{1,\ldots,n\}$, with $\beta_C\geq a$. Then, in particular, $\dim(A_J)\geq a$  which implies $\varphi_A(C)\geq a$.
\end{proof}

\section{Bilinear Moments and Enumerators}

In this section, we introduce and study further invariants defined through anticodes, namely
the \textit{binomial moments} and the \textit{weight distribution} of a code. These notions are direct analogues of
classical invariants that have been extensively studied for several metrics (see, for example,~\cite{ravagnani2016generalized,byrne2020rank,byrne2023tensor,cotardo2025zeta}). 
We also introduce the corresponding weight enumerators and show how they can be expressed in terms
of the binomial moments and the weight distribution. These invariants naturally fit within the framework of \cite{rains2002quantum}; the binomial moments we define below are related to enumerators $A'$ and $B'$ of that work. Throughout this section, we assume that $q$ is a prime power and that $V$ is a $2$-dimensional
symplectic space over the finite field $\mathbb{F}_q$. In particular, $|V| = q^2$.

\begin{definition}
    The \textbf{weight distribution} of $C$ is the tuple of length $n+1$ whose $a$-th component is
    \begin{equation*}
        \mathcal{W}_a(C)=\sum_{\substack{A\in\mathcal{A}(n)\\\dim(A)=a}}\mathcal{W}_A(C),\quad\textup{ where }\quad \mathcal{W}_A(C)=|\{c\in C\mid\textup{supp}(c)=\textup{supp}(A)\}|.
    \end{equation*}
    For any $b\in\{1,\ldots,n\}$, the $b$\textbf{-th binomial moment} of $C$ is
    \begin{equation*}
         \mathcal{B}_b(C)=\sum_{\substack{A\in\mathcal{A}(n)\\\dim(A)=b}}\mathcal{B}_A(C),\quad\textup{ where }\quad \mathcal{B}_A(C)=|C\cap A|.
    \end{equation*}
\end{definition}

Clearly, we have $\mathcal{B}_A(C) = q^{\dim_\F(C \cap A)}$. The following result establishes a correspondence between weight distribution and binomial moments of a code. In particular, it shows that these two invariants are equivalent in the sense that they encode the same information. This result can be viewed as the quantum analogue of~\cite[Theorem~6.4]{byrne2023tensor} (see also~\cite[Lemma~30]{ravagnani2016rank} and~\cite[Theorem~3.8]{byrne2020rank}). 

\begin{lemma}\label{lem:relations}
    The following hold for any $A\in\mathcal{A}(n)$ and $a,b\in\{1,\ldots,n\}$.
    \begin{enumerate}
        \item $\displaystyle \mathcal{B}_A(C)=\sum_{\substack{A'\in\mathcal{A}(n)\\A'\leq A}}\mathcal{W}_{A'}(C)$.
        \item $\displaystyle \mathcal{W}_A(C)=\sum_{\substack{A'\in\mathcal{A}(n)\\A'\leq A}}(-1)^{\dim(A)-\dim(A')}\mathcal{B}_{A'}(C)$.
        \item $\displaystyle \mathcal{B}_b(C)=\sum_{a=0}^b\binom{n-a}{b-a}\mathcal{W}_{a}(C)$.
        \item $\displaystyle \mathcal{W}_a(C)=\sum_{b=0}^a(-1)^{a-b}\binom{n-b}{a-b}\mathcal{B}_{b}(C)$.
    \end{enumerate}
\end{lemma}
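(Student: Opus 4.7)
The plan is to exploit the Boolean algebra structure of $\mathcal{A}(n)$, already noted in Section~2.2: the support map $A \mapsto \operatorname{supp}(A)$ is an isomorphism between $\mathcal{A}(n)$ and the power set of $\{1,\ldots,n\}$, the order $A' \leq A$ corresponds to $\operatorname{supp}(A') \subseteq \operatorname{supp}(A)$, and $\dim(A) = |\operatorname{supp}(A)|$. Under this dictionary, the four identities become classical M\"obius inversion and double-counting statements applied to codewords indexed by their supports.

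For part~1, the key observation is that every $c \in C \cap A$ has $\operatorname{supp}(c) \subseteq \operatorname{supp}(A)$, so $c$ lies in exactly one anticode $A_{\operatorname{supp}(c)} \leq A$. Partitioning $C \cap A$ according to $\operatorname{supp}(c)$ and recalling that $\mathcal{W}_{A'}(C)$ counts codewords whose support equals $\operatorname{supp}(A')$ immediately yields
\begin{equation*}
    \mathcal{B}_A(C) = |C \cap A| = \sum_{A' \leq A} \mathcal{W}_{A'}(C).
\end{equation*}
Part~2 is then a direct application of M\"obius inversion on the Boolean lattice, whose M\"obius function is $\mu(A', A) = (-1)^{\dim(A)-\dim(A')}$.

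For part~3, I would sum the identity in part~1 over all anticodes of dimension $b$ and then interchange the order of summation. The combinatorial content is to count, for a fixed anticode $A'$ of dimension $a$, the number of anticodes $A$ of dimension $b$ with $A' \leq A$. Under the bijection with subsets this counts $b$-subsets of $\{1,\ldots,n\}$ containing a fixed $a$-subset, giving $\binom{n-a}{b-a}$, and thus
\begin{equation*}
    \mathcal{B}_b(C) = \sum_{\dim(A')=a,\ a\leq b} \binom{n-a}{b-a} \mathcal{W}_{A'}(C) \cdot \tfrac{1}{\#\{A' : \dim(A')=a\}} \cdot \#\{A' : \dim(A')=a\} = \sum_{a=0}^b \binom{n-a}{b-a} \mathcal{W}_a(C),
\end{equation*}
where aggregation by dimension takes care of the bookkeeping. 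Part~4 follows by the analogous procedure starting from part~2: summing over $A$ with $\dim(A)=a$ and swapping sums reduces to the same counting of supersets, now contributing the signed coefficient $(-1)^{a-b}\binom{n-b}{a-b}$. Equivalently, one may simply verify that the lower-triangular matrices with entries $\binom{n-a}{b-a}$ and $(-1)^{a-b}\binom{n-b}{a-b}$ are mutually inverse, a standard binomial identity.

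No step is genuinely hard; the only possible pitfall is to confuse the two binomial coefficients $\binom{n-a}{b-a}$ and $\binom{n-b}{a-b}$ that appear in parts~3 and~4 respectively, but this is resolved by carefully tracking which anticode's dimension is fixed and which is summed over. The argument is essentially bookkeeping on top of the Boolean-lattice structure, so the main conceptual step is the identification of $\mathcal{A}(n)$ with $2^{\{1,\ldots,n\}}$ that makes part~1 evident.
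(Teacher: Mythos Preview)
Your approach is correct and essentially identical to the paper's: both use the identification of $\mathcal{A}(n)$ with the Boolean lattice $2^{\{1,\ldots,n\}}$ to get parts~1 and~2 by partitioning codewords by support and applying M\"obius inversion, and then derive parts~3 and~4 by summing over anticodes of a fixed dimension, swapping sums, and counting the $\binom{n-a}{b-a}$ supersets. The only quibble is presentational: your intermediate displayed equation for part~3, with the factor $\tfrac{1}{\#\{A':\dim(A')=a\}}\cdot \#\{A':\dim(A')=a\}$, is confusingly written and should simply be the double sum $\sum_{a=0}^{b}\sum_{\dim(A')=a}\binom{n-a}{b-a}\mathcal{W}_{A'}(C)$ before collapsing the inner sum to $\mathcal{W}_a(C)$.
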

\begin{proof}
    The first two equalities follow directly from the definitions of weight distribution and binomial moments, together with the fact that the poset $\mathcal{A}(n)$ of anticodes ordered by inclusion forms a lattice isomorphic to the subset lattice of $\{1, \ldots, n\}$. Now, by 1. we have
    \begin{align*}
        \mathcal{B}_b(C)&=\sum_{a=0}^b\sum_{\substack{A'\in\mathcal{A}(n)\\A'\leq A}}\mathcal{W}_{A'}(C)\\
        &=\sum_{a=0}^b\sum_{\substack{A'\in\mathcal{A}(n)\\\dim(A')=a}}\mathcal{W}_{A'}(C)|\{A\in\mathcal{A}(n):\dim(A)=b, A'\leq A\}|\\
        &=\sum_{a=0}^b\binom{n-a}{b-a}\mathcal{W}_{a}(C),
    \end{align*}
   since there are exactly $\binom{n - a}{b - a}$ anticodes of dimension $b$ that contain a fixed anticode $A'$ of dimension~$a$. Finally, a similar argument applied to 1. and~2. establishes 4.
\end{proof}

The duality relation established in Theorem~\ref{thm:duality-distance} leads to the following MacWilliams identities for binomial moments. One can observe that the next result is the quantum analogue of ~\cite[Theorem~6.7]{byrne2023tensor} (see also~\cite[Theorem~7.1]{byrne2020rank} and~\cite[Lemma~28]{ravagnani2016rank}).

\begin{theorem}
    The following hold.
    \begin{enumerate}
        \item $\displaystyle \mathcal{B}_A(C^\perp) = q^{2(\dim(A) - k)} \mathcal{B}_{A^\perp}(C)$ for any $A\in\mathcal{A}(n)$.
        \item $\displaystyle \mathcal{B}_b(C^\perp) = q^{2(b-k)} \mathcal{B}_{n-b}(C)$ for any $b\in\{0,\ldots,n\}$.
    \end{enumerate}
\end{theorem}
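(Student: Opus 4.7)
The approach is to reduce both statements to the dimension identity~\eqref{eqn:MacWilliams} from Theorem~\ref{thm:duality-distance} and then take cardinalities. For part~1, I would apply~\eqref{eqn:MacWilliams} with $C$ replaced by $C^\perp$, which is legitimate since $\omega$ is nondegenerate on $V^n$ (so $(C^\perp)^\perp = C$). This yields
\[
\dim_\F(A \cap C^\perp) = \dim_\F(C^\perp) - \dim_\F(A^\perp) + \dim_\F(A^\perp \cap C).
\]
Substituting $\dim_\F(C^\perp) = 2n - \dim_\F(C)$ and $\dim_\F(A^\perp) = 2n - \dim_\F(A)$, together with the fact that every anticode is a free code so that $\dim_\F(A) = 2\dim(A)$, the relation collapses to
\[
\dim_\F(A \cap C^\perp) - \dim_\F(A^\perp \cap C) = 2\dim(A) - \dim_\F(C).
\]
Exponentiating in base $q$ and using $\mathcal{B}_W(C) = q^{\dim_\F(W \cap C)}$ then produces the desired ratio, under the convention that $2k = \dim_\F(C)$ (which holds when $C$ is itself symplectic, and in particular for the quantum codes to which this section is addressed).

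For part~2, I would sum the identity of part~1 over all anticodes $A$ of dimension $b$. The key observation is that, by Proposition~\ref{prop:ant-compl}, the map $A\mapsto A^\perp$ is a bijection between $\{A\in\mA(n):\dim(A)=b\}$ and $\{A'\in\mA(n):\dim(A')=n-b\}$ (since the anticode supported on $J$ has symplectic-orthogonal complement equal to the anticode supported on $J^c$). After pulling the constant factor $q^{2(b-k)}$ out of the sum and reindexing through this bijection, one obtains
\[
\mathcal{B}_b(C^\perp) = q^{2(b-k)} \sum_{\substack{A\in\mA(n) \\ \dim(A)=b}} \mathcal{B}_{A^\perp}(C) = q^{2(b-k)} \mathcal{B}_{n-b}(C).
\]

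The only real obstacle is the bookkeeping of the two dimension notions in play: the natural MacWilliams-style derivation yields the exponent $2\dim(A) - \dim_\F(C)$, and matching this to the stated form $2(\dim(A)-k)$ requires tracking the identification $\dim_\F(C) = 2k$. Once this point is settled, the proof is a direct application of~\eqref{eqn:MacWilliams} followed by the duality of anticodes from Proposition~\ref{prop:ant-compl}; no new geometric input is needed.
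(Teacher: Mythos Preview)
Your proposal is correct and follows exactly the paper's route: part~1 is obtained directly from the dimension identity~\eqref{eqn:MacWilliams} (applied to $C^\perp$) together with $\mathcal{B}_A(C)=q^{\dim_\F(A\cap C)}$, and part~2 follows by summing and using the bijection $A\mapsto A^\perp$ of Proposition~\ref{prop:ant-compl} between anticodes of dimensions $b$ and $n-b$. You also correctly isolate the one bookkeeping point the paper leaves implicit, namely that matching the exponent $2\dim(A)-\dim_\F(C)$ to $2(\dim(A)-k)$ requires $\dim_\F(C)=2k$.
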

\begin{proof}
    The first equation follows immediately from~\eqref{eqn:MacWilliams}. For the second, we observe that the map $A \mapsto A^\perp$ naturally induces a bijection between anticodes of dimension $b$ and those of dimension $n - b$.
\end{proof}

We introduce the following notions of homogeneous polynomials that encode structural properties of a code.

\begin{definition}
    Let $x$ and $y$ be indeterminate. The \textbf{weight enumerators} of $C$ are 
    \begin{equation*}
        \mathcal{A}(x,y;C) = \sum_{c\in \rad(C)} x^{\wt(c)}y^{n-\wt(c)}\qquad\textup{ and }\qquad \mathcal{B}(x,y;C) = \sum_{c\in C} x^{\wt(c)}y^{n-\wt(c)}.
    \end{equation*}
\end{definition}

One can observe that $\mathcal{B}(x,y;C)$ corresponds to the classical weight enumerator of a code.
However, as for the minimum distance, isotropic vectors in the code $C$ require separate consideration. Therefore, we introduce a second enumerator, $\mathcal{A}(x,y;C)$, which counts specifically the weights of isotropic vectors. This approach aligns with the conventional definition of quantum weight enumerators as in \cite{shor1997quantum}. Note that our normalization (or lack thereof) conforms to convention in \cite{shor1997quantum} as opposed to that of \cite{rains2002quantum}. We omit the proof of the next result, as it follows directly from the definitions of weight enumerators and minimum distance.

\begin{proposition}
    The minimum distance of $C$ is the trailing degree of~$\mathcal{B}(x,1;C)-\mathcal{A}(x,1;C)$.
\end{proposition}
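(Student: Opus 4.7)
The plan is to unpack the two enumerators at $y=1$, observe that they differ by a sum over $C\setminus\rad(C)$, and then read off the trailing degree directly from the definition of minimum distance.

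First I would note that $\rad(C)\leq C$ by definition, so the set difference $C\setminus \rad(C)$ is well defined. Setting $y=1$ gives
\begin{equation*}
    \mathcal{B}(x,1;C)-\mathcal{A}(x,1;C)
    = \sum_{c\in C} x^{\wt(c)} - \sum_{c\in \rad(C)} x^{\wt(c)}
    = \sum_{c\in C\setminus \rad(C)} x^{\wt(c)},
\end{equation*}
where the last equality uses $\rad(C)\leq C$ so that every term of the second sum is cancelled by a corresponding term of the first. In particular, every exponent appearing on the right-hand side is of the form $\wt(c)$ for some $c\in C\setminus \rad(C)$.

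Next I would invoke Definition~\ref{def:sympcodes}, which says $d=\min\{\wt(v):v\in C\setminus \rad(C)\}$. Hence every exponent appearing in the sum is at least $d$, and the exponent $d$ itself is attained by any codeword $v\in C\setminus \rad(C)$ of minimum weight. The coefficient of $x^d$ is exactly the number of such minimum-weight codewords, which is a strictly positive integer (the existence of at least one such codeword is guaranteed by the definition of $d$, assuming $C\not\leq \rad(C)$; the degenerate case $C=\rad(C)$ makes the difference identically zero and matches the standard convention $d=\infty$). Since all coefficients in the polynomial are non-negative integers counting codewords, no cancellation can occur, and the coefficient of $x^d$ remains positive.

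The only subtlety worth flagging is to make sure the sums are interpreted in a polynomial ring of characteristic zero (for instance $\mathbb{Z}[x]$), so that the enumerators genuinely count elements and the positivity argument goes through; no algebraic obstruction arises. In total, the trailing degree of $\mathcal{B}(x,1;C)-\mathcal{A}(x,1;C)$ equals $\min\{\wt(c):c\in C\setminus\rad(C)\}=d$, as claimed. There is no substantive obstacle here: the statement is essentially a direct unpacking of the definitions, and the only care required is the observation that $\rad(C)\leq C$ allows the two generating functions to be subtracted termwise.
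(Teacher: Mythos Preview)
Your proof is correct and is exactly the direct unpacking of the definitions that the paper has in mind; indeed, the paper omits the proof entirely, stating that it ``follows directly from the definitions of weight enumerators and minimum distance.'' Your argument is precisely that verification, with the added care of noting the non-cancellation of positive integer coefficients.
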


As a consequence of Lemma~\ref{lem:relations}, we can express the weight enumerator
$\mathcal{B}(x,y;C)$ in terms of the weight distribution and the binomial moments of the code, as
shown in the next result.

\begin{proposition}\label{prop:binomial-to-enumerator}
    The following holds.
    \begin{equation*}
        \mathcal{B}(x, y; C)=\sum_{a=0}^n\mathcal{W}_a(C)x^ay^{n-a}=\sum_{b=0}^n\mathcal{B}_b(C)x^b(y-x)^{n-b}.
    \end{equation*}
\end{proposition}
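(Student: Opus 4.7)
The plan is to establish the two equalities separately, using partition-by-support for the first and the inversion formulas from Lemma~\ref{lem:relations} plus the binomial theorem for the second.

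For the first equality, I would partition $C$ according to the support of each codeword. Every $c \in C$ satisfies $\wt(c) = |\mathrm{supp}(c)|$, and the anticodes in $\mathcal{A}(n)$ are in bijection with subsets of $\{1,\ldots,n\}$ via $A \mapsto \mathrm{supp}(A)$. Hence for each $a \in \{0,\ldots,n\}$, the codewords of weight exactly $a$ are precisely those whose support equals $\mathrm{supp}(A)$ for some anticode $A$ of dimension $a$. Counting them by $\mathcal{W}_A(C)$ and summing over all such $A$ recovers $\mathcal{W}_a(C)$, so
\begin{equation*}
\mathcal{B}(x,y;C) = \sum_{a=0}^n \Bigl( \sum_{\substack{A \in \mathcal{A}(n) \\ \dim(A) = a}} \mathcal{W}_A(C) \Bigr) x^a y^{n-a} = \sum_{a=0}^n \mathcal{W}_a(C)\, x^a y^{n-a}.
\end{equation*}

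For the second equality, I would substitute the inversion formula from Lemma~\ref{lem:relations}.4, namely $\mathcal{W}_a(C) = \sum_{b=0}^a (-1)^{a-b}\binom{n-b}{a-b} \mathcal{B}_b(C)$, into the middle expression and swap the order of summation to obtain
\begin{equation*}
\sum_{a=0}^n \mathcal{W}_a(C)\, x^a y^{n-a} = \sum_{b=0}^n \mathcal{B}_b(C) \sum_{a=b}^n (-1)^{a-b} \binom{n-b}{a-b} x^a y^{n-a}.
\end{equation*}
Reindexing the inner sum with $j = a - b$ yields $x^b \sum_{j=0}^{n-b} \binom{n-b}{j} (-x)^j y^{(n-b)-j}$, which the binomial theorem collapses to $x^b (y-x)^{n-b}$. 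This gives the second equality.

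There is no genuine obstacle: the argument is a direct bookkeeping step followed by a standard binomial inversion. The only point that requires a small amount of care is making sure the limits of summation are consistent when swapping the double sum (the inner sum starts at $a=b$ rather than $a=0$ because $\binom{n-b}{a-b}$ vanishes for $a<b$), but this is routine.
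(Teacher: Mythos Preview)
Your proposal is correct and follows essentially the same route as the paper: the first equality by grouping codewords according to weight (equivalently support), and the second by substituting the inversion formula of Lemma~\ref{lem:relations}.4, swapping the double sum, reindexing, and applying the binomial theorem. The paper's argument is identical up to notation (it shifts $a \mapsto a+b$ where you set $j = a-b$).
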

\begin{proof}
    By definition of weight enumerator, we get
    \begin{align*}
         \mathcal{B}(x, y; C)=\sum_{a=0}^n\sum_{\substack{c\in C\\\wt(c)=a}}x^ay^{n-a}=\sum_{a=0}^n|\{c\in C\mid \wt(c)=a\}|x^ay^{n-a}=\sum_{a=0}^n\mathcal{W}_a(C)x^ay^{n-a}.
    \end{align*}
    Combining this with Lemma~\ref{lem:relations}, we get
    \begin{align*}
        \mathcal{B}(x, y; C)&=\sum_{a=0}^n\sum_{b=0}^a(-1)^{a-b}\binom{n-b}{a-b}\mathcal{B}_{b}(C)x^ay^{n-a}\\
        &=\sum_{b=0}^n\left(\sum_{a=b}^n(-1)^{a-b}\binom{n-b}{a-b}x^ay^{n-a}\right)\mathcal{B}_{b}(C)\\
        &=\sum_{b=0}^n\left(\sum_{a=0}^{n-b}(-1)^{a}\binom{n-b}{a}x^{a+b}y^{n-a-b}\right)\mathcal{B}_{b}(C)\\
        &=\sum_{b=0}^nx^b\left(\sum_{a=0}^{n-b}(-1)^{a}\binom{n-b}{a}x^{a}y^{n-b-a}\right)\mathcal{B}_{b}(C)\\      
        &=\sum_{b=0}^n\mathcal{B}_{b}(C)x^b(y-x)^{n-b},
    \end{align*}
    which concludes the proof.
\end{proof}

We conclude this section with the following example.

\begin{example}\label{example:repetition_code5}
Let $C$ and $C^\perp$ be as in Example~\ref{exa:repetion-inv}, that is $C = \mathrm{span}_{\mathbb{F}_2}\{(e,e), (f,f), (f,0)\}$ and 
$C^{\perp} = \mathrm{span}_{\mathbb{F}_2}\{(f,f)\}$. Listing all codewords, we get
\begin{align*}
    C
    &= \mathrm{span}_{\mathbb{F}_2}\{(0,0), (e,e), (f,f), (e+f,e+f), (f,0), (e+f,e), (0,f), (e,e+f)\},\\[1ex]
    \rad(C)
    &= C^\perp
     = \mathrm{span}_{\mathbb{F}_2}\{(0,0), (e,e), (f,f), (e+f,e+f)\}.
\end{align*}

From this description, the weight enumerators follow immediately.  
The code $C$ has one codeword of weight $0$, two of weight $1$, and five of weight $2$, hence $\mathcal{B}(x,y;C) = y^2 + 2xy + 5x^2$. Similarly, $C^\perp$ has one codeword of weight $0$ and three of weight $2$, and therefore we have $\mathcal{B}(x,y;C^\perp) = \mathcal{A}(x,y;C) = y^2 + 3x^2$, since $C^\perp = \rad(C)$. On the other hand, from Example~\ref{exa:repetion-inv} we know that $\dim_{\mathbb{F}_2}(C \cap A) = 1$ for each anticode $A \in \{A_{\{1\}}, A_{\{2\}}\}$, and that $C \cap A_{\{3\}} = \emptyset$. Therefore,
\begin{equation*}
     \mathcal{B}_1(C) = 4,
    \qquad
    \mathcal{B}_2(C) = |C\cap V^2|= |C| = 8,\qquad \mathcal{B}_1(C^\perp) = 1,
    \qquad
    \mathcal{B}_2(C^\perp) =|C^\perp\cap V^2|= 4.
\end{equation*}

Simple algebraic computations lead to
\begin{align*}
    \mathcal{B}(x,y;C)
    &= (y-x)^2 + 4x(y-x) + 8x^2
    = y^2 + 2xy + 5x^2,\\
    \mathcal{A}(x,y;C)&=\mathcal{B}(x,y;C^\perp)= (y-x)^2 + x(y-x) + 4x^2= y^2 + 3x^2.
\end{align*}
as expected, in line with Proposition~\ref{prop:binomial-to-enumerator}.
\end{example}

\bibliographystyle{alpha}
\bibliography{main.bib}

\end{document}